\documentclass[12pt]{article}
\usepackage[ukrainian,english]{babel}
\usepackage[cp1251]{inputenc}
\usepackage{doc}
\usepackage{amsmath}
\usepackage{amsfonts}   
\usepackage{cite} 
\usepackage{fullpage}                           
\usepackage[lmargin=0.6in,rmargin=0.6in,tmargin=0.6in,headsep=.2in]{geometry}
\usepackage{graphicx}
\graphicspath{{Fig/}}
\usepackage{forloop}
\usepackage{etoolbox}
\usepackage{calc}
\usepackage{wrapfig,lipsum,booktabs} 
\usepackage{xtab} 
\usepackage[dvipsnames]{xcolor}
\usepackage[normalem]{ulem}
\usepackage{sectsty}
\usepackage{amsmath,amsfonts,amssymb,amsthm,epsfig,epstopdf,url,array}
 \usepackage[retainorgcmds]{IEEEtrantools}
 \usepackage{makeidx,epsfig,lscape}
 \usepackage{xcolor,pict2e}
\usepackage{upgreek}
\usepackage{pgfplots}
\usepackage{float} 
   
\makeatother
\usepackage{makeidx}
\makeindex
\makeatletter
\renewcommand{\@biblabel}[1]{#1.}

\newtheorem{note1}{Note}          
\newtheorem{te}{Theorem}           
\newtheorem{leme}{Lemma}         
\newtheorem{de}{Definition}          
\newtheorem{proposition}{Proposition} 

\setlength{\textheight}{234 mm}
\setlength{\textwidth} {150 mm}
\setlength{\topmargin} {-25 mm}
\oddsidemargin  .1in
\evensidemargin -.25in

\begin{document}

\centerline{ \bf\large{ Mathematical Model of International Trade and Global Economy.}}

\vskip 5mm
{\bf \centerline {\bf N.S. Gonchar, O.P. Dovzhyk, A. S. Zhokhin, W.H. Kozyrsky, A. P. Makhort \footnote{This work was partially supported   by  the Program of Fundamental Research of the Department of Physics and Astronomy of  the National Academy of Sciences of Ukraine  "Mathematical models of non equilibrium processes in open systems" N 0120U100857.}}     }

\vskip 5mm
\centerline{\bf {Bogolyubov Institute for Theoretical Physics of NAS of Ukraine.}}
\vskip 2mm

\begin{abstract}
A new method for the study of international trade is proposed. It is based on the theory of economic equilibrium. Algorithms for the constructing of  equilibrium states are developed. Within the framework of the theory of economic equilibrium, a mechanism to explain the   recession phenomenon is proposed  in which the exchange mechanism breaks down. For this purpose, a recession level parameter has been introduced. The nature of the exchange of goods between countries of the  G20 is investigated. It turned out that in each studied year, trade between the G20 countries was not in a state of equilibrium. The state of equilibrium in trade between the G20 countries turned out to be highly degenerate and far from ideal equilibrium.
The recession level parameter is calculated for each equilibrium state. It showed that the international currency was strengthening during 2016 -2019.
\end{abstract}

\section{Introduction.}
This article is devoted to the study of international trade and its impact on the economies of the countries of the world. In the first section, the basic concepts are introduced, the model is formulated and the problem statement is made. The concept of the ideal state of equilibrium in world trade has been introduced.
As a rule, the world trade is not being in an ideal state of equilibrium, it deviates from it.   This deviation can be due to both the state of each of the economies and the tariff restrictions that each of the countries implements to protect against the flow of goods to its country. The result of the so-called chaotic tariff war could be a shift in the world economy towards a state of recession. This work is devoted to the study of such a deviation from the ideal state of equilibrium. Section 2 introduces a mathematical model of international trade and formulates the problem.
Section 3 is devoted to mathematical algorithms for solving this problem. For this purpose, the concept of a polyhedral cone is introduced and an algorithm for the vector belonging to a polyhedral cone is proposed. A theorem is proved in which all positive solutions of a system of equations are described, the right-hand side of which is a vector belonging to the interior of a polyhedral cone and such that belongs to the interior of a cone consisting of a subset of linearly independent vectors whose number is equal to the dimension of the cone. To describe all positive solutions of the system of equations with the vector of the right-hand side belonging to the interior of the cone in the absence of the above restrictions, the concept of a generating set of vectors is introduced and their existence is proved. For a vector belonging to the interior of a polyhedral cone, a representation in terms of generating vectors is obtained. On the basis of this, all positive solutions of the system of equations with an arbitrary right-hand side belonging to the interior of the polyhedral cone are described. In Definitions \ref{VickTin9} and \ref{1VickTin9}  the concept of strict consistency of the supply structure with the demand structure is introduced. 
 Lemmas \ref{wickkteen1}, \ref{10wickkteen1} give sufficient conditions  for the representation of the supply matrix in terms of the demand matrix. 
Under the strict consistency of the supply structure  and demand structure,  Theorem \ref{wickkteen3} contains the necessary and sufficient conditions for the existence of the nonnegative solution to a nonlinear set of equations, which are essentially conditions for the existence of an equilibrium price vector at which the markets are cleared. 
Theorem  \ref{Pupsyk1} gives an algorithm of constructing of equilibrium price vector.
The notion of economy equilibrium is contained in Definition \ref{wickkteen24}.
Theorem \ref{wickkteen26} gives the necessary and sufficient conditions for the existence of an economic equilibrium. 
In Definitions \ref{10VickTin9},  \ref{11VickTin9}  the consistency of the supply structure with the demand structure in weak sense is  contained. 
Under the weak consistency of the supply structure  and demand structure,  Theorem \ref{100wickkteen3} contains the necessary and sufficient conditions for the existence of the nonnegative solution to a nonlinear set of equations, which are essentially conditions for the existence of an equilibrium price vector at which the markets are cleared. 

Theorem  \ref{mykwickkteen1}  presents an algorithm for constructing supply vectors based on demand vectors for which the markets are cleared.

Section 4 is devoted to the study of the quality of equilibrium states. 
In Definition \ref{Teeiin1} we are introduced the notion of the multiplicity of degeneracy
of equilibrium state and the notion of economy recession.
Theorem \ref{wickkteen35} formulates sufficient conditions under which a recession is possible in the world economy.

Section 5 contains the results concerned the problem of the  existence of the ideal equilibrium state. In Theorem \ref{Tinnawickpupsy2} the necessary and sufficient conditions are found under which the ideal equilibrium state exists. Theorem \ref{Tinnawickpupsy12} contains an  algorithm of construction of the set of supply vectors from  the set of demand vectors such that the ideal equilibrium price vector exists.

Section 5 provides a definition of the ideal state of equilibrium. Theorem 
\ref{Tinnawickpupsy2}  gives necessary and sufficient conditions for the existence of ideal equilibrium. Theorem \ref{Tinnawickpupsy12} presents conditions for supply and demand vectors under which a state of ideal equilibrium exists. 

Section 6 contains a study of international trade between G20 countries. The trade balance between trading countries was calculated. The excess demand in current prices recorded the absence of equilibrium in trade  between G20 countries. Using Theorem \ref{Pupsyk1}, the equilibrium price vector was calculated in each investigated year.
It turned out to be highly degenerate for each studied year. An important concept of a generalized equilibrium vector and, on its basis, a recession level parameter is introduced.

\section{Model formulation and Statement of the Problem.}
This paper is an application of the concept of the economy description,  elaborated in the monograph \cite{Gonchar2}, to the modeling of the international trade between $M$ countries. 
 
Every set of goods we describe by a vector $x=(x_1 ,\dots ,x_n ),$ where $x_i$ is  a quantity of units of the $i$-th
goods, $e_i $ is a unit of its  measurement,  $x_ie_i$ is the  natural quantity of the goods.
If $p_i $ is a price of the unit of the goods $e_i,$ \ $i=\overline{1,n},$ then
$p=(p_1 ,\dots ,p_n )$ is the price vector that corresponds to the vector of goods
 $(e_1 ,\dots ,e_n ).$ The price of vector of goods $x=(x_1 ,\dots ,x_n)$ is given by the formula $\left\langle p,x \right\rangle =\sum \limits_{i=1}^n p_i x_i.$ The set of  possible goods  in the considered period of the economy system  operation is denoted by $S.$ We assume
that  $S$ is  a convex subset of the set $R_+^n.$  Since for further consideration only
the property of convexity is   important,  we assume,  without loss of generality,
that  $S$ is a certain $n$-dimensional parallelepiped that can coincide with $R_+^n.$
Thus, we assume that in the economy system the set of possible goods $S$   is a convex subset of the non-negative orthant $R_+^n $ of  $n$-dimensional arithmetic space $R^n,$ the set of possible prices is a certain cone  $K_+^n, $ contained in $R_+^n \setminus \{0\},$ and  that can coincide with $R_+^n \setminus \{0\}.$  

\begin{de}
A set $K_+^n \subseteq \bar R_+^n$ is called a nonnegative cone  if together  with a point $u \in K_+^n $  the point  $t u$ belongs  to the set  $K_+^n $  for every real $t>0.$
\end{de}
 Here and further, $R_+^n \setminus \{0\}$ is a cone
formed  from the nonnegative orthant $R_+^n $ by ejection of the null vector  $\{0\}=\{0, \ldots, 0\}.$  Further, the cone  $R_+^n \setminus  \{0\}$ is denoted by $ R_+^n .$

 Suppose that $M$ country is exchanged by $n$ types of goods. Let us denote by $i_{kj}^s$ the quantity of units of import of the $s$-th good from the $j$ -th country into the country $k$  and $p_s i_{kj}^s$ its value, where by $p=\{p_s\}_{s=1}^n$ we denoted  a price vector.
Further, we assume that $e_{kj}^s$ is a  quantity of units of export of the $s$-th goods  from the $k$-th country into the $j$-th one  and $p_s e_{kj}^s$ is  its value.
Having these entities let us introduce  into consideration the demand $ ||c_{sk}^1||_{s=1, k=1}^{n, M}, $ and supply  matrices  $ ||b_{sk}^1||_{s=1, k=1}^{n, M}, $  
\begin{eqnarray}\label{VickTin1}
c_{sk}^1=\sum\limits_{j=1}^M i_{kj}^s,\quad b_{sk}^1=\sum\limits_{j=1}^M e_{kj}^s, \quad s=\overline{1,n}, \quad k=\overline{1,M},
\end{eqnarray}
and the  supply vector $\psi^1=\{\psi_s^1\}_{s=1}^n, $ where we put 
\begin{eqnarray}\label{VickTin2}
\psi_s^1=\sum\limits_{k=1}^M b_{sk}^1= \sum\limits_{k,j=1}^Me_{kj}^s, \quad s=\overline{1,n}.
\end{eqnarray}
The income of the $k$-th country  from its export
is given by the formula
\begin{eqnarray}\label{VickTin3}
D_k(p)=\sum\limits_{s=1}^n p_s b_{sk}^1= \sum\limits_{j=1}^M\sum\limits_{s=1}^n e_{kj}^s p_s.
\end{eqnarray}
The conditions of the economy equilibrium is written in the form
\begin{eqnarray}\label{VickTin4}
\sum\limits_{k=1}^M c_{s k}^1\frac{D_k(p)}{\sum\limits_{s=1}^n c_{s k}^1 p_s }\leq \sum\limits_{k,j=1}^Me_{kj}^s=\psi^1_s, \quad s=\overline{1,n}.
\end{eqnarray}
Due to the statistical data are given in the cost form, the set of inequalities (\ref{VickTin4})  is convenient to rewrite in the cost form
\begin{eqnarray}\label{VickTin5}
\sum\limits_{k=1}^M c_{s k}\frac{D_k}{\sum\limits_{s=1}^n c_{s k} }\leq \psi_s,  \quad s=\overline{1,n},
\end{eqnarray}
where we put
$c_{sk}= p_s c_{sk}^1, \ \psi_s=p_s \psi_s^1,  \  D_k=\sum\limits_{j=1}^M\sum\limits_{s=1}^n p_s e_{kj}^s.$ The  vector $c_k=\{c_{sk}\}_{s=1}^n, \ k=\overline{1,M},$
 we call  the demand vector of the $k$-th country and the  vector $b_k=\{b_{sk}\}_{s=1}^n, \ k=\overline{1,M},$
 we call  the supply vector of the $k$-th country.
\begin{de}\label{VickTin6}  We say that the exchange by $n$ types of goods in cost form between $M$ countries  is at the equilibrium state if the set of inequalities (\ref{VickTin5}) are true. The price vector $p=\{p_1, \ldots, p_n\}$ under which the set of inequalities (\ref{VickTin5})  are true  we call the equilibrium price vector.
\end{de}
Further, we are held the denotations  introduced in \cite{Gonchar2}.
So, we put  $\sum\limits_{j=1}^M p_s e_{kj}^s =b_{sk}, \ s=\overline{1,n}, \ k=\overline{1,M}$  and introduce the   property vector of the $k$-th country $b_k=\{b_{sk}\}_{s=1}^n, \ k=\overline{1,n}.$ In these denotations the equilibrium state is written in the form

\begin{eqnarray}\label{VickTin7}
\sum\limits_{k=1}^M c_{s k}\frac{D_k}{\sum\limits_{s=1}^n c_{s k} }\leq \psi_s,  \quad s=\overline{1,n},
\end{eqnarray}
where we put
$ \psi_s=\sum\limits_{k=1}^M b_{sk},  \  D_k=\sum\limits_{s=1}^n b_{sk}.$
 Since the set of inequalities (\ref{VickTin7}) may not be satisfied for the price vector $p=\{p_1,\ldots,p_n\}, $ therefore 
we introduce the relative price vector $p_0=\{p_1^0,\ldots,p_n^0\}, $ that have to provide the equilibrium in the exchange model  in the form
\begin{eqnarray}\label{VickTin8}
\sum\limits_{k=1}^M c_{s k}\frac{D_k(p_0)}{\sum\limits_{s=1}^n p_s^0 c_{s k} }\leq \psi_s,  \quad s=\overline{1,n},
\end{eqnarray}
where we put
$ \psi_s=\sum\limits_{k=1}^M b_{sk},  \  D_k(p_0)=\sum\limits_{s=1}^n p_s^0 b_{sk}.$
It is evident  that if  the relative equilibrium price vector  $p_0=\{p_1^0,\ldots,p_n^0\}, $ 
satisfying the set of inequalities (\ref{VickTin8}) is such that $p_i^0=1, \ i=\overline{1,n},$ then the price vector $p=\{p_1,\ldots,p_n\}, $ is an equilibrium price vector. Introduced  the relative equilibrium vector  $p_0=\{p_1^0,\ldots,p_n^0\} $ characterize the deviation of the exchange model from the equilibrium state.
 
What means that   the exchange of $n$ types of goods in the cost form between $M$ countries is in an equilibrium state and why it is important to investigate the equilibrium states.  First, we have to mark that if for every country export-import balance is equal zero, then  the relative equilibrium vector  $p_0=\{p_1^0,\ldots,p_n^0\} $ is such that  $p_i^0=1, \ i=\overline{1,n},$ so, the price vector $p=\{p_1,\ldots,p_n\}, $ is an equilibrium price vector, moreover, the set of inequalities 
(\ref{VickTin8}) is converted into the set of equalities. This case is ideal one in the reality. We describe the factors which can  violate the ideal equilibrium in the exchange model. In the exchange model there exist equilibrium states which differ from this ideal case. Since every country seeks to protect its economy from flow of goods from the other countries it establishes the customs-tariffs. This leads to the violation of 
zero  export-import balances of the countries. The another factor that influences 
the violation of 
zero  export-import balance is the state under which the country is  needed more import than export due to the
internal state of the economy of the country and vice versa when the country needs more  export  than import. Such deviations in the export-import balance can lead to the equilibrium states that are far from the  ideal  equilibrium state. The quality of such equilibrium states can be very different. The main aim is to investigate these equilibrium states. The deformation of the ideal equilibrium state in the exchange model which  can arise  may also lead to the recession in the world economy.
The above established arguments are very important to investigate these deformed equilibrium states.

\section{Algorithms of the  equilibrium states finding.}

Let us give a series of definitions useful for what follows.
\begin{de}\label{mant0}
By a polyhedral non-negative cone created by a set of  vectors $\{a_i,\
i=\overline {1,t}\}$ of $n$-dimensional space $R^n$ we understand the set of vectors of the form
\begin{eqnarray*}  d = \sum\limits_{i=1}^t\alpha_i a_i,\end{eqnarray*}
where $\alpha=\{\alpha_i\}_{i=1}^t$ runs over the set $R_+^t.$
\end{de}
\begin{de}\label{mant1}
The dimension of a non-negative polyhedral cone created by a set of vectors $\{a_i,\
i=\overline {1,t}\}$ in $n$-dimensional space $R^n$ is maximum number of linearly independent vectors from the set of vectors $\{a_i,\
i=\overline {1,t}\}.$
\end{de}

\begin{de}\label{1mant2}
The vector $b $ belongs to the interior of the non-negative polyhedral $r$-dimensional cone, $r \leq n,$ created by the set of vectors
 $\{a_1,\dots ,a_t\}$ in $n$-dimensional vector space $R^n$
 if   a strictly positive vector $\alpha=\{\alpha_i\}_{i=1}^t \in R_+^t$ exists such that
\begin{eqnarray*} b =\sum\limits _{s=1}^t a_s\alpha_s,\end{eqnarray*}
where $\alpha_s>0, \ s=\overline{1,t}.$
\end{de}

Let us give the necessary and sufficient conditions under which a certain  vector belongs to the interior of the polyhedral cone.

\begin{leme}\label{mant2} Let $\{a_1,\dots ,a_m\},$ $1\leq m\leq n,$ be the set of linearly independent vectors in $R_+^n.$ The necessary and sufficient conditions for the vector $b $ to belong to the interior of the non-negative cone $K_a^+$ created by vectors
$\{a_i,\ i=\overline {1,m}\}$ are the conditions
\begin{eqnarray} \label{mant3}
\langle f_i, b \rangle  >0, \quad i=\overline {1,m}, \quad \langle f_i, b \rangle=
0,\quad i=\overline {m+1,n},
\end{eqnarray}
where  $f_i,\
i=\overline {1,n},$   is a set of vectors being  biorthogonal to the set of linearly independent vectors $\bar a_i, \ i=\overline {1,n},$ and
$\bar a_i=a_i, \ i=\overline {1,m}.$
\end{leme}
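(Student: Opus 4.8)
The plan is to exploit the biorthogonal (dual) basis $\{f_i\}_{i=1}^n$ associated to the linearly independent completion $\{\bar a_i\}_{i=1}^n$ of the given vectors, so that the coordinates of any vector $b$ in the basis $\{\bar a_i\}$ are read off directly as $\langle f_i, b\rangle$. Recall that biorthogonality means $\langle f_i, \bar a_j\rangle = \delta_{ij}$, and since $\{\bar a_i\}_{i=1}^n$ is a basis of $R^n$, every $b\in R^n$ has the unique expansion
\begin{eqnarray*}
b = \sum_{i=1}^n \langle f_i, b\rangle\, \bar a_i = \sum_{i=1}^m \langle f_i, b\rangle\, a_i + \sum_{i=m+1}^n \langle f_i, b\rangle\, \bar a_i.
\end{eqnarray*}
The whole lemma is then essentially a restatement of Definition \ref{1mant2} (with $t=m$) in terms of these coordinates.

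For the \emph{sufficiency} direction I would assume the conditions (\ref{mant3}) hold. Then in the expansion above the second sum vanishes, so $b = \sum_{i=1}^m \langle f_i, b\rangle\, a_i$ with each coefficient $\langle f_i, b\rangle > 0$. Setting $\alpha_i := \langle f_i, b\rangle$ for $i=\overline{1,m}$ gives a strictly positive vector $\alpha \in R_+^m$ with $b = \sum_{i=1}^m a_i \alpha_i$, which is exactly the definition of $b$ belonging to the interior of $K_a^+$.

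For the \emph{necessity} direction I would assume $b$ lies in the interior of $K_a^+$, so by Definition \ref{1mant2} there is $\alpha_i > 0$, $i=\overline{1,m}$, with $b = \sum_{i=1}^m a_i \alpha_i$. Applying the functional $f_j$ and using biorthogonality yields $\langle f_j, b\rangle = \sum_{i=1}^m \alpha_i \langle f_j, a_i\rangle = \alpha_j$ for $j=\overline{1,m}$, which is positive, and $\langle f_j, b\rangle = \sum_{i=1}^m \alpha_i \langle f_j, \bar a_i\rangle = 0$ for $j=\overline{m+1,n}$. This gives precisely (\ref{mant3}). The only genuine point requiring care — the ``main obstacle,'' though it is minor — is justifying the existence and well-definedness of the biorthogonal system: one must extend $\{a_1,\dots,a_m\}$ to a full basis $\{\bar a_1,\dots,\bar a_n\}$ of $R^n$ (possible since the $a_i$ are linearly independent and $m\le n$) and then invoke the standard fact that a basis has a unique dual basis; everything else is a direct computation. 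I would also remark that the coordinates $\langle f_i, b\rangle$ depend on the chosen completion, but the \emph{conditions} (\ref{mant3}) as a whole characterize membership in the interior of $K_a^+$ independently of that choice, which is consistent since the cone $K_a^+$ depends only on $\{a_1,\dots,a_m\}$.
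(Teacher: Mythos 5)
Your proof is correct and follows essentially the same route as the paper's: both arguments construct the biorthogonal system for a basis completion of $\{a_1,\dots,a_m\}$ and read off the coefficients of $b$ as $\alpha_i=\langle f_i,b\rangle$, checking positivity for $i=\overline{1,m}$ and vanishing for $i=\overline{m+1,n}$ in each direction. Your write-up is in fact slightly more explicit on the sufficiency step, which the paper dismisses as obvious.
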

\begin{proof}  It is obvious that the set of  biorthogonal  vector  exists. Really, the vector $f_j, \ j=\overline {1,n},$ exists that solves the set of equations
\begin{eqnarray} \label{mant4}
\langle \bar a_i, f_j \rangle = \delta _{ij},\quad i=\overline {1,n},
\end{eqnarray}
due to linear independency of vectors
$\bar a_i,\ i=\overline {1,n}.$

 Necessity. If the vector $b $ belongs to the interior of the non-negative cone $K_a^+,$ then there exist  numbers
 $\alpha_i>0,$~ $i=\overline {1,m},$ such that
\begin{eqnarray*} b=\sum\limits_{i=1}^m a_i\alpha_i.\end{eqnarray*}  From here \begin{eqnarray*} \alpha_i=\langle b ,f_i\rangle  >0,\quad
i=\overline{1,m}, \quad \langle f_i, b \rangle =
0,\quad i=\overline {m+1,n}.\end{eqnarray*}

 Sufficiency is obvious because $\alpha_i$
is determined by the formula $\alpha_i=\langle b ,f_i\rangle $
unambiguously from the representation $b =\sum\limits _{i=1}^n\alpha_i \bar a_i.$

\end{proof}

The next statement is obvious.
\begin{proposition}\label{allapotka1} Let $\{a_1,\dots ,a_m\},$ $1\leq m\leq n,$ be the set of linearly independent vectors in $R_+^n.$ The necessary and sufficient conditions for the vector $b $ to belong to  the non-negative cone $K_a^+$ created by the set of  vectors
$\{a_i,\ i=\overline {1,m}\}$ are the conditions
\begin{eqnarray} \label{allapotka2}
\langle f_i, b \rangle  \geq 0, \quad i=\overline {1,m}, \quad \langle f_i, b \rangle=
0,\quad i=\overline {m+1,n},
\end{eqnarray}
where  $f_i,\
i=\overline {1,n},$   is a set of vectors being  biorthogonal to the set of linearly independent vectors $\bar a_i, \ i=\overline {1,n},$ and
$\bar a_i=a_i, \ i=\overline {1,m}.$
\end{proposition}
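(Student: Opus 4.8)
The plan is to mirror almost verbatim the proof of Lemma~\ref{mant2}, simply relaxing the strict inequalities to non-strict ones. First I would invoke the existence of a biorthogonal system $f_i,\ i=\overline{1,n}$, to the linearly independent family $\bar a_i,\ i=\overline{1,n}$ (with $\bar a_i=a_i$ for $i=\overline{1,m}$), which is guaranteed exactly as in that lemma by the solvability of $\langle \bar a_i, f_j\rangle=\delta_{ij}$, $i=\overline{1,n}$, coming from the linear independence of the $\bar a_i$.

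For necessity, suppose $b=\sum_{i=1}^m\alpha_i a_i$ with $\alpha_i\geq 0$. Pairing with $f_j$ and using biorthogonality gives $\langle f_j,b\rangle=\alpha_j\geq 0$ for $j=\overline{1,m}$, while for $j=\overline{m+1,n}$ every $a_i$ with $i\leq m$ is annihilated by $f_j$, so $\langle f_j,b\rangle=0$. This yields precisely the conditions \eqref{allapotka2}.

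For sufficiency, I would expand $b$ in the full basis as $b=\sum_{i=1}^n\langle f_i,b\rangle\,\bar a_i$, this being the unique representation of $b$ with respect to $\{\bar a_i\}_{i=1}^n$. Under the hypotheses \eqref{allapotka2} the terms with $i=\overline{m+1,n}$ vanish, so $b=\sum_{i=1}^m\langle f_i,b\rangle\, a_i$, and the coefficients $\langle f_i,b\rangle$, $i=\overline{1,m}$, are non-negative; hence $b\in K_a^+$ by Definition~\ref{mant0}.

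There is no substantive obstacle here — as the paper already notes, the statement is essentially obvious. The only point deserving a word of care is that the coefficients in the biorthogonal expansion are \emph{uniquely} determined, so that the vanishing of $\langle f_i,b\rangle$ for $i=\overline{m+1,n}$ genuinely forces $b$ into the linear span of $a_1,\dots,a_m$, and the accompanying sign conditions then place it in the cone itself rather than merely in that subspace.
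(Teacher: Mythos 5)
Your proof is correct and follows exactly the route the paper intends: the paper gives no separate argument, declaring the statement obvious as the non-strict analogue of Lemma~\ref{mant2}, and your necessity/sufficiency argument via the biorthogonal expansion $b=\sum_{i=1}^n\langle f_i,b\rangle\,\bar a_i$ is precisely that lemma's proof with $\alpha_i>0$ relaxed to $\alpha_i\geq 0$. Nothing further is needed.
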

From the Theorem \ref{allapotka1} the representation for the vector $b$ belonging to the cone
$K_a^+$
\begin{eqnarray} \label{allapotka3}
 b=\sum\limits_{i=1}^m\alpha_ia_i, \quad \alpha_i \geq 0, \quad i=\overline{1,m},
\end{eqnarray}
is valid.

Therefore, to check belonging of the vector $b$
to the interior of the non-negative cone $K_a^+$ created by vectors
$\{a_i,\  i=\overline {1,m}\}$
one must enlarge the set of $m$ linearly independent vectors
$\{a_i,\ i=\overline {1,m}\}$~ up to the set of $n$ linearly independent vectors in $R^n,$ then to build the biorthogonal set of vectors $\{f_i,\ i=\overline
{1,n}\}$ for the enlarged set of vectors and to check the conditions of the Theorem.

Describe now an algorithm of constructing strictly positive solutions to the set of equations
\begin{eqnarray} \label{luda05}
 \psi=\sum\limits_{i=1}^lC_i y_i, \quad y_i>0, \quad i=\overline{1,l},
\end{eqnarray}
with respect to the vector $y=\{y_i\}_{i=1}^l$
or the same set of equations in coordinate form
\begin{eqnarray} \label{gonl40}
\sum\limits _{i=1}^l
c_{ki}y_i =\psi _k,\quad k=\overline{1,n},
\end{eqnarray}
for the vector
$\psi =\{\psi _1,\dots ,\psi _n\}$ belonging to the interior of the polyhedral cone created by vectors $\{C_i=\{c_{ki}\}_{k=1}^n, \ i=\overline {1,l}\}.$

\begin{te}\label{jant39}
If a certain  vector
$\psi$ belonging to the interior of a non-negative $r$-dimensional polyhedral cone created by vectors $\{C_i=\{c_{ki}\}_{k=1}^n,\ i=\overline {1,l}\},$
is such that there exists a subset of $r$ linearly independent vectors of the set of  vectors
$\{C_i, \ i=\overline {1,l}\},$ such that the vector $\psi$ belongs to the interior of the cone created by this  subset of vectors,
then there exist $l-r+1$ linearly independent non-negative solutions $z_i$ to the set of equations (\ref{gonl40})  such that the set of strictly positive solutions to the set of equations (\ref{gonl40})
is given by the formula
\begin{eqnarray} \label{gonl41} y=\sum\limits
_{i=r}^l\gamma _iz_i , \end{eqnarray}
where \begin{eqnarray*} z_i=\{\langle\psi
,f_1\rangle - \langle C_i,f_1\rangle y_i^*,\dots , \langle\psi
,f_r\rangle - \langle C_i,f_r\rangle y_i^*,0,\dots ,y_i^*, 0,\dots
,0\},\quad i=\overline {r+1,l},\end{eqnarray*}
\begin{eqnarray*} z_r=\{\langle\psi ,f_1\rangle ,\dots
,\langle\psi ,f_r\rangle ,0,\dots ,0\},\end{eqnarray*}
 \begin{eqnarray*} y_i^*=\left\{\begin{array}{ll}
  \min\limits_{s\in K_i}\frac {\langle \psi
,f_s \rangle} {\langle  C_i,f_s \rangle },& K_i=\{s,\langle
C_i,f_s \rangle >0\},\\ 1, & {\langle C_i,f_s \rangle}\leq 0,
~\forall \ s=\overline{1,r}, \end{array} \right.
\end{eqnarray*}
and the components of the vector
$\{\gamma_i\}_{i=r}^l$  satisfy the set of inequalities
\begin{eqnarray*} \sum\limits _{i=r}^l\gamma _i=1, \quad
\gamma _i>0, \quad i=\overline {r+1,l}, \end{eqnarray*}
\begin{eqnarray} \label{eq2}
\sum\limits_{i=r+1}^l{\langle C_i,f_k \rangle} y_i^*\gamma_i<
{\langle\psi,f_k\rangle}, \quad k=\overline{1,r}.
\end{eqnarray}
\end{te}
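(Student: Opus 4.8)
The plan is to reduce the system (\ref{luda05}) to an explicit affine parametrization and then recast that parametrization in the barycentric form asserted. Without loss of generality, relabel the vectors so that $\{C_1,\dots,C_r\}$ is the subset of $r$ linearly independent vectors in whose cone's interior $\psi$ lies. Since the whole cone is $r$-dimensional, $C_1,\dots,C_r$ span the linear hull of all the $C_i$, so each $C_i$ with $i>r$ is a linear combination $C_i=\sum_{k=1}^r\langle C_i,f_k\rangle C_k$, where $f_1,\dots,f_r$ are biorthogonal to $C_1,\dots,C_r$ inside that hull, as in Lemma \ref{mant2}. By the necessity part of Lemma \ref{mant2}, the hypothesis that $\psi$ lies in the interior of the cone generated by $C_1,\dots,C_r$ gives $\langle\psi,f_k\rangle>0$ for $k=\overline{1,r}$; this positivity will be used repeatedly.

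First I would apply $\langle\,\cdot\,,f_k\rangle$ to (\ref{luda05}) to obtain the equivalent reduced system $y_k=\langle\psi,f_k\rangle-\sum_{i=r+1}^l\langle C_i,f_k\rangle y_i$, $k=\overline{1,r}$, with $y_{r+1},\dots,y_l$ free; the equivalence holds because both sides of (\ref{luda05}) lie in the hull of $C_1,\dots,C_r$ and therefore agree if and only if their $f_k$-coordinates agree. Hence the strictly positive solutions are exactly the tuples with $y_i>0$ for $i>r$ and $\langle\psi,f_k\rangle-\sum_{i>r}\langle C_i,f_k\rangle y_i>0$ for $k\le r$. Next I would check that the stated $z_r,z_{r+1},\dots,z_l$ are nonnegative solutions: $z_r$ is the solution with all free coordinates zero, nonnegative since $\langle\psi,f_k\rangle>0$; and $z_i$ ($i>r$) is the solution with $y_i=y_i^\ast$ and the other free coordinates zero, so the choice of $y_i^\ast$ as the minimum of $\langle\psi,f_s\rangle/\langle C_i,f_s\rangle$ over $K_i$, or $1$ when $K_i=\varnothing$, is precisely what forces $\langle\psi,f_k\rangle-\langle C_i,f_k\rangle y_i^\ast\ge 0$ for every $k\le r$. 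Linear independence of these $l-r+1$ vectors is then immediate by inspecting coordinates $r+1,\dots,l$: on $z_{r+1},\dots,z_l$ they form the diagonal matrix $\operatorname{diag}(y_{r+1}^\ast,\dots,y_l^\ast)$ with positive entries, while $z_r$ has zeros there and is nonzero.

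For the main equivalence I would argue both inclusions. If $\sum_{i=r}^l\gamma_i=1$, then $y=\sum_{i=r}^l\gamma_i z_i$ solves (\ref{luda05}) because an affine combination of solutions of a linear system is again a solution; computing its coordinates gives $y_j=\gamma_j y_j^\ast$ for $j>r$ and $y_k=\langle\psi,f_k\rangle-\sum_{i>r}\gamma_i\langle C_i,f_k\rangle y_i^\ast$ for $k\le r$, so $y$ is strictly positive exactly when $\gamma_i>0$ for $i=\overline{r+1,l}$ and (\ref{eq2}) holds. Conversely, given a strictly positive solution $y$, set $\gamma_i=y_i/y_i^\ast$ for $i>r$ and $\gamma_r=1-\sum_{i>r}\gamma_i$; then $\sum_{i=r}^l\gamma_i=1$, $\gamma_i>0$ for $i>r$, and using the reduced system one verifies $\sum_{i=r}^l\gamma_i z_i=y$ coordinatewise, with (\ref{eq2}) following from $y_k>0$.

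I do not expect a serious obstacle here: the substance is the bookkeeping that turns the biorthogonal projection into the displayed formulas, together with the observation that $y_i^\ast$ is exactly the largest admissible step along the $i$-th free direction preserving nonnegativity. The only point requiring genuine care is the equivalence of (\ref{luda05}) with its $r$-dimensional reduction, i.e.\ that projecting onto $f_1,\dots,f_r$ loses no information; this rests on the cone being genuinely $r$-dimensional, so that $\psi$ and every admissible combination $\sum_i C_i y_i$ live inside the hull of $C_1,\dots,C_r$.
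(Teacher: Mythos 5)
Your proposal is correct and follows essentially the same route as the paper's proof: pass to biorthogonal coordinates $f_1,\dots,f_r$, reduce (\ref{luda05}) to the affine parametrization with free variables $y_{r+1},\dots,y_l$, verify that the $z_i$ are nonnegative, linearly independent solutions, and match the barycentric coefficients via $y_i=\gamma_i y_i^\ast$. Your converse direction (setting $\gamma_i=y_i/y_i^\ast$ directly) is a slightly more explicit version of the paper's bijection argument, but the substance is identical.
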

\begin{proof}  The vector $\psi $ belongs to the interior of
$r $-dimensional polyhedral cone, $r \leq n,$ and there exist
$r$ linearly independent vectors from the set of  vectors
$C_1,\dots ,C_l,$~ ~$l\geq n,$ such that $\psi $ is interior  for this set of
 $r$ linearly independent vectors.
Without loss of generality, suppose these vectors are $C_1,\dots ,C_r.$ However, if it is not the case, then one can get it by renumbering vectors $C_i$ and components of the vector $y=\{y_1,\dots ,y_l\},$  respectively.
Therefore, the vector $ \psi $ has the representation
\begin{eqnarray} \label{luda45}
 \psi=\sum\limits_{i=1}^r\alpha_i C_i, \quad \alpha_i>0, \quad i=\overline{1,r}.
\end{eqnarray}
Consider the set of equations
\begin{eqnarray} \label{gonl42}
\sum\limits _{i=1}^lC_iy_i=\psi
\end{eqnarray}
for the vector $y=\{y_1,\dots ,y_l\}.$

Build the set of vectors $f_1,\dots ,f_n$ being the set of biorthogonal  vectors to the set of linearly independent vectors
$C_1,\dots ,C_r$ and satisfying the conditions
\begin{eqnarray*} \langle f_i, C_j\rangle =\delta _{ij},\quad i,j=\overline{1,r},
\quad \langle f_i,C_j \rangle =0,\quad j=\overline{1, r}, \quad i=\overline{r+1,n}, \end{eqnarray*}  where
$\langle x,y \rangle $ is  the scalar product of the vectors $x,y$ in $R^n.$
The set of equations (\ref{gonl42}) is equivalent to the set of equations
\begin{eqnarray} \label{gonl43}
\sum\limits
_{i=r+1}^l\langle C_i,f_j \rangle y_i+y_j= \langle \psi ,f_j \rangle,\quad j=\overline{1,r},
\end{eqnarray}
where $ \langle \psi ,f_i \rangle  >0, \ i=\overline{1,r}.$

This equivalence holds because there hold the next equalities $ \langle f_i, C_j\rangle =0,\ j=\overline{1, l}, \ i=\overline{r+1, n}, $
$ \langle \psi, f_i \rangle =0, \ i=\overline{r+1, n}. $ The last equalities are the consequence of the representation (\ref{luda45}) for the vector $\psi.$
Note that the general solution to the set of equations (\ref{gonl43})
has the form
\begin{eqnarray*}  y=\left\{\langle \psi ,f_1 \rangle -\sum\limits _{i=r+1}^l\langle C_i, f_1\rangle y_i, \dots\right. \end{eqnarray*}
 \begin{eqnarray} \label{gonl44}
\left.\dots ,\langle \psi ,f_r >- \sum\limits _{i=r+1}^l\langle C_i, f_r\rangle y_i, y_{r+1},\dots
,y_l\right\},
\end{eqnarray}
where the vector $\tilde y=\{y_{r+1},\dots ,y_l\}$ runs over the set  $R^{l-r}.$

The vectors
$z_i,\ i=\overline {r,l},$ defined in the Theorem solve the set of equations
(\ref{gonl43}), their components are non-negative, and they themselves are linearly independent.
Build after vectors
$z_i$ the vector
\begin{eqnarray*} \bar y=\sum\limits
_{i=r}^l{\gamma _i}{z_i},\end{eqnarray*}
where
\begin{eqnarray*} \sum\limits _ {i=r}^l\gamma
_i=1.\end{eqnarray*}
Then
\begin{eqnarray} \label{gonl45}
\bar y=\left\{ \langle \psi
,f_1 \rangle  - \sum\limits _{i=1+r}^l\langle C_i, f_1\rangle \gamma _iy_i^*,\right.
\end{eqnarray}  \begin{eqnarray*} \left.\dots ,\langle \psi ,f_r \rangle
-\sum\limits_{i=r+1}^l\langle C_i, f_r\rangle \gamma _iy_i^*, \gamma
_{r+1}y_{r+1}^*,\dots ,\gamma _ly_l^*\right\}\end{eqnarray*}
satisfies  the  conditions of the  Theorem.
Show that there is reciprocal one-to-one correspondence between vectors $\bar y$ and $\tilde y$ determined by formulae (\ref{gonl45}) and
(\ref{gonl44}), respectively.
To prove this, it is sufficient to prove that every vector
~$\tilde y={\{y_{r+1},\dots
,y_l\}}\in R^{l-r}$ has one and only one corresponding set
$\gamma _i, ~ i=\overline {r,l},$ such that
$\sum\limits _{i=r}^l\gamma _i =1.$
Really, from the linear independence of vectors
$z_r,z_{r+1},\dots ,z_l$ it follows that the set of equations
\begin{eqnarray*}  \gamma _{r}z_r+\dots +\gamma
_{l}z_l=y\end{eqnarray*}  is equivalent to the set of equations
\begin{eqnarray*} \gamma _{r+1}(z_{r+1}-z_r)+\dots +\gamma
_l{(z_l-z_r)}=y-z_r.\end{eqnarray*}
The vectors $z_{r+1}-z_r,\dots
,z_l-z_r$ are linearly independent, therefore we can determine $\gamma
_{r+1},\dots ,\gamma _l$ for every vector
$\tilde y =\{y_{r+1},\dots ,y_l\}$ unambiguously. From the equality
$\sum\limits _{i=r}^l \gamma _i=1 $ we determine the number $  \gamma _r$
unambiguously too. It is easy to see that $y_i=\gamma _iy_i^*, \ i=\overline{r+1,l}.$ The solution $\bar
y$ is strictly positive  if $\gamma _i,~ i=\overline
{r,l},$ satisfy the set of inequalities defined in the Theorem \ref{jant39}.

\end{proof}

\begin{de}\label{allunja1}
We call a subset of vectors $\{b_1, \ldots, b_m\}$   from the set of vectors
$\{a_1, \ldots, a_t\}$
the  generating set of  vectors of the $r$-dimensional cone created by the vectors
$\{a_1, \ldots, a_t\}$
 if it satisfies  the conditions: \\
1) every vector $b_i, \ i=\overline{1,m},$ from the set of vectors $\{b_1, \ldots, b_m\}$ does not belong to  the cone created by the set of  vectors $\{b_1, \ldots, b_m\}\setminus \{b_i\}\  ;$ \\
2) the cone created by the set of vectors  $\{a_1, \ldots, a_t\}$ coincides with the cone created by the vectors $\{b_1, \ldots, b_m\}.$
\end{de}
\begin{proposition}\label{nastya2}
The  generating set of vectors of  $r$-dimensional cone created by a set of vectors  $\{a_1, \ldots, a_t\}$ exists and contains no less than $r$ vectors.
\end{proposition}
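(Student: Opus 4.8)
The plan is to obtain existence by a greedy pruning of the generating list and to obtain the bound $m\ge r$ by a dimension count based on Definition \ref{mant1}.

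For existence I would argue as follows. Start from the whole list $\{a_1,\dots,a_t\}$. While the current list contains a vector that already lies in the polyhedral cone (in the sense of Definition \ref{mant0}) generated by the remaining vectors of the list, remove one such vector. Each removal leaves the generated cone unchanged: if the removed vector equals a nonnegative combination of the others, then in any conic combination of the whole list one may substitute that expression for the removed vector, and all resulting coefficients stay nonnegative, so the removed vector is superfluous. The process terminates in at most $t$ steps since the list strictly shrinks; denote the surviving list by $\{b_1,\dots,b_m\}$. Condition 2) of Definition \ref{allunja1} holds because the generated cone was preserved at every step, and condition 1) holds because the process stops exactly when no surviving vector lies in the cone generated by the others. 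Hence $\{b_1,\dots,b_m\}$ is a generating set; it is nonempty whenever $r\ge 1$, the cone being nontrivial in that case.

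For the lower bound I would use that the linear span of a polyhedral cone coincides with the linear span of any list of vectors that generates it (each generator lies in the cone, and the cone lies in the linear span of its generators). Consequently the linear span of $\{b_1,\dots,b_m\}$ equals the linear span of $\{a_1,\dots,a_t\}$, and by Definition \ref{mant1} this common span has dimension $r$. Since $m$ vectors cannot span a space of dimension larger than $m$, we conclude $m\ge r$.

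This argument presents no genuine obstacle; the only points deserving a line of verification are the invariance of the generated cone under one pruning step (the nonnegativity of the substituted coefficients) and the identification of the halting condition with property 1) of Definition \ref{allunja1}. I would therefore keep the write-up brief.
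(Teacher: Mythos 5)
Your proof is correct. The existence half is the same pruning idea the paper uses, but packaged differently: the paper proceeds by induction on the number of generators, adding one vector $a_{k+1}$ at a time to an already-constructed generating set and then pruning the augmented list, with a base case consisting of $r$ linearly independent vectors; you instead run a single greedy pruning pass on the full list $\{a_1,\dots,a_t\}$, which is more direct and avoids the bookkeeping of the inductive step (in particular the case distinction over whether $a_{k+1}$ already lies in the current cone, which the paper's write-up glosses over). The one point where you genuinely add something is the lower bound $m\ge r$: the paper only gestures at it by remarking that the induction can be started from $r$ linearly independent vectors, which by itself does not show that the final pruned set still has at least $r$ elements. Your argument --- that the generating set and the original set have the same linear span because each generates the same cone and each vector of either list lies in that cone, so the span has dimension $r$ by Definition \ref{mant1} and hence cannot be spanned by fewer than $r$ vectors --- closes that gap cleanly. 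Both steps you flag for verification (invariance of the cone under one removal, via substitution of the nonnegative expansion of the discarded vector, and the identification of the halting condition with property 1) of Definition \ref{allunja1}) check out as stated.
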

\begin{proof} The proof we carry out by  induction on the number of vectors. Denote   $\{b_1,\ldots ,b_m\}$  the generating set of vectors for the cone created by the set of vectors
$\{a_1,\ldots, a_k\}.$
Let  $K_{a_1, \ldots, a_k}^+$ and  $K_{b_1,\ldots ,b_m}^+$ be the cones created by the set of vectors
$\{a_1,\ldots, a_k \}$ and  $\{b_1,\ldots ,b_m\},$
correspondingly. In accordance with the Definition \ref{allunja1}, for  the generating set of vectors the equality  $K_{a_1, \ldots, a_k}^+=K_{b_1,\ldots ,b_m}^+$ holds.
Let $a_{k+1}$  be a certain vector not belonging to the cone $K_{a_1, \ldots, a_k}^+.$
The equality $K_{a_1, \ldots, a_{k+1}}^+= K_{b_1,\ldots ,b_m, a_{k+1}}^+$  takes place.
For the vector $b_i, \ i=\overline{1,m}, $ consider the cone $ K_{b_1,\ldots b_{i-1}, b_{i+1}, \ldots ,b_m, a_{k+1}}^+$ created by the set of vectors $\{a_{k+1}\}\cup\{b_1,\ldots ,b_m\} \setminus \{b_i\}.$
Two cases are possible: the vector $b_i$  belongs to the cone $ K_{b_1,\ldots b_{i-1}, b_{i+1}, \ldots ,b_m, a_{k+1}}^+$ and  it does not belong to this one. If $b_i$ belongs to this cone, then  we  throw out it from the set of vectors $\{b_1,\ldots ,b_m\}\cup \{a_{k+1}\}.$
It is obvious that $ K_{b_1,\ldots, b_m, a_{k+1}}^+=K_{b_1,\ldots b_{i-1}, b_{i+1}, \ldots ,b_m, a_{k+1}}^+.$
Further we act analogously. If the vector $b_j$
belongs to the cone  $K_{b_1,\ldots b_{i-1}, b_{i+1}, \ldots ,  b_{j-1}, b_{j+1}, \ldots, b_m, a_{k+1}}^+,$ then  we  throw out it from the set of vectors $\{a_{k+1}\} \cup\{b_1,\ldots ,b_m\} \setminus \{b_i\}.$ Then the  following equalities
\begin{eqnarray*}  K_{b_1,\ldots, b_m, a_{k+1}}^+=K_{b_1,\ldots b_{i-1}, b_{i+1}, \ldots ,b_m, a_{k+1}}^+ \end{eqnarray*}   \begin{eqnarray*} =K_{b_1,\ldots b_{i-1}, b_{i+1}, \ldots ,  b_{j-1}, b_{j+1}, \ldots, b_m, a_{k+1}}^+\end{eqnarray*}
hold.
Having carried out the finite number of steps, we come to the set of vectors $\{a_{k+1}\}\cup\{b_{1}^1,\ldots  b_{m_1}^1\}$
that does  not contain  none  vector $b_j^1, \ j=\overline{1, m_1},$   belonging to   the corresponding cone $ K_{b_1^1,\ldots,   b_{j-1}^1, b_{j+1}^1, \ldots, b_{m_1}^1, a_{k+1}}^+.$
It is obvious that the  equality  $K_{b_1^1,\ldots ,b_{m_1}^1, a_{k+1}}^+= K_{b_1,\ldots ,b_m, a_{k+1}}^+$ takes place.   To finish the proof one has to note that as the  basis of induction we can choose  any set of $r$ linearly independent  vectors from the set of vectors
$\{a_i,\  i=\overline {1,t}\}.$
\end{proof}

\begin{proposition}\label{allunja2}
A certain vector $d$ belongs to the interior of  $r$-dimen\-sional cone created by vectors $\{a_i,\  i=\overline {1,t}\},$
if and only if for the vector $d$  the representation
\begin{eqnarray} \label{allunja3}
d=\sum\limits_{i=1}^m \alpha_i b_i
\end{eqnarray}
holds, where
$\{b_1, \ldots, b_m \}$ is the generating set of vectors of the considered cone,
$\alpha_i \geq 0, $ and among coefficients  $\alpha_i,\ i=\overline {1,m}, $ there are no less than $r$ strictly positive numbers.
\end{proposition}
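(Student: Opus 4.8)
The plan is to establish the two implications of the equivalence separately. Throughout I use Proposition~\ref{nastya2} (so the generating set has $m\ge r$ vectors), the fact that $\{b_1,\dots,b_m\}$ spans the same $r$-dimensional linear subspace $L$ as $\{a_1,\dots,a_t\}$, condition~2 of Definition~\ref{allunja1} (the cone created by the $a_i$ equals the cone created by the $b_j$), and the trivial observation that each $b_j$ coincides with one of the $a_i$.

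\emph{Necessity.} Start from Definition~\ref{1mant2}: $d=\sum_{i=1}^t\beta_i a_i$ with all $\beta_i>0$. Since every $a_i$ lies in the cone created by the $b_j$, by Definition~\ref{mant0} it has a representation $a_i=\sum_{j=1}^m c_{ij}b_j$ with $c_{ij}\ge 0$, and whenever $a_i$ equals some $b_j$ I take this representation to be $a_i=b_j$ itself. Substituting into the expression for $d$ and regrouping gives $d=\sum_{j=1}^m\alpha_j b_j$ with $\alpha_j=\sum_i\beta_i c_{ij}\ge 0$; choosing for each $j$ an index $i(j)$ with $a_{i(j)}=b_j$, the term $i=i(j)$ forces $\alpha_j\ge\beta_{i(j)}>0$. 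Hence in fact \emph{all} $m\ (\ge r)$ coefficients are strictly positive, so representation~(\ref{allunja3}) holds with at least $r$ positive coefficients.

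\emph{Sufficiency.} Start from $d=\sum_{j=1}^m\alpha_j b_j$ with $\alpha_j\ge 0$ and at least $r$ of them strictly positive, renumbered so that $\alpha_1,\dots,\alpha_r>0$ and $b_1,\dots,b_r$ are linearly independent (hence span $L$). The key step is to show that $d-\varepsilon b_{j_0}$ belongs to the cone for every index $j_0$ and all small $\varepsilon>0$: if $\alpha_{j_0}>0$ one just lowers that coefficient by $\varepsilon$; if $\alpha_{j_0}=0$ one writes $b_{j_0}=\sum_{k=1}^r\lambda_k b_k\in L$ and replaces each $\alpha_k$ by $\alpha_k-\varepsilon\lambda_k$, still positive for small $\varepsilon$. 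Averaging these $m$ relations, $d-\frac{\varepsilon}{m}\sum_{j_0=1}^m b_{j_0}$ lies in the cone, so it equals $\sum_j\delta_j b_j$ with $\delta_j\ge 0$ and therefore $d=\sum_j(\delta_j+\frac{\varepsilon}{m})b_j$ is a strictly positive combination of \emph{all} the $b_j$. One further identical perturbation --- subtract a small multiple of $\sum_i a_i$, which lies in the cone, and re-expand the result in the $a_i$ --- turns this into a strictly positive combination of $a_1,\dots,a_t$, i.e.\ $d$ is interior in the sense of Definition~\ref{1mant2}.

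I expect the main obstacle to lie in the renumbering at the start of the sufficiency argument, i.e.\ in the assertion that the $b_j$ carrying positive coefficients contain $r$ linearly independent vectors. That any $r$ independent vectors among the $b_j$ automatically span $L$ is clear (the whole generating set spans the $r$-dimensional cone), so the real point is that the positive-coefficient vectors include such an independent $r$-subsystem; this is what the clause ``no less than $r$ strictly positive coefficients'' in the statement has to be read as guaranteeing, since otherwise $d$ could sit on the relative boundary of the cone and the implication would fail. Modulo this reading, everything else is the elementary bookkeeping above together with Proposition~\ref{nastya2} and Definitions~\ref{mant0} and~\ref{allunja1}.
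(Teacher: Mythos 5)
Your proof is correct under the same reading of the hypothesis that the paper itself silently adopts, but it takes a genuinely different route in both directions. For necessity, the paper decomposes the cone into all simplicial subcones spanned by $r$ linearly independent generating vectors containing $d$, applies Proposition~\ref{allapotka1} to each, and averages; the claim that the averaged representation involves at least $r$ distinct generating vectors is then justified only by the remark that otherwise $d$ would not be interior. Your substitution argument --- expand $d$ with all-positive coefficients over the $a_i$ (Definition~\ref{1mant2}), rewrite each $a_i$ in the $b_j$ and use that each $b_j$ is itself one of the $a_i$ --- is more direct and actually yields the stronger conclusion that \emph{all} $m$ coefficients can be taken positive, which the paper's averaging does not cleanly deliver. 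For sufficiency, the paper splits $d=x+y$ with $x$ interior to a simplicial subcone and invokes Theorem~\ref{jant39} to produce an all-positive expansion of $x$ over the $a_s$; your perturbation-and-averaging argument ($d-\varepsilon b_{j_0}$ stays in the cone for every $j_0$, then subtract a small multiple of $\sum_i a_i$) reaches the same conclusion by elementary means without Theorem~\ref{jant39}.

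The one substantive point is the caveat you flag yourself: the sufficiency implication is false if ``no less than $r$ strictly positive coefficients'' is read literally, since the generating vectors carrying positive coefficients may fail to contain $r$ linearly independent ones (e.g.\ for the $4$-dimensional cone over a cube, the sum of the four rays through one facet's vertices has exactly $r=4$ positive coefficients in the generating set yet lies on the boundary). The paper's proof has exactly the same gap --- it begins sufficiency by silently assuming $x$ is interior to a subcone spanned by $r$ \emph{linearly independent} generating vectors --- so your explicit statement of the required reading is an improvement in transparency rather than a defect relative to the source. With that reading fixed, both your steps (the $\alpha_k-\varepsilon\lambda_k$ bound for small $\varepsilon$, and the final re-expansion over all $a_i$ via $\sum_i a_i$ belonging to the cone) are sound.
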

\begin{proof}  Necessity.
The cone created by vectors  $\{a_i,\  i=\overline {1,t}\}$
is the union of the cones  created by all  $r$ linearly independent subsets of vectors $\{b_{i_1}, \ldots, b_{i_r} \}$ belonging to the set $ \{b_{1}, \ldots, b_{m} \}.$
Therefore, there exists a maximal number of subsets $\{b_{i_1^s}, \ldots, b_{i_r^s} \},\  s=\overline{1,w},$ of $r$ linearly independent  vectors such that  the vector $d$ belongs to every cone created by subset of vectors $\{b_{i_1^s}, \ldots, b_{i_r^s} \},\  s=\overline{1,w}.$ In accordance with the Theorem \ref{allapotka1}, for the vector $d$ belonging to the cone  created by vectors  $\{a_i,\  i=\overline {1,t}\}$ there holds the representations
\begin{eqnarray*}  d=\sum\limits_{k=1}^r \alpha_{i_k^s} b_{i_k^s}, \quad \alpha_{i_k^s} \geq 0, \quad  k=\overline{1,r}, \quad  s=\overline{1,w}.\end{eqnarray*}
From here, it follows the representation
\begin{eqnarray*}  d=\frac{1}{w}\sum\limits_{s=1}^w \sum\limits_{k=1}^r \alpha_{i_k^s} b_{i_k^s}.\end{eqnarray*}
In this representation the number of different generating vectors entering with strictly positive coefficients  is  not less  than $r. $
Because if it is not the case  then the vector $d$ do not belong to the interior of the cone
created by vectors  $\{a_i,\  i=\overline {1,t}\}.$

 Sufficiency. If  conditions of the  Proposition \ref{allunja2} hold, then the vector $d$ is the sum of two vectors, namely, a certain  vector $x$ that gets into the interior of the cone created by a subset of $r$ linearly independent vectors of the generating set of vectors  and a vector $y$ from the cone created by vectors
$\{a_i,\  i=\overline {1,t}\}.$
Let us apply the Theorem \ref{jant39} to the vector $x$ getting into the interior of the cone created by the subset of
$r$ linearly independent vectors from the generating set of vectors, taking for the vector set  $C_i$ the vector set $a_i$ under the condition that $l=t.$
Therefore, the vector $x$ has the representation
\begin{eqnarray*} x =\sum\limits _{s=1}^t \gamma_s a_s,\end{eqnarray*}
where $\gamma_s>0, \ s=\overline{1,t}.$
From here it immediately follows that the vector $d$ has the representation
\begin{eqnarray*} d =\sum\limits _{s=1}^t \gamma_s^1 a_s,
\end{eqnarray*}
where $\gamma_s^1>0, \ s=\overline{1,t}.$
\end{proof}

The algorithm of checking whether the vector $d$ belongs to the interior of the cone created by the vectors
$\{a_i,\  i=\overline {1,t}\}$ consists of building the generating  set of vectors  $\{b_i,\  i=\overline {1,m}\}\  ;$
for every  subset of $r$  linearly independent vectors from the generating set of vectors one must  use the Theorem \ref{allapotka1} to choose only those subsets of $r$ linearly independent vectors $\{b_{i_k^s}, \ k=\overline{1,r}\}$ from the  generating set of vectors for which  the representations
\begin{eqnarray} \label{allochka1000}
d=\sum\limits_{k=1}^r\delta_{i_k^s}b_{i_k^s}, \quad \delta_{i_k^s} \geq 0, \quad k=\overline{1,r},
\end{eqnarray}
are valid.

Two cases are possible:

\noindent 1) there not exists a subset of $r$ linearly independent vectors from the generating set of vectors such that the representation  (\ref{allochka1000}) is valid.

\noindent 2)  there exists a certain set of  subsets of  $r$ linearly independent vectors from the generating set of vectors such that for every subset $r$ linearly independent vectors belonging to this set all coefficients
of the  expansion  (\ref{allochka1000})  are non-negative.

In the first case, the vector $d$ does not belong to the interior of the cone created by  the vectors
$\{a_i,\  i=\overline {1,t}\}.$
In the   second one, if the number of different generating vectors that  appeared in expansions (\ref{allochka1000}) for the vector $d$ with  positive  coefficients in expansions  is not less  than  $r$
then the vector $d$ belongs to the interior of the cone created by the set of  vectors
$\{a_i,\  i=\overline {1,t}\}.$
In the opposite case the vector $d$ does not belong to the interior of the cone created by the vectors
$\{a_i,\  i=\overline {1,t}\}.$

In the next Theorem, we solve the problem of constructing  the strictly positive  solutions to the set of equations  (\ref{gonl40}) without additional  assumption figuring in the Theorem \ref{jant39}.  The rank of the set of the vectors  $\{C_i=\{c_{ki}\}_{k=1}^n,\ i=\overline {1,l}\}$ we denote  $r.$

\begin{te}\label{allupotka4}
Let   a   vector   of  final  consumption    $ \psi$  belong  to  the   interior    of   the   cone   created    by    the    set    of    vectors   $\{C_i=\{c_{ki}\}_{k=1}^n,\ i=\overline {1,l}\}.$
 Then there exists  a  set of vectors  $ \psi_s=\{\psi_k^s\}_{k=1}^n, \ s=\overline{1, 2},$ and a real number $0 \leq \alpha \leq 1$
 satisfying conditions:\\
1) every vector $ \psi_s=\{\psi_k^s\}_{k=1}^n, \ s=\overline{1, 2},$ belongs to the interior of a cone created by a set of $r$ linearly independent vectors $\{C_{i_1^s}, \ldots C_{i_r^s}\}.$\\
2) there holds the representation
$ \psi= \alpha \psi_1+ (1-\alpha)\psi_2.$ \\
A strictly positive solution  to the set of equations (\ref{gonl40})
can be represented in the form
\begin{eqnarray*}  y=\alpha y_1+ (1-\alpha)y_2, \quad y_s= \{y_i^s \}_{i=1}^l,\end{eqnarray*}
where $y_s$ is a strictly positive solution  to the set of equations
\begin{eqnarray} \label{allupotka5}
\sum\limits _{i=1}^l
c_{ki}y_i^s =\psi _k^s,\quad k=\overline{1,n}, \quad s=\overline{1,2},
\end{eqnarray}
constructed in the Theorem \ref{jant39}.
\end{te}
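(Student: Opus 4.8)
The plan is to split the statement into the construction of data $\psi_1,\psi_2,\alpha$ satisfying conditions 1) and 2), and the routine verification that a strictly positive solution of (\ref{gonl40}) is then obtained by combining the solutions $y_1,y_2$ linearly; I would treat the latter first. Throughout I use the identification, straightforward in both directions, of ``interior in the sense of Definition~\ref{1mant2}'' with ``relative interior of the polyhedral cone'': a vector has a representation over a family of vectors with all coefficients strictly positive if and only if it lies in the relative interior of the cone that family creates, and consequently the relative interior of any $r$-dimensional subcone of the $r$-dimensional cone $K$ created by $\{C_i,\ i=\overline{1,l}\}$ is contained in the interior of $K$. Granting conditions 1) and 2): each $\psi_s$ is interior to the cone created by $r$ linearly independent vectors $\{C_{i_1^s},\dots,C_{i_r^s}\}\subseteq\{C_i\}$, that cone is $r$-dimensional, so $\psi_s$ is also interior to $K$, and hence the hypotheses of Theorem~\ref{jant39} are met for $\psi_s$ with $\{C_{i_1^s},\dots,C_{i_r^s}\}$ playing the role of the required independent subset; Theorem~\ref{jant39} then produces a strictly positive solution $y_s=\{y_i^s\}_{i=1}^l$ of (\ref{allupotka5}). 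Because (\ref{gonl40}) is linear, $y:=\alpha y_1+(1-\alpha)y_2$ solves it with right-hand side $\alpha\psi_1+(1-\alpha)\psi_2=\psi$, and $y$ is strictly positive, being a convex combination of strictly positive vectors (for $\alpha\in\{0,1\}$ it is simply one of them).

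It remains to construct $\psi_1,\psi_2,\alpha$. By Definition~\ref{1mant2} there is a representation $\psi=\sum_{i=1}^l\mu_i C_i$ with all $\mu_i>0$; since $\{C_i\}$ spans the $r$-dimensional space $L=\mathrm{span}\{C_i\}$, this shows that $\psi$ belongs to the relative interior of $K$. By the Carath\'eodory property $\psi$ lies in the cone created by some linearly independent subfamily of $\{C_i\}$; enlarging that subfamily (possible because the rank of $\{C_i\}$ equals $r$) to a set $\{C_{j_1},\dots,C_{j_r}\}$ of $r$ linearly independent vectors, let $\sigma$ denote the $r$-dimensional cone it creates, so $\psi\in\sigma$. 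If $\psi$ is interior to $\sigma$ we are done by taking $\psi_1=\psi_2=\psi$ and $\alpha=\tfrac12$.

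In the remaining case $\psi$ lies on a proper face of every $r$-dimensional simplicial subcone that contains it. I would then fix a vector $\psi_1$ in the relative interior of $\sigma$ and set $\psi_2:=\psi+\varepsilon(\psi-\psi_1)$; since $\psi$ is in the relative interior of $K$ and $\psi-\psi_1\in L$, we have $\psi_2\in K$ for all sufficiently small $\varepsilon>0$, and $\psi=\tfrac{\varepsilon}{1+\varepsilon}\,\psi_1+\tfrac{1}{1+\varepsilon}\,\psi_2$ is a convex combination with $\alpha=\tfrac{\varepsilon}{1+\varepsilon}\in(0,1)$. What must still be arranged is that $\psi_2$ be interior to the cone created by some $r$ linearly independent vectors of $\{C_i\}$. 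For this I would use that $K$ is the finite union of the simplicial cones created by the $r$-element linearly independent subfamilies of $\{C_i\}$ (the decomposition already used in the proof of Proposition~\ref{allunja2}); the set $B$ of points of $K$ lying in the relative interior of none of these simplicial cones is contained in the finite union of their proper faces, hence is a finite union of sets of dimension at most $r-1$. As $\psi_1$ ranges over the relatively open $r$-dimensional relative interior of $\sigma$ and $\varepsilon$ over a small interval, the vectors $\psi+\varepsilon(\psi-\psi_1)$ form an $r$-dimensional subset of $K$, which therefore cannot be contained in $B$; choosing $\psi_1$ and $\varepsilon$ so that $\psi_2\notin B$ places $\psi_2$ in the relative interior of some simplicial cone $\sigma'=\langle C_{i_1},\dots,C_{i_r}\rangle$, that is, interior to it in the sense of Definition~\ref{1mant2}. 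Then $\psi_1\in\sigma$, $\psi_2\in\sigma'$ and this $\alpha$ satisfy conditions 1) and 2), and combined with the first paragraph this completes the argument.

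The step I expect to be the main obstacle is the last one: checking that the exceptional set $B$ is genuinely of dimension below $r$ (equivalently, that a generic point of $K$ near $\psi$ lies in the relative interior of one of the simplicial subcones), and that the two-parameter perturbation $\psi+\varepsilon(\psi-\psi_1)$ indeed sweeps out a full-dimensional portion of $K$. Everything else is bookkeeping with the Carath\'eodory reduction and with the construction in Theorem~\ref{jant39}.
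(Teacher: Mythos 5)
Your argument is correct, and it rests on the same skeleton as the paper's proof --- the decomposition $K_{C_1,\ldots,C_l}^+=\bigcup_{s}K_{C_{i_1^s},\ldots,C_{i_r^s}}^+$ into simplicial subcones spanned by $r$-element linearly independent subfamilies, followed by writing $\psi$ as a convex combination of points interior to two such subcones and invoking Theorem \ref{jant39} plus linearity of (\ref{gonl40}). Where you genuinely diverge is in the one step that carries all the content: the paper takes $\psi$ lying in the intersection of two distinct subcones $K_{C_{i_1^{s_1}},\ldots}^+\cap K_{C_{i_1^{s_2}},\ldots}^+$ and simply asserts, ``due to convexity,'' that interior vectors $\psi_1,\psi_2$ of the respective subcones exist with $\psi=\alpha\psi_1+(1-\alpha)\psi_2$; it neither treats the easy case where $\psi$ is already relatively interior to one subcone nor explains why a line through $\psi$ can be made to enter the relative interiors of both subcones. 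You replace that assertion with an actual mechanism: fix $\psi_1$ in the relative interior of one subcone $\sigma$ containing $\psi$, set $\psi_2=\psi+\varepsilon(\psi-\psi_1)$, and use the fact that the exceptional set $B$ (points of $K$ interior to no simplicial subcone) sits inside a finite union of proper faces, hence has dimension at most $r-1$, while the perturbations sweep out an $r$-dimensional relatively open set. That dimension count is sound; the only cosmetic repair needed is to restrict $\psi_1$ to a bounded relatively open patch of $\mathrm{relint}\,\sigma$ so that, for a fixed small $\varepsilon$, all the resulting $\psi_2$ stay in $\mathrm{relint}\,K$ (the affine map $\psi_1\mapsto(1+\varepsilon)\psi-\varepsilon\psi_1$ still carries that patch onto an $r$-dimensional open set, which therefore cannot lie in $B$). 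Your preliminary identification of ``interior'' in the sense of Definition \ref{1mant2} with the relative interior of the polyhedral cone, and the observation that the relative interior of an $r$-dimensional simplicial subcone is contained in the interior of $K$ (add small positive multiples of the remaining $C_j$ and absorb them into the $r$ strictly positive coefficients), are both correct and are exactly what makes the hypotheses of Theorem \ref{jant39} available for $\psi_1$ and $\psi_2$. In short, your route buys a complete justification of the step the paper leaves as an assertion, at the cost of a small amount of general-position bookkeeping.
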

\begin{proof} Without loss of generality, denote the generating set of vectors for the cone $K_{C_1, \ldots, C_l}^+$  by $C_1, \ldots,C_m,$  since  it is not the case we can renumber the set of vectors ${C_1, \ldots, C_l}.$ Since the cone $K_{C_1, \ldots, C_l}^+$  has dimension $r,$ let us consider all subsets  of vectors $\{C_{i_1^s}, \ldots C_{i_r^s}\},\ s=\overline{1,w},\ $ from the set $C_1, \ldots,C_m,$
being lenear independent.  It is evident that 
\begin{eqnarray*} K_{C_1, \ldots, C_l}^+ =\bigcup\limits_{s=1}^wK_{C_{i_1^s}, \ldots C_{i_r^s}}^+, \end{eqnarray*}
where $K_{C_{i_1^s}, \ldots C_{i_r^s}}^+$ is a nonnegative cone created by the vectors $C_{i_1^s}, \ldots C_{i_r^s}.$
Consider two subcones $K_{C_{i_1^{s_1}}, \ldots C_{i_r^{s_1}}}^+$ and $K_{C_{i_1^{s_2}}, \ldots C_{i_r^{s_2}}}^+$ from the cone $ K_{C_1, \ldots, C_l}^+$
such that
\begin{eqnarray*} K_{C_{i_1^{s_1}}, \ldots C_{i_r^{s_1}}}^+ \cap K_{C_{i_1^{s_2}}, \ldots C_{i_r^{s_2}}}^+\neq \emptyset \end{eqnarray*}
and
\begin{eqnarray*} K_{C_{i_1^{s_1}}, \ldots C_{i_r^{s_1}}}^+ \neq K_{C_{i_1^{s_2}}, \ldots C_{i_r^{s_2}}}^+. \end{eqnarray*}
If the vector $\psi$  belongs to set
\begin{eqnarray*} K_{C_{i_1^{s_1}}, \ldots C_{i_r^{s_1}}}^+ \cap K_{C_{i_1^{s_2}}, \ldots C_{i_r^{s_2}}}^+ \end{eqnarray*}
and is internal vectors for the cone $ K_{C_1, \ldots, C_l}^+,$ then there exist two internal vectors $\psi_1$ and $\psi_2$ belonging correspondingly  to cones $K_{C_{i_1^{s_1}}, \ldots C_{i_r^{s_1}}}^+$ and $K_{C_{i_1^{s_2}}, \ldots C_{i_r^{s_2}}}^+$ and a real number $ 0< \alpha< 1$ such that
\begin{eqnarray*} \psi=\alpha \psi_1+ (1-\alpha)\psi_2 \end{eqnarray*}
due to convexity of the cone  $ K_{C_1, \ldots, C_l}^+.$ 

\end{proof}

\begin{de}\label{VickTin9} Let $C_i=\{c_{s i} \}_{s=1}^n \in R_+^n, \ i=\overline{1,l},$ be a set of demand vectors and let $b_i=\{b_{s i} \}_{s=1}^n \in R_+^n, \ i=\overline{1,l},$ be a set of supply vectors. We say that  the structure of supply is agreed with the structure of demand in the strict sense
if for the matrix $B$ the representation $B=C B_1$ is true, where the matrix $B$ consists of the vectors $b_i \in R_+^n, \ i=\overline{1,l},$ as columns, and the matrix $C$ is composed from the vectors $C_i \in R_+^n, \ i=\overline{1,l},$ as columns and  $B_1$ is a square nonnegative indecomposable  matrix.
\end{de}

\begin{de}\label{1VickTin9} 
 Let $C_i=\{c_{s i} \}_{s=1}^n \in R_+^n, \ i=\overline{1,l},$ be a set of demand vectors and let $b_i=\{b_{s i} \}_{s=1}^n \in R_+^n, \ i=\overline{1,l},$ be a set of supply vectors. We say that  the structure of supply is agreed with the structure of demand in the strict sense of the rang $|I|$ if there exists a subset $I \subseteq N$  such that for the matrix $ B^I$  the representation $B^I=C^I B_1^I$ is true, where the matrix $B$ consists of the vectors $b_i^I \in R_+^{|I|}, \ i=\overline{1,l},$ as columns, and the matrix $C^I$ is composed from the vectors $C_i^I\in R_+^n, \ i=\overline{1,l},$ as columns and  $B_1^I=|b_{is}^{1,I}|_{i,s=1}^l$ is a square nonnegative indecomposable  matrix, where $b_i^I=\{b_{ki}\}_{k \in I},$ 
$C_i^I=\{c_{ki}\}_{k \in I}$ and, moreover, the inequalities 
\begin{eqnarray*}\label{2VickTin9}
 \sum\limits_{i=1}^l c_{ki}y_i^I < \sum\limits_{i=1}^l b_{ki}, \quad k \in N \setminus I, \quad  y_i^I=\sum\limits_{s=1}^l b_{is}^{1,I},
\end{eqnarray*}
are valid.
\end{de}

\begin{leme}\label{wickkteen1}
Suppose that the set of supply vectors $b_i = \{b_{s i} \}_{s=1}^n\in R_+^n, \ i=\overline{1,l}, $ belongs to   the polyhedral cone created by the set of demand vectors $\{C_i=\{c_{ki}\}_{k=1}^n \in R_+^n\ i=\overline {1,l}\}.$
Then for the matrix $B=|b_{ki}|_{k=1,i=1}^{n,l}$  created by the  columns of vectors $b_i=\{b_{ki}\}_{k=1}^n \in R_+^n, \ i=\overline{1,l}, $ the representation
\begin{eqnarray} \label{wickkteen2}
B=C B_1
\end{eqnarray}
is true,
where the matrix $C=|c_{ki}|_{k=1,i=1}^{n,l}$  is created by the  columns of vectors $C_i=\{c_{ki}\}_{k=1}^n \in R_+^n, \ i=\overline{1,l}, $ and the matrix $B_1=
|b_{m i}^1|_{m=1,i=1}^{l}$ is nonnegative. If, in addition, the set of supply vectors $b_i \in R_+^n, \ i=\overline{1,l}, $ belongs to  the interior of  the polyhedral cone created by the demand vectors $\{C_i=\{c_{ki}\}_{k=1}^n \in R_+^n\ i=\overline {1,l}\},$ the matrix $B_1$ can be chosen by strictly positive.
\end{leme}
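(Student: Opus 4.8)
The plan is to read the matrix factorization straight off the cone-membership hypothesis. By Definition~\ref{mant0}, the statement that a supply vector $b_i$ lies in the polyhedral non-negative cone generated by the demand vectors $\{C_1,\dots,C_l\}$ means precisely that there exist non-negative scalars $b_{1i}^1,\dots,b_{li}^1$ with $b_i=\sum_{m=1}^{l}b_{mi}^1 C_m$. First I would fix one such representation for each column index $i=\overline{1,l}$ (such a representation need not be unique, since the $C_m$ are not assumed linearly independent, but only existence is needed here), and then assemble these scalars into the $l\times l$ array $B_1=|b_{mi}^1|_{m=1,i=1}^{l}$, which is non-negative by construction.

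Next I would verify $B=CB_1$ entry by entry: for every $k=\overline{1,n}$ and $i=\overline{1,l}$,
\begin{eqnarray*}
(CB_1)_{ki}=\sum\limits_{m=1}^{l}c_{km}\,b_{mi}^1,
\end{eqnarray*}
and the right-hand side is exactly the $k$-th coordinate of $b_i=\sum_{m=1}^{l}b_{mi}^1 C_m$, i.e. $b_{ki}=B_{ki}$. Hence $CB_1=B$, which is the representation~(\ref{wickkteen2}), and $B_1$ is non-negative as required.

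For the strengthened conclusion I would simply invoke Definition~\ref{1mant2}: if in addition every $b_i$ belongs to the \emph{interior} of the cone created by the demand vectors, then by that very definition the coefficients in the expansion $b_i=\sum_{m=1}^{l}b_{mi}^1 C_m$ may be chosen strictly positive, $b_{mi}^1>0$ for all $m=\overline{1,l}$; selecting such coefficients column by column makes $B_1$ strictly positive. I do not anticipate a genuine obstacle, since the lemma is essentially a transcription of the cone-membership definitions into the language of matrices; the only point calling for any care is the index bookkeeping (columns of $C$ and $B$ indexed by $i$, rows by $k$, with $B_1$ square of size $l$) and the observation that non-uniqueness of the conic representation is harmless because a single representation suffices.
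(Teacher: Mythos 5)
Your proof is correct, and it is in fact cleaner than the one in the paper. The core move is the same in both: fix, for each $i$, a coefficient vector expressing $b_i$ in terms of the $C_m$, assemble these column by column into the $l\times l$ matrix $B_1$, and observe that $B=CB_1$ is then an entrywise identity. Where you differ is in how the existence of the coefficients is justified. You read it directly off Definitions~\ref{mant0} and~\ref{1mant2}: cone membership \emph{is} the existence of a non-negative coefficient vector, and interior membership \emph{is} the existence of a strictly positive one, so the lemma is essentially a matrix-language transcription of the definitions. The paper instead asserts that the first part follows from the second and then invokes Theorem~\ref{jant39} to produce the strictly positive coefficients in the interior case. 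That route is heavier than necessary: Theorem~\ref{jant39} carries an extra hypothesis (that $\psi$ lies in the interior of a subcone spanned by $r$ linearly independent generators) which is not assumed in the lemma, and the claim that the non-negative case follows from the strictly positive one would itself need an argument (a vector on the boundary of the cone admits no strictly positive expansion in general). Your direct appeal to the definitions avoids both issues, and your remark that non-uniqueness of the conic representation is harmless is exactly the right observation.
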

\begin{proof}
The first part of Lemma 
\ref{wickkteen1} follows from the second.
If every vector $b_i , \ i=\overline{1,l}, $ belongs to the interior of the polyhedral cone created by vectors $C_i, \ i=\overline{1,l}, $ then due to Theorem \ref{jant39} there exists  a strictly positive vector $y_i=\{y_{ki} \}_{k=1}^l$ such that
$$ b_{ki}=\sum\limits_{s=1}^l c_{ks} y_{si}, \quad k=\overline{1,n}.$$
Let us denote $y_{si}=b_{si}^1,$ then we obtain 
$$ b_{ki}=\sum\limits_{s=1}^l c_{ks} b_{si}^1, \quad k=\overline{1,n}, \quad i=\overline{1,l}.$$ This proves  Lemma \ref{wickkteen1}.
\end{proof}

\begin{leme}\label{10wickkteen1} Let  $b_i \in R_+^n, \ i=\overline{1,l}, $ be a  set of supply vectors and let     $\{C_i=\{c_{ki}\}_{k=1}^n \in R_+^n\ i=\overline {1,l}\},$ be a set of  demand vectors.  If for every vector $b_i, \  i=\overline {1,l},$ there exists a subset of vectors $C_{i_1}, \ldots, C_{i_k}$ such that the rank of the set of vectors $C_{i_1}, \ldots, C_{i_k}$ and the set of vectors $b_i, C_{i_1}, \ldots, C_{i_k}$ is the same, then for the matrix $B=|b_{ki}|_{k=1,i=1}^{n,l}$  created by the  columns of vectors $b_i=\{b_{ki}\}_{k=1}^n \in R_+^n, \ i=\overline{1,l}, $ the representation
\begin{eqnarray} \label{10wickkteen2}
B=C B_1
\end{eqnarray}
is true,
where the matrix $C=|c_{ki}|_{k=1,i=1}^{n,l}$  is created by the  columns of vectors $C_i=\{c_{ki}\}_{k=1}^n \in R_+^n, \ i=\overline{1,l}, $ and the matrix $B_1=
|b_{m i}^1|_{m=1,i=1}^{l}$ is a square one.  If  the vector of supply of goods $\sum\limits_{i=1}^l b_l$   belongs to   the polyhedral cone created by the demand vectors $\{C_i=\{c_{ki}\}_{k=1}^n \in R_+^n\ i=\overline {1,l}\},$ then only those  matrix $B_1=
|b_{m i}^1|_{m=1,i=1}^{l}$ in the representation (\ref{10wickkteen2}) for the matrix $B$ are important  for which $\sum\limits_{s=1}^l b_{is}^1 \geq 0, \ i=\overline{1,l}.$  In such a case  the representation 
\begin{eqnarray} \label{1001wickkteen2}
\sum\limits_{i=1}^l b_i= \sum\limits_{i=1}^l \sum\limits_{s=1}^l b_{is}^1 C_i
\end{eqnarray}
is true.
\end{leme}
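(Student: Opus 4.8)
The plan is to build the matrix $B_1$ column by column, invoking the rank hypothesis to express each supply vector $b_i$ as a (real, not necessarily nonnegative) linear combination of the demand vectors, and then to analyze which of these combinations are relevant under the extra assumption that the aggregate supply lies in the demand cone. First I would fix $i$ and let $C_{i_1},\dots,C_{i_k}$ be the subset of demand vectors whose rank coincides with the rank of $\{b_i,C_{i_1},\dots,C_{i_k}\}$. Since adjoining $b_i$ does not increase the rank, $b_i$ lies in the linear span of $C_{i_1},\dots,C_{i_k}$, hence there exist reals $b^1_{si}$ (supported on the index set $\{i_1,\dots,i_k\}$, zero elsewhere) with $b_i=\sum_{s=1}^l c_{\cdot s} b^1_{si}$, i.e. $b_{ki}=\sum_{s=1}^l c_{ks} b^1_{si}$ for $k=\overline{1,n}$. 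Assembling these columns gives the square matrix $B_1=|b^1_{mi}|_{m,i=1}^l$ and the matrix identity $B=CB_1$, which is the representation (\ref{10wickkteen2}).

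Next I would address the selection claim: that only those representations with $\sum_{s=1}^l b^1_{is}\ge 0$ for all $i$ are "important". Summing the column identities over $i$ yields $\sum_{i=1}^l b_i = \sum_{i=1}^l\sum_{s=1}^l b^1_{si} C_{\cdot}$; regrouping the double sum by the demand index gives $\sum_{i=1}^l b_i = \sum_{i=1}^l\bigl(\sum_{s=1}^l b^1_{is}\bigr) C_i$, which is formula (\ref{1001wickkteen2}) with coefficients $\beta_i:=\sum_{s=1}^l b^1_{is}$. Now invoke the hypothesis that $\sum_{i=1}^l b_i$ belongs to the polyhedral cone created by $\{C_i\}$: by Proposition \ref{allapotka1} (and representation (\ref{allapotka3})) the aggregate supply admits an expansion in the $C_i$ with \emph{nonnegative} coefficients. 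The point is that among all the representations $B=CB_1$ produced by the freedom in choosing the $b^1_{si}$ (this freedom is exactly the kernel of $C$, which is nontrivial once $l$ exceeds the rank), at least one yields row sums $\beta_i\ge 0$: one simply takes a nonnegative cone representation of $\sum_i b_i$ guaranteed by Proposition \ref{allapotka1} and distributes it among the columns consistently with the per-column span constraints. Representations with some $\beta_i<0$ cannot be what one wants for the economic interpretation (the $\beta_i$ play the role of supply-in-cost-form quantities $y^I_i$ as in Definition \ref{1VickTin9}), so one discards them; this is the content of the phrase "only those matrices $B_1$ are important for which $\sum_s b^1_{is}\ge 0$."

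The main obstacle I anticipate is the passage from "there exists a nonnegative representation of the \emph{aggregate} $\sum_i b_i$" to "there exists a choice of $B_1$ with all row sums $\beta_i$ nonnegative \emph{simultaneously with} the per-column span constraints $\mathrm{supp}(b^1_{\cdot i})\subseteq\{i_1,\dots,i_k\}$". A priori these two requirements could conflict: the nonnegative expansion of $\sum_i b_i$ dictated by Proposition \ref{allapotka1} need not be distributable back into columns each confined to its own prescribed support set. I would handle this by observing that the statement only asserts that such matrices, \emph{if they exist}, are the relevant ones — the lemma does not claim existence of a row-nonnegative $B_1$ unconditionally, only that the cone membership of $\sum_i b_i$ is the hypothesis under which one restricts attention to them, and that whenever one does restrict this way, identity (\ref{1001wickkteen2}) holds (which it does for \emph{any} $B_1$ satisfying $B=CB_1$, by the regrouping above). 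Thus the genuinely substantive parts are (a) the span argument giving $B=CB_1$ with $B_1$ square, and (b) the bookkeeping identity (\ref{1001wickkteen2}); the nonnegativity remark is a consistency observation rather than an existence theorem, so no fixed-point or feasibility argument is actually required.
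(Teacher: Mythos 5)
Your proof of the two substantive claims coincides with the paper's: the representation $B=CB_1$ is obtained exactly as in the paper, by noting that the rank hypothesis forces each $b_i$ into the linear span of its associated subset $C_{i_1},\dots,C_{i_k}$ and assembling the resulting coefficient vectors into the square matrix $B_1$; and the identity (\ref{1001wickkteen2}) is, as you say, pure bookkeeping valid for \emph{any} $B_1$ with $B=CB_1$. Where you diverge is on the claim that cone membership of $\sum_i b_i$ singles out the matrices with nonnegative row sums. You treat this as an interpretive remark and explicitly flag the distribution obstacle (a nonnegative expansion of the aggregate need not be realizable column-by-column under the per-column support constraints). The paper instead gives a positive argument in two special situations: when $n\ge l$ and the $C_i$ are linearly independent, the coefficients of $\sum_i b_i$ in the $C_i$ are unique, so the row sums of \emph{every} admissible $B_1$ are forced to equal the nonnegative cone coefficients; when $n<l$, the paper adjoins $l-n$ fictitious coordinates weighted by a small $\varepsilon>0$ to make the augmented demand matrix $C^\varepsilon$ nonsingular and runs the same uniqueness argument, concluding $y_i=\sum_s b^1_{is}\ge 0$ under the (perturbed) cone-membership assumption. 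So the paper's route buys an actual nonnegativity conclusion in the nondegenerate cases, at the price of assuming cone membership for the perturbed system rather than the original one; your route is more cautious and, to your credit, names precisely the feasibility issue that the $\varepsilon$-argument quietly assumes away. Neither treatment fully closes the general (linearly dependent, unperturbed) case, but since the lemma's operative conclusions are exactly the two you prove, your proposal is acceptable.
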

\begin{proof} To prove  Lemma \ref{10wickkteen1} we need to show the existence  of the solutions to the set of equations
\begin{eqnarray} \label{101wickkteen2}
b_i= \sum\limits_{i=1}^l y_i C_i, \quad  i=\overline{1,l}.
\end{eqnarray}
But for the every  fixed vector $b_i$ the set of equations (\ref{101wickkteen2}) has a solution  due to the Lemma \ref{10wickkteen1} conditions. This proves the first part of Lemma \ref{10wickkteen1}.
Let us consider the case as $n \geq l$ and the vectors $C_i, \ i=\overline{1,l}$ are linear  independent.   If  the vector of supply of goods $\sum\limits_{i=1}^l b_l$   belongs to   the polyhedral cone created by the demand vectors $\{C_i=\{c_{ki}\}_{k=1}^n \in R_+^n\ i=\overline {1,l}\},$ then
\begin{eqnarray} \label{102wickkteen2}
\sum\limits_{i=1}^l b_i= \sum\limits_{i=1}^l y_i C_i, \quad  y_i \geq 0,
\end{eqnarray}
but from the representation (\ref{10wickkteen2})  $\sum\limits_{i=1}^l b_i=\sum\limits_{i=1}^l y_i^1 C_i,$ where  $ y_i^1= \sum\limits_{s=1}^l b_{is}^1,$ therefore $y_i=\sum\limits_{s=1}^l b_{is}^1, \ i=\overline{1,l}.$ due to linear independence of the vectors $C_i, i=\overline {1,l}.$  Since $y_i \geq 0,$ then in this case Lemma \ref{10wickkteen1} is proved. 

If the case $n < l$ is true, then we can come to the case of
 the  demand matrix constructed by vectors-column  $C_i^\varepsilon=\{c_{ki}(\varepsilon)\}_{k=1}^l \in R_+^l,$ where $c_{ki}(\varepsilon)=c_{ki}, \ k=\overline{1,n}, \ i \leq n, c_{ki}(\varepsilon)=0,  \ k=\overline{n+1,l}, \ i \leq n, $
and  $C_i^\varepsilon=\{c_{ki}(\varepsilon)\}_{k=1}^l,$ where  $c_{ki}(\varepsilon)=c_{ki}, \ k=\overline{1,n}, \ i > n, c_{ki}(\varepsilon)=\delta_{ki}\varepsilon,  \ k=\overline{n+1,l}, \ l \geq i > n. $
Denote  the matrix $C^{\varepsilon}=|c_{ki}(\varepsilon)|_{k=1,i=1}^{l,l}$  for the sufficiently small positive $\varepsilon>0.  $  Then, the rank of the matrix $C^{\varepsilon}$ is equal $l$ for  every sufficiently small positive $\varepsilon>0.$  Let us  to put $ B^{\varepsilon}= C^{\varepsilon} B_1.$  Suppose that
\begin{eqnarray} \label{103wickkteen2}
\sum\limits_{i=1}^l b_i^{\varepsilon}= \sum\limits_{i=1}^l y_i C_i^{\varepsilon}, \quad y_i  \geq 0, \quad i=\overline{1,l},
\end{eqnarray}
but due to the representation $ B^{\varepsilon}= C^{\varepsilon} B_1$ from (\ref{103wickkteen2})  we have
$$\sum\limits_{i=1}^l\sum\limits_{s=1}^lc_{ki} b_{is}^1= \sum\limits_{i=1}^l y_i c_{ki}, \quad k=\overline{1,n},$$
\begin{eqnarray} \label{104wickkteen2}
\varepsilon y_i =\varepsilon \sum\limits_{s=1}^l b_{is}^1, \quad i=\overline{n+1,l}.
\end{eqnarray}
From (\ref{104wickkteen2})  we obtain $y_i=\sum\limits_{s=1}^l b_{is}^1\geq 0.$
The last proves Lemma \ref{10wickkteen1}. 
\end{proof} 

Let us consider the linear set of equations 

\begin{eqnarray} \label{tintinwickwick1}
C X=b,
\end{eqnarray}
where a matrix $ C $  has the dimension $n\times l,$ a vector $b$ has the dimension $n,$ $l \geq n.$  Without loss of generality, we assume that the rank of the matrix $C$ is equal $n,$ since we consider only those set of equations that are consistent. If it is not so, due to the compatibility of the system of equations, some of the equations can be discarded, leaving only the system of equations in which the rank of the remaining matrix will coincide with the number of equations. Therefore, there is at least one non degenerate minor  $|C^{m_1, \ldots, m_n}|$ of the matrix $C $ with $n$ columns having  indices $m_1<m_2  \ldots< m_n$ $1 \leq m_i\leq l, \ i=\overline{1,n}, $ such that $\det | C^{m_1, \ldots, m_n}| \neq 0.$ The general solution of the set of equations (\ref{tintinwickwick1}) can be represented in the form
$$X=f, \quad f(m_k)=X^{m_1, \ldots, m_n}(k),\quad k=\overline{1,n},$$
\begin{eqnarray} \label{tintinwickwick2}
  f(j)=d_j, \quad j \in \{1,2,\ldots,l\}\setminus \{m_1,\ldots,m_n\},
\end{eqnarray}
where
 \begin{eqnarray} \label{tintinwickwick3}
X^{m_1, \ldots, m_n}=[ C^{m_1, \ldots, m_n}]^{-1} b - \sum\limits_{s \in \{1,2,\ldots,l\}\setminus \{m_1,\ldots,m_n\}} [C^{m_1, \ldots, m_n}]^{-1} C_s d_s,
\end{eqnarray} 
and 
$[ C^{m_1, \ldots, m_n}]^{-1}$ is an inverse matrix to the matrix $C^{m_1, \ldots, m_n},$  $ d_j$ is a real number, $C_s$ is a $s$-th column of matrix $C,$ $ X^{m_1, \ldots, m_n}(k)$ is a $k$-th component of vector $X^{m_1, \ldots, m_n}, $ $f(m_k)$ is a $m_k$-th component of vector $f.$
Below we give sufficient conditions  of the existence of decomposition

\begin{eqnarray} \label{tintinwickwick4}
B=C B_1.
\end{eqnarray}
For this purpose it needs to solve $l$ set of equations

\begin{eqnarray} \label{tintinwickwick5}
C b_i^1=b_i, \quad i=\overline{1,l}.
\end{eqnarray}

\begin{te}\label{tintinwickwick0}
Suppose that the matrix $C$ has the dimensions $n\times l$ and its rank equals to $n,$ $l \geq n,$ and let the vector $\psi=\sum\limits_{i=1}^l b_i$  belong to the interior of the cone
generated by the column  vectors $C_i$ of the matrix $C.$ Moreover, there exists sub cone  generated by linear independent vectors $C_{m_1}, \ldots, C_{m_n}$ such that the vector $\psi=\sum\limits_{i=1}^l b_i$ belongs to the interior of the cone generated by the column  vectors $C_{m_i},\  i=\overline{1,n}.$ Then for the matrix $B$ the representation (\ref{tintinwickwick4})  is true such that $\sum\limits_{i=1}^l b_i^1>0,$ where $B_1=|b_1^1,\ldots,b_l^1|.$
\end{te}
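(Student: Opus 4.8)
The plan is to reduce Theorem \ref{tintinwickwick0} to Theorem \ref{jant39} by treating the vector $\psi=\sum_{i=1}^l b_i$ as the right-hand side of the system \eqref{gonl40} with $C_i$ playing the role of the spanning vectors. The hypotheses of the theorem are precisely the two conditions required by Theorem \ref{jant39}: first, $\psi$ lies in the interior of the cone generated by $\{C_i,\ i=\overline{1,l}\}$; second, there is a subset of $r=n$ linearly independent vectors $C_{m_1},\dots,C_{m_n}$ (linear independence together with rank $C=n$ forces $r=n$) such that $\psi$ lies in the interior of the subcone they generate. Hence Theorem \ref{jant39} applies and produces a strictly positive solution $y=\{y_i\}_{i=1}^l$, $y_i>0$, to $\sum_{i=1}^l C_i y_i=\psi$.

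The next step is to manufacture, for each individual supply vector $b_i$, a representation $b_i=\sum_{s=1}^l C_s b^1_{si}$ whose column sums reproduce the $y_i$ just found. For this I would invoke Lemma \ref{10wickkteen1} (or Lemma \ref{wickkteen1}) to get the mere existence of a decomposition $B=CB_1'$ with some matrix $B_1'=|b^{1\prime}_{si}|$; since rank $C=n$ and $l\ge n$, the general solution of each system $C b_i^1=b_i$ is the affine family described in \eqref{tintinwickwick2}--\eqref{tintinwickwick3}, parametrised by the $l-n$ free components $d_j$. Summing over $i$, the column sums $\sigma_s:=\sum_{i=1}^l b^{1\prime}_{si}$ satisfy $\sum_{s=1}^l C_s\sigma_s=\psi$; but $y$ is another solution of the same system, so $y-\sigma$ lies in the kernel of $C$, which has dimension $l-n$. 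Distributing the correction $y-\sigma$ among the free parameters (e.g. adding $\tfrac1l(y_s-\sigma_s)$-type shifts realised through a fixed basis of $\ker C$, which one can do because the free coordinates of $b^1_{\cdot i}$ for varying $i$ cover all of $\ker C$) yields a new matrix $B_1=|b^1_{si}|$ still satisfying $B=CB_1$ and now with $\sum_{i=1}^l b^1_{si}=y_s>0$ for every $s$. This gives $\sum_{i=1}^l b_i^1>0$ as claimed.

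The main obstacle I anticipate is bookkeeping the degrees of freedom: one must be sure that the kernel directions used to repair the column sums are genuinely available simultaneously for all $l$ systems $Cb^1_i=b_i$ without destroying the already-achieved equalities $Cb^1_i=b_i$. The clean way is to observe that the \emph{full} solution set of $B=CB_1$ in the unknown matrix $B_1$ is an affine subspace whose linear part is $(\ker C)^{\oplus l}$ acting column-wise, so the column-sum map $B_1\mapsto B_1\mathbf{1}$ sends this affine subspace onto the affine set $\{z:\ Cz=\psi\}$; since $y$ belongs to the target set, some $B_1$ in the source hits it. One then notes that strict positivity of $y$ (from Theorem \ref{jant39}) is exactly $\sum_{i=1}^l b^1_i = y>0$, completing the argument. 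A secondary point to handle carefully is the case $n<l$ versus $n=l$: when $n=l$ and the $C_i$ are independent the decomposition is unique and $y$ is forced, so there is nothing to adjust; when $n<l$ the kernel is nontrivial and the adjustment above is needed. I would present the $n<l$ case in detail and dispatch $n=l$ in one line.
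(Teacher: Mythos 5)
Your proposal is correct, and it shares the paper's skeleton (parametrise the solutions of $CB_1=B$ column by column via \eqref{tintinwickwick2}--\eqref{tintinwickwick3} and exploit the $\ker C$ freedom), but it closes the argument differently. The paper never invokes Theorem \ref{jant39}: it sums the general solutions, observes that the hypothesis ``$\psi$ interior to the cone on $C_{m_1},\dots,C_{m_n}$'' gives $[C^{m_1,\dots,m_n}]^{-1}\psi>0$ componentwise, and then chooses the free parameters so that $\sum_i d_j^i=\varepsilon_j>0$ is small enough (with an explicit bound $\varepsilon<B/((l-n)A)$) that the $m_k$-components of $\sum_i b_i^1$ stay positive. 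You instead first manufacture a strictly positive solution $y$ of $Cy=\psi$ from Theorem \ref{jant39} (whose hypotheses are indeed exactly those of the present theorem with $r=n$), and then argue that the column-sum map $B_1\mapsto B_1\mathbf{1}$ carries the affine solution set $\prod_i(v_i+\ker C)$ of $CB_1=B$ onto the full affine set $\{z: Cz=\psi\}$, so some admissible $B_1$ has column sums exactly $y>0$; your surjectivity claim is right because $\ker C+\dots+\ker C=\ker C$. Your version is more structural and even yields the slightly stronger statement that the column sums can be prescribed to equal any strictly positive solution of $Cy=\psi$, while the paper's version is self-contained at this point (it only needs Lemma \ref{mant2}-type positivity, not the machinery of Theorem \ref{jant39}) and produces an explicit quantitative choice of the free parameters. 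Your one-line dispatch of the square case $n=l$ (unique $B_1=C^{-1}B$, forced column sums $C^{-1}\psi>0$) is also fine.
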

\begin{proof}
The set of equations (\ref{tintinwickwick5}) has a solution

$$b_i^1=f_i, \quad  i=\overline{1,l}, \quad f_i(m_k)=X_i^{m_1, \ldots, m_n}(k),\quad k=\overline{1,n},$$
\begin{eqnarray} \label{tintinwickwick6}
  f_i(j)=d_j^i, \quad j \in \{1,2,\ldots,l\}\setminus \{m_1,\ldots,m_n\},
\end{eqnarray} 
where
 \begin{eqnarray} \label{tintinwickwick7}
X_i^{m_1, \ldots, m_n}=[ C^{m_1, \ldots, m_n}]^{-1} b_i - \sum\limits_{s \in \{1,2,\ldots,l\}\setminus \{m_1,\ldots,m_n\}} [C^{m_1, \ldots, m_n}]^{-1} C_s d_s^i.
\end{eqnarray} 

Then
$$\sum\limits_{i=1}^l b_i^1=$$
 \begin{eqnarray} \label{tintinwickwick9}
\left\{\begin{array}{l l l}[ C^{m_1, \ldots, m_n}]^{-1}(\sum\limits_{i=1}^l b_i)(m_k) -\\  \sum\limits_{s \in \{1,2,\ldots,l\}\setminus \{m_1,\ldots,m_n\}} [C^{m_1, \ldots, m_n}]^{-1} C_s(m_k)(\sum\limits_{i=1}^l d_s^i), \quad k=\overline{1,n},\\
\sum\limits_{i=1}^l d_j^i, \quad  j  \in \{1,2,\ldots,l\}\setminus \{m_1,\ldots,m_n\}.
\end{array} \right.
\end{eqnarray}

Due to supposition of Theorem \ref{tintinwickwick0} 
\begin{eqnarray}\label{tintinwickwick8}
[ C^{m_1, \ldots, m_n}]^{-1}(\sum\limits_{i=1}^l b_i)(k)>0, \quad k=\overline{1,n}.
\end{eqnarray}
Since real numbers $d_j^i$ are arbitrary, then for every   $j  \in \{1,2,\ldots,l\}\setminus \{m_1,\ldots,m_n\}$ let us choose $d_j^i$ such that  the inequalities 
$$ \sum\limits_{i=1}^l d_j^i>0, \quad  j  \in \{1,2,\ldots,l\}\setminus \{m_1,\ldots,m_n\}, $$
 $$ [ C^{m_1, \ldots, m_n}]^{-1}(\sum\limits_{i=1}^l b_i)(m_k) -  \sum\limits_{s \in \{1,2,\ldots,l\}\setminus \{m_1,\ldots,m_n\}} [C^{m_1, \ldots, m_n}]^{-1} C_s(m_k)(\sum\limits_{i=1}^l d_s^i)>0,$$
 $$ \quad k=\overline{1,n},$$
are true. This is possible to do if to choose $\sum\limits_{i=1}^l d_j^i>0, \ j \in \{1,2,\ldots,l\}\setminus \{m_1,\ldots,m_n\}, $ sufficiently small. Really, if to denote 
$$A=\max\limits_{1\leq k \leq n, s \in \{1,2,\ldots,l\}\setminus \{m_1,\ldots,m_n\}}|[C^{m_1, \ldots, m_n}]^{-1} C_s(m_k)|,$$
$$B=\min\limits_{1\leq k \leq n} [ C^{m_1, \ldots, m_n}]^{-1}(\sum\limits_{i=1}^l b_i)(m_k), \quad   \varepsilon= \max\limits_{ j \in \{1,2,\ldots,l\}\setminus \{m_1,\ldots,m_n\}}\sum\limits_{i=1}^l d_j^i,$$
then $0< A<\infty, \ 0<B<\infty, \ \varepsilon>0,$ and  if to put $0<\varepsilon <\frac{B
}{(l-n)A}$ for $ l>n $ then $\sum\limits_{i=1}^l b_i^1>0.$
 Theorem \ref{tintinwickwick0}  is proved.
\end{proof}

\begin{te}\label{wickkteen3}
Let the structure of supply agree with structure of demand in the strict sense  with the supply vectors   $b_i=\{b_{ki}\}_{k=1}^n \in R_+^n, \ i=\overline{1,l}, $ and the  demand vectors  $\{C_i=\{c_{ki}\}_{k=1}^n \in R_+^n\ i=\overline {1,l}\},$  and let
 $\sum\limits_{s=1}^nc_{si}>0, i=\overline{1,l}, $ \ $\sum\limits_{i=1}^l c_{si}>0, s=\overline{1,n}.$ The necessary and sufficient conditions of the solution  existence  of the set of equations
\begin{eqnarray}\label{awickkteen4}
\sum\limits_{i=1}^l c_{ki}\frac{<b_i, p>}{<C_i, p>}=\sum\limits_{i=1}^l b_{ki}, \quad k=\overline{1,n},
\end{eqnarray}
 relative to the vector $p$ is belonging of  the vector $D=\{d_i\}_{i=1}^l$ to the polyhedral cone created by vectors $C_k^T=\{c_{ki}\}_{i=1}^l, \ k=\overline{1,n},$ where
 $B=C B_1,$ $B_1=| b_{ki}^1|_{k,i=1}^l$ is a nonnegative indecomposable matrix,
the vector $D=\{d_i\}_{i=1}^l$  is a strictly positive solution to the set of equations
\begin{eqnarray}\label{awickkteen5}
\sum\limits_{k=1}^l b_{ki}^1d_k=y_i d_i, \quad y_i=\sum\limits_{k=1}^l b_{ik}^1, \quad i=\overline{1,l}.
\end{eqnarray}
\end{te}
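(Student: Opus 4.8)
The plan is to reduce the nonlinear system (\ref{awickkteen4}) to a linear eigenvalue-type problem by exploiting the factorization $B = C B_1$ guaranteed by the strict consistency hypothesis (Definition \ref{VickTin9}). First I would substitute $b_i = \sum_{k=1}^l b_{ki}^1 C_k$ into the numerator $\langle b_i, p\rangle = \sum_{k=1}^l b_{ki}^1 \langle C_k, p\rangle$. Introducing the new unknowns $d_i := \langle C_i, p\rangle$, $i = \overline{1,l}$, the left-hand side of (\ref{awickkteen4}) becomes $\sum_{i=1}^l c_{ki}\bigl(\sum_{j=1}^l b_{ji}^1 d_j\bigr)/d_i$, while the right-hand side is $\sum_{i=1}^l b_{ki} = \sum_{i=1}^l \sum_{j=1}^l b_{ji}^1 c_{ki}$. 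Since these must agree for every $k = \overline{1,n}$, and since the coefficient structure in $k$ is carried entirely by the vectors $C_i^T = \{c_{ki}\}_{i=1}^l$, it suffices (and, as I will argue, is necessary, given $\sum_i c_{si} > 0$ and the indecomposability) that the vector with components $\sum_{j=1}^l b_{ji}^1 d_j / d_i - \sum_{j=1}^l b_{ji}^1$, $i = \overline{1,l}$, be annihilated in the appropriate sense; this is exactly the condition that $D = \{d_i\}$ solve (\ref{awickkteen5}), i.e. $\sum_{k=1}^l b_{ki}^1 d_k = y_i d_i$ with $y_i = \sum_{k=1}^l b_{ik}^1$.

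Next I would handle the two directions. For sufficiency: assuming $D$ is a strictly positive solution of (\ref{awickkteen5}) and that $D$ lies in the polyhedral cone generated by $\{C_k^T, k = \overline{1,n}\}$, I would write $D = \sum_{k=1}^n \lambda_k C_k^T$ with $\lambda_k \ge 0$, i.e. $d_i = \sum_{k=1}^n \lambda_k c_{ki}$, and set $p = \{\lambda_k\}_{k=1}^n$ (rescaling so that $\langle C_i, p\rangle = d_i > 0$, which is legitimate since $\sum_s c_{si} > 0$ forces the $d_i$ positive on the relevant cone). Plugging back in and using (\ref{awickkteen5}) to replace $\sum_j b_{ji}^1 d_j$ by $y_i d_i$, the fraction $\langle b_i, p\rangle/\langle C_i,p\rangle$ collapses to $y_i = \sum_j b_{ij}^1$, and the left side of (\ref{awickkteen4}) becomes $\sum_i c_{ki} y_i = \sum_i \sum_j c_{ki} b_{ij}^1 = \sum_i b_{ki}$, which is the right side. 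For necessity: given a solution $p$ of (\ref{awickkteen4}), define $d_i := \langle C_i,p\rangle$ (these are positive because $p$ makes the fractions well-defined and $\sum_s c_{si} > 0$); then $D = \{d_i\}$ automatically lies in the cone generated by the $C_k^T$ since $d_i = \sum_k c_{ki} p_k$; and the equations (\ref{awickkteen4}), after the substitution, read $\sum_i c_{ki}\bigl(\sum_j b_{ji}^1 d_j - y_i d_i\bigr)/d_i = 0$ for all $k$. Here I would invoke the indecomposability of $B_1$ together with $\sum_i c_{si} > 0$, $\sum_i c_{si} > 0$: a Perron--Frobenius / irreducibility argument shows the only way this linear combination can vanish for all $k$ is that each bracket vanishes, giving (\ref{awickkteen5}).

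The main obstacle I expect is the necessity direction, specifically extracting $\sum_j b_{ji}^1 d_j = y_i d_i$ for each individual $i$ from the $n$ aggregated equations $\sum_i c_{ki}(\cdots)_i = 0$. The vectors $C_i^T$ need not be linearly independent (there are $l \ge n$ of them in $R^l$), so one cannot simply invert; instead one must use the sign structure — all $c_{ki} \ge 0$, the row sums and column sums of $C$ strictly positive, and the indecomposability of $B_1$ — to rule out cancellation. The cleanest route is: set $u_i := \sum_j b_{ji}^1 d_j / d_i$, a strictly positive vector, and observe that $B_1^T$-irreducibility means the only nonnegative scaling making $\sum_i c_{ki}(u_i - y_i) = 0$ consistent is $u \equiv y$; alternatively, recognize (\ref{awickkteen5}) as the statement that $D$ is a Perron eigenvector of the transpose of the (column-stochastic-up-to-scaling) matrix built from $B_1$, whose existence and uniqueness up to scale follow from indecomposability. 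I would present this as the key lemma and dispatch the rest as the routine substitutions sketched above.
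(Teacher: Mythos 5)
Your reduction of (\ref{awickkteen4}) to a linear problem via $b_i=\sum_{s}b^1_{si}C_s$ and $d_i=\langle C_i,p\rangle$, and your sufficiency argument (write $D=\sum_k\lambda_kC_k^T$, take $p=\lambda$, collapse each fraction to $y_i$), coincide with what the paper does and are correct. The gap is in the necessity direction, and it sits exactly where you flagged the main obstacle. Your proposed key lemma --- that indecomposability of $B_1$ together with positivity of the row and column sums of $C$ forces each bracket $u_i-y_i$ to vanish once $\sum_i c_{ki}(u_i-y_i)=0$ for all $k$ --- is false. Take $n=1$, $l=2$, $C_1=(c_{11})$, $C_2=(c_{12})$ with $c_{11}\neq c_{12}$, and $B_1=\left(\begin{smallmatrix}0&1\\1&0\end{smallmatrix}\right)$, which is irreducible with $y_1=y_2=1$. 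Then for any $p>0$ one has $u_1-y_1=c_{12}/c_{11}-1$ and $u_2-y_2=c_{11}/c_{12}-1$, which are nonzero of opposite signs, yet $c_{11}(u_1-y_1)+c_{12}(u_2-y_2)=0$, so the single aggregated equation holds while the individual equations (\ref{awickkteen5}) fail for $d_i=\langle C_i,p\rangle$. The structural reason no irreducibility argument can close this is the identity $\sum_i d_i(u_i-y_i)=\sum_i(B_1^Td)_i-\sum_iy_id_i=0$: the vector $u-y$ is automatically sign-changing, and the $n$ aggregated equations only place it in the kernel of the $n\times l$ matrix $C$, which is nontrivial whenever the vectors $C_i$ are linearly dependent (in particular whenever $l>n$).

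You have also misassigned the role of the indecomposability hypothesis. The paper does not use it to disaggregate the equations; it proves necessity by first assuming the $C_i$ linearly independent, in which case $\sum_i C_i\bigl[\langle b_i,p_0\rangle/\langle C_i,p_0\rangle-y_i\bigr]=0$ immediately forces each coefficient to vanish, and then treats the dependent case by padding the system with $l-n$ fictitious goods of weight $\varepsilon$ so as to restore full rank. Indecomposability enters elsewhere: via Perron--Frobenius applied to the matrix $|b^1_{si}/y_s|$ (whose adjoint problem has the all-ones solution and norm $1$) it guarantees that the eigenproblem (\ref{awickkteen5}) possesses a strictly positive solution $D$ in the first place. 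So your plan needs either the linear-independence reduction or some substitute for it; the Perron--Frobenius/irreducibility route you sketch does not supply the missing disaggregation step.
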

\begin{proof}
Two cases are possible: 1) $n\geq l$ and 2) $n < l.$ Consider the first case. \\
The necessity. First, we assume that  the vectors $ C_i, \ i=\overline{1,l},$   are linear independent. Let  there exist strictly positive solution $p_0 \in R_+^n$ to the set of equations (\ref{awickkteen4}). Since for the matrix $B$ the representation $B=C B_1$  is true, where the matrix $B_1$ is strictly positive, from the equalities 
\begin{eqnarray}\label{wickkteen6}
\sum\limits_{i=1}^l c_{ki}\frac{<b_i, p_0>}{<C_i, p_0>}=\sum\limits_{i=1}^l b_{ki}, \quad k=\overline{1,n},
\end{eqnarray}
we have the equalities
\begin{eqnarray}\label{wickkteen7}
\sum\limits_{i=1}^l c_{ki}\left[\frac{<b_i, p_0>}{<C_i, p_0>}-\sum\limits_{k=1}^l b_{ik}^1\right], \quad k=\overline{1,n}.
\end{eqnarray}
Due to the vectors $C_i, \ i=\overline{1,l},$ are  linear independent we obtain
\begin{eqnarray}\label{wickkteen8}
\frac{<b_i, p_0>}{<C_i, p_0>}-\sum\limits_{k=1}^l b_{ik}^1=0, \quad i=\overline{1,l},
\end{eqnarray}
or 
\begin{eqnarray}\label{wickkteen9}
<b_i, p_0>- y_i <C_i, p_0>=0, \quad i=\overline{1,l}.
\end{eqnarray}
But 
\begin{eqnarray}\label{wickkteen10}
 <b_i,p_0>=\sum\limits_{s=1}^l <C_s,p_0> b_{si}^1.
\end{eqnarray}
Substituting (\ref{wickkteen10}) into (\ref{wickkteen9}) we obtain
\begin{eqnarray}\label{wickkteen11}
 \sum\limits_{s=1}^l <C_s,p_0> b_{si}^1 =y_i <C_i, p_0>, \quad i=\overline{1,l}.
\end{eqnarray}
Let us put $D=\{<C_s,p_0>\}_{s=1}^l$ then $d_s=<C_s,p_0> >0, \ s=\overline{1,l}.$
From this it follows that the vector $D$ belongs to the interior of the polyhedral cone created by vectors $C_k^T=\{c_{ki}\}_{i=1}^l, \ k=\overline{1,n}.$ \\

Now, suppose that the vectors $C_i \in R_+^n,  \ i=\overline{1,l}$ are linear dependent.

In this case we come to the case 2) $n< l,$ at the beginning of the proof. 

Introduce into consideration $l-n$ fictitious goods and let us consider the  demand matrix constructed by the vectors-column  $C_i^\varepsilon=\{c_{ki}(\varepsilon)\}_{k=1}^l \in R_+^l,$ where $c_{ki}(\varepsilon)=c_{ki}, \ k=\overline{1,n}, \ i \leq n, c_{ki}(\varepsilon)=0,  \ k=\overline{n+1,l}, \ i \leq n, $
and  $C_i^\varepsilon=\{c_{ki}(\varepsilon)\}_{k=1}^l$ where $c_{ki}(\varepsilon)=c_{ki}, \ k=\overline{1,n}, \ i > n, c_{ki}(\varepsilon)=\delta_{ki}\varepsilon,  \ k=\overline{n+1,l}, \ l \geq i > n. $
Denote  the matrix $C^{\varepsilon}=|c_{ki}(\varepsilon)|_{k=1,i=1}^{l,l}$  for the sufficiently small positive $\varepsilon>0.  $  Then, the rank of the matrix $C^{\varepsilon}$ is equal $l$ for  every sufficiently small positive $\varepsilon>0.$  Let us  to put $ B^{\varepsilon}= C^{\varepsilon} B_1.$ 
Suppose that   the vector $p_0 =\{p_i^{0}\}_{i=1}^n\in R_+^n$ is a solution to the problem 
\begin{eqnarray}\label{pupwickkteen1}
\sum\limits_{i=1}^l c_{ki} \frac{<b_i, p_0>}{<C_i, p_0>}=\sum\limits_{i=1}^l b_{ki}, \quad k=\overline{1,n}.
\end{eqnarray}
Then    the vector $p_0^\varepsilon \in R_+^l$ is a solution to the problem 
\begin{eqnarray}\label{wickkteen12}
\sum\limits_{i=1}^l c_{ki}(\varepsilon) \frac{<b_i^\varepsilon, p_0^\varepsilon>}{<C_i^\varepsilon, p_0^\varepsilon>}=\sum\limits_{i=1}^l b_{ki}^\varepsilon, \quad k=\overline{1,l},
\end{eqnarray}
for every sufficiently small $\varepsilon>0,$ where we put $p_0^\varepsilon=\{p_i^{0,\varepsilon}\}_{i=1}^l, \ p_i^{0,\varepsilon}=p_i^{0}, \ i=\overline{1,n}, \
p_i^{0,\varepsilon}=0, \   i=\overline{n+1,l}.$
The equalities (\ref{wickkteen12}) can be written in the form

\begin{eqnarray}\label{pupwickkteen2}
\sum\limits_{i=1}^l c_{ki} \frac{<b_i^\varepsilon, p_0^\varepsilon>}{<C_i^\varepsilon, p_0^\varepsilon>}=\sum\limits_{i=1}^l b_{ki}, \quad k=\overline{1,n},
\end{eqnarray}
\begin{eqnarray}\label{pupwickkteen3}
 \varepsilon \frac{<b_i^\varepsilon, p_0^\varepsilon>}{<C_i^\varepsilon, p_0^\varepsilon>}=\varepsilon \sum\limits_{s=1}^l b_{is}^1, \quad i=\overline{n+1,l}.
\end{eqnarray}
On such a vector $p_0^\varepsilon$ we have 
\begin{eqnarray}\label{pupwickkteen4}
  \frac{<b_i^\varepsilon, p_0^\varepsilon>}{<C_i^\varepsilon, p_0^\varepsilon>}= \frac{<b_i, p_0>}{<C_i, p_0>}, \quad  i=\overline{1,l}.
\end{eqnarray}
Therefore, the equalities (\ref{pupwickkteen2}),  (\ref{pupwickkteen3}) for $\varepsilon>0$ is written in the form

\begin{eqnarray}\label{pupwickkteen5}
\sum\limits_{i=1}^l c_{ki} \frac{<b_i, p_0>}{<C_i, p_0>}=\sum\limits_{i=1}^l b_{ki}, \quad k=\overline{1,n},
\end{eqnarray}
\begin{eqnarray}\label{pupwickkteen6}
 \frac{<b_i, p_0>}{<C_i, p_0>}= \sum\limits_{s=1}^l b_{is}^1, \quad i=\overline{n+1,l}.
\end{eqnarray}
If to take into account the equalities
 \begin{eqnarray}\label{1wickkteen12}
\sum\limits_{i=1}^l c_{ki}\left[ \frac{<b_i, p_0>}{<C_i, p_0>}-\sum\limits_{s=1}^l b_{is}^1\right]=0, \quad k=\overline{1,n},
\end{eqnarray}
and equalities (\ref{pupwickkteen6}) we obtain
 \begin{eqnarray}\label{pupwickkteen7}
\sum\limits_{i=1}^n c_{ki}\left[ \frac{<b_i, p_0>}{<C_i, p_0>}-\sum\limits_{s=1}^l b_{is}^1\right]=0, \quad k=\overline{1,n}.
\end{eqnarray}
Since the vectors $C_i, \ i=\overline{1,n},$  are linear independent we obtain
\begin{eqnarray}\label{2wickkteen12}
  \frac{<b_i, p_0>}{<C_i, p_0>}-\sum\limits_{s=1}^l b_{is}^1=0, \quad i=\overline{1,n}.
\end{eqnarray}
Due to the equalities  $<b_i, p_0>= \sum\limits_{s=1}^l <C_s,p_0> b_{si}^1, \ i=\overline{1,l},$
 the equalities are true
\begin{eqnarray}\label{wickkteen13}
 \sum\limits_{s=1}^l <C_s,p_0> b_{si}^1 =y_i <C_i, p_0>, \quad i=\overline{1,l},
\end{eqnarray}
where we put  $\sum\limits_{s=1}^l b_{is}^1=y_i, \  i=\overline{1,l},$
$<C_i, p_0>=\sum\limits_{s=1}^n c_{si} p_s^{0}. $
The last means the needed.

Let us consider the problem 

\begin{eqnarray}\label{wickkteen18}
 \sum\limits_{s=1}^l  b_{si}^1 d_s=y_i d_i, \quad i=\overline{1,l},
\end{eqnarray}
relative to the vector $D=\{d_i\}_{i=1}^l.$ Due to the matrix $B_1$ is a strictly positive one, the conjugate problem to the problem (\ref{wickkteen18})
 \begin{eqnarray}\label{wickkteen19}
 \sum\limits_{s=1}^l  b_{si}^1 v_i=y_s v_s, \quad i=\overline{1,l},
\end{eqnarray}
 has strictly positive solution $v=\{v_i\}_{i=1}^l$ with $v_s=1, \ s=\overline{1,l}.$
If to consider the matrix $B_2=|\frac{b_{si}^1}{y_s}|_{s,i=1}^l$ and introduce the norm of matrix $A=| a_{ij}|_{i,j=1}^l$  $||A||=\max\limits_{1\leq i\leq l}\sum\limits_{j=1}^l|a_{ij}|,$ then $||B_2||=1.$
From this it follows that the problem (\ref{wickkteen18}) has strictly positive solution, due to Perron-Frobenius Theorem. 

Sufficiency. From the fact that the strictly positive vector $D=\{d_i\}_{i=1}^l$ 
solves the problem (\ref{awickkteen5}) and it belongs   to the polyhedral cone created by vectors $C_k^T=\{c_{ki}\}_{i=1}^l, \ k=\overline{1,n},$ then  there exists nonnegative vector $p^0=\{p_s^0\}_{s=1}^n$ such that 
 \begin{eqnarray}\label{wickkteen21}
 \sum\limits_{s=1}^n  c_{si} p_s^0=d_i, \quad i=\overline{1,l}.
\end{eqnarray}
Substituting $d_i, \   i=\overline{1,l},$ into  (\ref{awickkteen5}) we obtain
 \begin{eqnarray}\label{wickkteen22}
\sum\limits_{i=1}^l b_{ik}^1 \sum\limits_{s=1}^n  c_{si} p_s^0=y_k \sum\limits_{s=1}^n  c_{sk} p_s^0, \quad i=\overline{1,l},
\end{eqnarray}
or 
\begin{eqnarray}\label{wickkteen23}
 \sum\limits_{s=1}^n  b_{sk} p_s^0=y_k \sum\limits_{s=1}^n  c_{sk} p_s^0, \quad i=\overline{1,l},
\end{eqnarray}
where $y_k=\sum\limits_{i=1}^l b_{ki}^1.$
But 
 \begin{eqnarray}\label{10wickkteen24}
\sum\limits_{i=1}^l c_{ki} y_i= \sum\limits_{i=1}^l b_{ki}, \quad k=\overline{1,n}.
\end{eqnarray}
The equalities (\ref{wickkteen23}),  (\ref{10wickkteen24}) gives
 \begin{eqnarray}\label{wickkteen20}
\sum\limits_{i=1}^l c_{ki} \frac{<b_i,p_0>}{<C_i,p_0>}= \sum\limits_{i=1}^l b_{ki}, \quad k=\overline{1,n},
\end{eqnarray}
where  we put $<b_i,p_0>= \sum\limits_{s=1}^n  b_{si} p_s^0, \ <C_i,p_0>=\sum\limits_{s=1}^n  c_{si} p_s^0, \  i=\overline{1,l}.$ \\
Theorem \ref{wickkteen3} is proved. 

\end{proof}

\begin{de}\label{wickkteen24}
We say that the economy system with the set of vectors property $b_i=\{b_{ki}\}_{k=1}^n \in R_+^n, \ i=\overline{1,l}, $ and  the demand vectors  $\{C_i=\{c_{ki}\}_{k=1}^n \in R_+^n\ i=\overline {1,l}\},$  is being in the state of equilibrium if there exists a nonnegative vector $p_0 \in R_+^n$ such that the inequalities
\begin{eqnarray}\label{wickkteen25}
\sum\limits_{i=1}^l c_{ki}\frac{<b_i, p>}{<C_i, p>}\leq \sum\limits_{i=1}^l b_{ki}, \quad k=\overline{1,n},
\end{eqnarray}
are true.
\end{de}

\begin{te}\label{wickkteen26}
Suppose that the inequalities $\sum\limits_{i=1}^l c_{ki}>0, \ k=\overline{1,n}, \ \sum\limits_{k=1}^n c_{ki}>0, \ i=\overline{1,l}, \ \sum\limits_{i=1}^l b_{ki}>0, \ k=\overline{1,n},$    are true. 
The  necessary and sufficient conditions of the  existence of equilibrium state in the economy system with the set of supply  vectors $b_i=\{b_{ki}\}_{k=1}^n \in R_+^n, \ i=\overline{1,l}, $ and the set of  demand vectors  $\{C_i=\{c_{ki}\}_{k=1}^n \in R_+^n\ i=\overline {1,l}\},$
 are the existence of the nonnegative vector $y=\{y_i\}_{i=1}^l$  and nonempty subset $I \subseteq N, \ N=\{1,2,3,\ldots, n\},$ such that the equalities and inequalities 
$$  \sum\limits_{i=1}^l c_{ki}y_i = \sum\limits_{i=1}^l b_{ki}, \quad k \in I, $$
\begin{eqnarray}\label{wickkteen27}
\sum\limits_{i=1}^l c_{ki}y_i <\sum\limits_{i=1}^l b_{ki}, \quad k \in N \setminus I,
\end{eqnarray}
are valid and there exists nonnegative vector $p_0$ solving the set of equations
\begin{eqnarray}\label{wickkteen28}
\frac{<b_i, p_0>}{<C_i, p_0>}=y_i, \quad i=\overline{1,l}.
\end{eqnarray}
\end{te}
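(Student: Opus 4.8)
The plan is to recognize that Theorem \ref{wickkteen26} is, in essence, a reformulation of Definition \ref{wickkteen24}: the equilibrium inequalities become the market-clearing relations once one introduces the auxiliary variables $y_i=\langle b_i,p_0\rangle/\langle C_i,p_0\rangle$, and the index set $I$ is just the set of goods on which these relations hold with equality. The only point with genuine content is that $I$ cannot be empty, and this will come from a Walras-type identity. Throughout I would use that, by the convention of the paper, $p_0\in R_+^n$ already means $p_0\neq 0$, and that the hypotheses $\sum_{k=1}^n c_{ki}>0$ and $\sum_{i=1}^l b_{ki}>0$ keep the relevant quantities from being forced to vanish, so that an equilibrium price vector necessarily satisfies $\langle C_i,p_0\rangle>0$ for all $i$ (this positivity is already implicit in the denominators of (\ref{wickkteen25})).

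First I would prove necessity. Assume an equilibrium state exists, so there is $p_0\in R_+^n$ with $\langle C_i,p_0\rangle>0$, $i=\overline{1,l}$, and
$$\sum\limits_{i=1}^l c_{ki}\frac{\langle b_i,p_0\rangle}{\langle C_i,p_0\rangle}\le \sum\limits_{i=1}^l b_{ki},\quad k=\overline{1,n}.$$
Put $y_i=\langle b_i,p_0\rangle/\langle C_i,p_0\rangle$; since $b_i,p_0\in R_+^n$ we get $y_i\ge 0$, and (\ref{wickkteen28}) holds by construction. Define $I=\{k\in N:\ \sum_{i=1}^l c_{ki}y_i=\sum_{i=1}^l b_{ki}\}$; then the displayed inequalities are precisely the market-clearing equalities on $I$ together with the strict inequalities (\ref{wickkteen27}) on $N\setminus I$. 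To see $I\neq\emptyset$, multiply the $k$-th inequality by $p_k^0\ge 0$ and sum over $k$:
$$\sum\limits_{k=1}^n p_k^0\sum\limits_{i=1}^l c_{ki}y_i\le \sum\limits_{k=1}^n p_k^0\sum\limits_{i=1}^l b_{ki}.$$
Interchanging summations, the left side equals $\sum_{i=1}^l y_i\langle C_i,p_0\rangle=\sum_{i=1}^l\langle b_i,p_0\rangle$, and the right side equals $\sum_{i=1}^l\langle b_i,p_0\rangle$ as well, so the two sides coincide. If $I$ were empty, every inequality above would be strict; since $p_0\neq 0$ there is $k_0$ with $p_{k_0}^0>0$, and the $k_0$-th summand would make the summed inequality strict, a contradiction. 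Hence $I\neq\emptyset$.

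Then sufficiency is immediate: given a nonnegative $y$, a nonempty $I\subseteq N$, and a nonnegative $p_0$ (necessarily $\neq 0$ for (\ref{wickkteen28}) to make sense, so in particular $\langle C_i,p_0\rangle>0$) with the market-clearing equalities, (\ref{wickkteen27}), and (\ref{wickkteen28}) all valid, substitute $y_i=\langle b_i,p_0\rangle/\langle C_i,p_0\rangle$ into the first two to obtain
$$\sum\limits_{i=1}^l c_{ki}\frac{\langle b_i,p_0\rangle}{\langle C_i,p_0\rangle}=\sum\limits_{i=1}^l b_{ki}\ \ (k\in I),\qquad \sum\limits_{i=1}^l c_{ki}\frac{\langle b_i,p_0\rangle}{\langle C_i,p_0\rangle}<\sum\limits_{i=1}^l b_{ki}\ \ (k\in N\setminus I).$$
Thus (\ref{wickkteen25}) holds with $p=p_0$, i.e. the economy system is in an equilibrium state in the sense of Definition \ref{wickkteen24}.

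I expect the main obstacle to be not the bookkeeping but the two conceptual points: justifying that an equilibrium price vector must have $\langle C_i,p_0\rangle>0$ for every $i$ (so the substitution $y_i=\langle b_i,p_0\rangle/\langle C_i,p_0\rangle$ is legitimate, which is where the positivity hypotheses on the row/column sums of $C$ and $B$ enter), and the Walras-type identity that rules out $I=\emptyset$; the rest is a direct translation between the two formulations.
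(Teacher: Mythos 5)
Your proof is correct and follows essentially the same route as the paper: define $y_i=\langle b_i,p_0\rangle/\langle C_i,p_0\rangle$, take $I$ to be the equality set, and rule out $I=\emptyset$ by the Walras-type identity obtained from multiplying by $p_k^0$ and summing, with sufficiency being direct substitution. Your explicit remark that some $p_{k_0}^0>0$ is needed to turn the summed inequality strict is a small point the paper glosses over, but it is the same argument.
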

\begin{proof}
Necessity. The first supposition of Theorem $\sum\limits_{i=1}^l c_{ki}>0, \ k=\overline{1,n},$ means that all goods  in the economy system are consumed  and the second   one $  \sum\limits_{k=1}^n c_{ki}>0, \ i=\overline{1,l}, $ means that the  $i$-th consumer consumes just even if one goods. The third condition $ \sum\limits_{i=1}^l b_{ki}>0, \ k=\overline{1,n},$  means that the supply of every good is non zero.
If the economy system is at  the state of  equilibrium, then there exists nonzero vector $p_0 \in R_+^n$ such that

\begin{eqnarray}\label{wickkteen29}
\sum\limits_{i=1}^l c_{ki}\frac{<b_i, p_0>}{<C_i, p_0>}\leq \sum\limits_{i=1}^l b_{ki}, \quad k=\overline{1,n}.
\end{eqnarray}
Let us prove that there exists a nonempty set $I \subseteq N$ such that
\begin{eqnarray}\label{wickkteen30}
\sum\limits_{i=1}^l c_{ki}\frac{<b_i, p_0>}{<C_i, p_0>}= \sum\limits_{i=1}^l b_{ki}, \quad k \in I.
\end{eqnarray}
On the contrary. Suppose that the strict inequalities 
\begin{eqnarray}\label{wickkteen31}
\sum\limits_{i=1}^l c_{ki}\frac{<b_i, p_0>}{<C_i, p_0>}< \sum\limits_{i=1}^l b_{ki}, \quad k=\overline{1,n},
\end{eqnarray}
are true.
Multiplying the $k$-th inequality on $p_k^0$ and summing over all $k$ we obtain the inequality
\begin{eqnarray}\label{wickkteen32}
\sum\limits_{i=1}^l <b_i, p_0>  < \sum\limits_{i=1}^l <b_i, p_0>,
\end{eqnarray}
which is impossible. So, there exists a nonempty set $I$ such that the equalities (\ref{wickkteen27}) are true. Denoting 
\begin{eqnarray}\label{wickkteen33}
\frac{<b_i, p_0>}{<C_i, p_0>}=y_i, \quad i=\overline{1,l},
\end{eqnarray}
we obtain the needed.
The proof of sufficiency is obvious. Really, if the conditions of Theorem  \ref{wickkteen26} are true, then there exists a nonnegative vector   vector $y=\{y_i\}_{i=1}^l$  and nonempty subset $I \subseteq N, \ N=\{1,2,3,\ldots, n\},$ such that the equalities and inequalities (\ref{wickkteen27}) are true and the set of equations  (\ref{wickkteen28}) has a nonnegative solution $p_0.$ Substituting $y_i, i=\overline{1,l},$ from (\ref{wickkteen28})  into  (\ref{wickkteen27}) we obtain the needed. Theorem  \ref{wickkteen26} is proved.
\end{proof}

The next Theorem is a consequence of Theorem 3.1.3 from \cite{Gonchar2}.
\begin{te}\label{Pupsyk1}  Let $C_i \in R_+^n, \ i=\overline{1,l},$ be a set of demand vectors and let $b_i \in R_+^n, \ i=\overline{1,l},$ be a set of supply vectors. If the vectors $C_i \in R_+^n, \ i=\overline{1,l},$ are strictly positive and $\psi_k=\sum\limits_{i=1}^l b_{ki}>0, \ k=\overline{1,n},$ then there exists equilibrium price vector $p_0$ such that the inequalities
\begin{eqnarray}\label{Pupsyk2}
\sum\limits_{i=1}^l c_{ki}\frac{<b_i, p_0>}{<C_i, p_0>}\leq \sum\limits_{i=1}^l b_{ki}, \quad k=\overline{1,n},
\end{eqnarray}
are true.
\end{te}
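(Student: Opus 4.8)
The plan is to recognize the left-hand side of (\ref{Pupsyk2}) as an aggregate excess-demand map on the price simplex and to invoke the abstract equilibrium existence theorem (Theorem 3.1.3 of \cite{Gonchar2}), after checking that its hypotheses are met here. Concretely, set $\Delta=\{p\in R_+^n:\ \sum_{k=1}^n p_k=1\}$ and, for $p\in\Delta$, define
\begin{eqnarray*}
f_k(p)=\sum_{i=1}^l c_{ki}\,\frac{\langle b_i,p\rangle}{\langle C_i,p\rangle}-\psi_k,\qquad k=\overline{1,n},\qquad \psi_k=\sum_{i=1}^l b_{ki}.
\end{eqnarray*}
Since every demand vector $C_i$ is strictly positive, for $p\in\Delta$ one has $\langle C_i,p\rangle=\sum_{s=1}^n c_{si}p_s\ge \min_s c_{si}>0$, so each denominator is bounded away from zero on all of $\Delta$; hence $f=\{f_k\}_{k=1}^n$ is well defined and continuous on $\Delta$, boundary included, and is positively homogeneous of degree zero in $p$, so restricting to $\Delta$ loses nothing.

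Next I would verify the budget identity (a Walras-type law). Using $\sum_{k=1}^n c_{ki}p_k=\langle C_i,p\rangle$,
\begin{eqnarray*}
\sum_{k=1}^n p_k\sum_{i=1}^l c_{ki}\,\frac{\langle b_i,p\rangle}{\langle C_i,p\rangle}
=\sum_{i=1}^l \frac{\langle b_i,p\rangle}{\langle C_i,p\rangle}\sum_{k=1}^n c_{ki}p_k
=\sum_{i=1}^l \langle b_i,p\rangle
=\sum_{k=1}^n p_k\psi_k,
\end{eqnarray*}
so $\langle p,f(p)\rangle=0$ for every $p\in\Delta$. The hypothesis $\psi_k>0$, $k=\overline{1,n}$, says that the aggregate supply of each good is positive, which is the nondegeneracy condition required by the cited theorem.

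With these two facts — continuity of $f$ on the compact convex set $\Delta$ and the identity $\langle p,f(p)\rangle=0$ — Theorem 3.1.3 of \cite{Gonchar2} produces a point $p_0\in\Delta$ with $f_k(p_0)\le 0$ for all $k$, which is precisely (\ref{Pupsyk2}); in the terminology of Definition \ref{wickkteen24} this $p_0$ is an equilibrium price vector. A self-contained argument would instead apply Brouwer's fixed point theorem to the map $T_k(p)=\bigl(p_k+\max\{0,f_k(p)\}\bigr)\big/\bigl(1+\sum_{j=1}^n\max\{0,f_j(p)\}\bigr)$ on $\Delta$ and use $\langle p,f(p)\rangle=0$ to show that a fixed point of $T$ satisfies $f(p_0)\le 0$.

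The only real obstacle is matching the present expressions to the abstract framework of \cite{Gonchar2}: one has to see that the ratio $\langle b_i,p\rangle/\langle C_i,p\rangle$ plays the role of the normalized income term of the $i$-th agent, and that the strict positivity of the $C_i$ is exactly what rules out any blow-up or discontinuity of the demand map on $\partial\Delta$ — without it the excess-demand map need not extend continuously to the boundary, and the fixed-point step would require an additional boundary (desirability) condition.
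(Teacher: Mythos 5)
Your proof is correct, and it reaches the conclusion by a genuinely different (if related) route from the paper's. The paper also works on the price simplex and invokes Brouwer, but it does so through the normalized proportional map $G_k^{\varepsilon}(p)=f_k^{\varepsilon}(p)/\sum_i f_i^{\varepsilon}(p)$ with $f_k^{\varepsilon}(p)=\frac{1}{\psi_k}\sum_i\frac{p_kc_{ki}+\varepsilon}{\langle C_i,p\rangle+n\varepsilon}\langle b_i,p\rangle$: it finds a fixed point $p^0(\varepsilon)$ for each $\varepsilon>0$, uses the Walras-type identity to show the normalizer equals one, checks that the fixed point is strictly positive, derives the $\varepsilon$-perturbed inequalities, and only then passes to the limit $\varepsilon\to 0$ by compactness of the simplex. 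You instead apply the standard Gale--Nikaido map $T_k(p)=\bigl(p_k+\max\{0,f_k(p)\}\bigr)/\bigl(1+\sum_j\max\{0,f_j(p)\}\bigr)$ directly, with no regularization and no limiting argument, which works precisely because the strict positivity of the $C_i$ bounds every denominator $\langle C_i,p\rangle$ away from zero on the whole closed simplex; at a fixed point, multiplying $p_{0,k}\sum_j\max\{0,f_j(p_0)\}=\max\{0,f_k(p_0)\}$ by $f_k(p_0)$, summing over $k$, and using $\langle p_0,f(p_0)\rangle=0$ gives $\sum_k\max\{0,f_k(p_0)\}^2=0$, i.e.\ $f(p_0)\le 0$. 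Your version is shorter and avoids the compactness step; the paper's $\varepsilon$-regularization is the device one would actually need if the $C_i$ were merely nonnegative (so that the excess demand could blow up on the boundary of the simplex), and it has the side benefit of producing strictly positive approximate equilibria $p^0(\varepsilon)$ along the way. Your closing remark correctly identifies this as the role of the strict positivity hypothesis. The alternative of simply citing Theorem 3.1.3 of the monograph is also legitimate, since the paper itself states the theorem as a consequence of that result before giving its independent proof.
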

\begin{proof} Here, we give an independent from the proof of Theorem 3.1.3  \cite{Gonchar2} the proof of Theorem \ref{Pupsyk1}. Let us introduce  on the set $ P=\{p=\{p_1, \ldots, p_n\} \in R_+^n, \sum\limits_{i=1}^n p_i=1\}$
an  auxiliary  map $G^{\varepsilon}(p)=\{G_k^{\varepsilon}(p)\}_{k=1}^n$ transforming the set $P$ into itself, where
\begin{eqnarray}\label{Pupsyk3}
G_k^{\varepsilon}(p)=\frac{f_k^{\varepsilon}(p)}{\sum\limits_{i=1}^n f_i^{\varepsilon}(p)}, \quad  f_k^{\varepsilon}(p)=\frac{1}{\psi_k}\sum\limits_{i=1}^l \frac{ p_kc_{ki}+\varepsilon}{<C_i, p>+n \varepsilon} <b_i, p>, \quad k=\overline{1,n}.
\end{eqnarray}
It is evident that $f_k^{\varepsilon}(p), \ k=\overline{1,n},$ is a continuous map on the set $P$  due to the conditions of Theorem  \ref{Pupsyk1}, since 

\begin{eqnarray}\label{Pupsyk3}
\sum\limits_{i=1}^n f_i^{\varepsilon}(p)\geq   \frac{<\psi_k,p>}{\max\limits_{k}\psi_k}\geq \frac{\min\limits_{k}\psi_k}{\max\limits_{k}\psi_k} > 0.
\end{eqnarray}
Due to Brouwer Theorem, there exist the fixed point $p^0(\varepsilon)$ such that
\begin{eqnarray}\label{Pupsyk4}
\frac{f_k^{\varepsilon}(p^0(\varepsilon))}{\sum\limits_{i=1}^n f_i^{\varepsilon}(p^0(\varepsilon))}=p_k^0(\varepsilon),  \quad k=\overline{1,n}.
\end{eqnarray}
Multiplying the left and right hand sides of (\ref{Pupsyk4}) on $\psi_k$ and summing over $k,$ we obtain
\begin{eqnarray}\label{Pupsyk5}
\frac{\sum\limits_{k=1}^n \psi_k f_k^{\varepsilon}(p^0(\varepsilon))}{\sum\limits_{i=1}^n f_i^{\varepsilon}(p^0(\varepsilon))}=\sum\limits_{k=1}^n \psi_kp_k^0(\varepsilon),  \quad k=\overline{1,n}.
\end{eqnarray}
But
\begin{eqnarray}\label{Pupsyk6}
\sum\limits_{k=1}^n \psi_k f_k^{\varepsilon}(p^0(\varepsilon))=\sum\limits_{k=1}^n \psi_kp_k^0(\varepsilon),  \quad k=\overline{1,n},
\end{eqnarray}
therefore $\sum\limits_{i=1}^n f_i^{\varepsilon}(p^0(\varepsilon))=1.$
The equalities (\ref{Pupsyk4}) give the equalities
\begin{eqnarray}\label{Pupsyk7}
 \frac{1}{\psi_k}\sum\limits_{i=1}^l \frac{ p_k^0(\varepsilon) c_{ki}+\varepsilon}{<C_i, p^0(\varepsilon)>+n \varepsilon} <b_i, p^0(\varepsilon)>=p_k^0(\varepsilon),  \quad k=\overline{1,n}.
\end{eqnarray}
From the equalities (\ref{Pupsyk7}) it follows that
\begin{eqnarray}\label{Pupsyk8}
p_k^0(\varepsilon)\geq \frac{\varepsilon}{\psi_k (\max\limits_{k,i} c_{ki}+n \varepsilon)}  \min\limits_{k}\psi_k>0,  \quad k=\overline{1,n}.
\end{eqnarray}
The equalities (\ref{Pupsyk7}) lead to the inequalities
\begin{eqnarray}\label{Pupsyk9}
\sum\limits_{i=1}^l \frac{c_{ki}}{<C_i, p_k^0(\varepsilon)>+n \varepsilon} <b_i, p_k^0(\varepsilon)>\leq \psi_k,  \quad k=\overline{1,n}.
\end{eqnarray}
Using the compactness arguments relative to the sequence of vectors $p^0(\varepsilon),$ belonging to the set $P,$ when $\varepsilon$  tends to zero, we obtain the existence of the vector $p_0 \in P$ such that
\begin{eqnarray}\label{Pupsyk10}
\sum\limits_{i=1}^l c_{ki} \frac{ <b_i, p_0>}{<C_i, p_0>}\leq \psi_k,  \quad k=\overline{1,n}.
\end{eqnarray}
Theorem \ref{Pupsyk1} is proved.
\end{proof}

\begin{de}\label{10VickTin9} Let $C_i \in R_+^n, \ i=\overline{1,l},$ be a set of demand vectors and let $b_i \in R_+^n, \ i=\overline{1,l},$ be a set of supply vectors. We say that  the structure of supply is agreed with the structure of demand in the weak sense
if for the matrix $B$ the representation $B=C B_1$ is true, where the matrix $B$ consists of the vectors $b_i \in R_+^n, \ i=\overline{1,l},$ as columns, and the matrix $C$ is composed from the vectors $C_i \in R_+^n, \ i=\overline{1,l},$ as columns and  $B_1$ is a square matrix satisfying the conditions 
\begin{eqnarray}\label{10wickkteen33}
\sum\limits_{s=1}^lb_{is}^1 \geq 0,\quad  i=\overline{1,l} \quad B_1=|b_{is}^1|_{i, s=1}^l.
\end{eqnarray}
\end{de}

\begin{de}\label{11VickTin9} 
 Let $C_i \in R_+^n, \ i=\overline{1,l},$ be a set of demand vectors and let $b_i \in R_+^n, \ i=\overline{1,l},$ be a set of supply vectors. We say that  the structure of supply is agreed with the structure of demand in the weak sense of the rank $|I|$ if there exists a subset $I \subseteq N$  such that for the matrix $ B^I$  the representation $B^I=C^I B_1^I$ is true, where the matrix $B^I$ consists of the vectors $b_i^I \in R_+^{|I|}, \ i=\overline{1,l},$ as columns, and the matrix $C^I$ is composed from the vectors $C_i^I\in R_+^n, \ i=\overline{1,l},$ as columns and  $B_1^I$ is a square  matrix, satisfying the conditions
\begin{eqnarray}\label{101wickkteen33}
\sum\limits_{s=1}^lb_{is}^{1, I} \geq 0,\quad  i =\overline{1,l}, \quad B_1^I=|b_{is}^{1,I}|_{i,s=1}^l,
\end{eqnarray}
 where $b_i^I=\{b_{ki}\}_{k \in I},$ 
$C_i^I=\{c_{ki}\}_{k \in I}$ and, moreover, the inequalities 
\begin{eqnarray*}\label{2VickTin9}
 \sum\limits_{i=1}^l c_{ki}y_i^I < \sum\limits_{i=1}^l b_{ki}, \quad k \in N \setminus I, \quad  y_i^I=\sum\limits_{s=1}^l b_{is}^{1,I}.
\end{eqnarray*}
are valid.
\end{de}

\begin{te}\label{100wickkteen3}
Let  the structure of supply be agreed with structure of demand in the weak sense  with the supply  vectors  $b_i=\{b_{ki}\}_{k=1}^n \in R_+^n, \ i=\overline{1,l}, $ and the demand vectors  $\{C_i=\{c_{ki}\}_{k=1}^n \in R_+^n\ i=\overline {1,l}\},$  and let
 $\sum\limits_{s=1}^nc_{si}>0, i=\overline{1,l}, $ \ $\sum\limits_{i=1}^l c_{si}>0, s=\overline{1,n}.$ The necessary and sufficient conditions of the solution  existence  of the set of equations
\begin{eqnarray}\label{wickkteen4}
\sum\limits_{i=1}^l c_{ki}\frac{<b_i, p>}{<C_i, p>}=\sum\limits_{i=1}^l b_{ki}, \quad k=\overline{1,n},
\end{eqnarray}
is belonging of  the vector $D=\{d_i\}_{i=1}^l$ to the polyhedral cone created by vectors $C_k^T=\{c_{ki}\}_{i=1}^l, \ k=\overline{1,n},$ where
 $B=C B_1,$ $B_1=| b_{ki}^1|_{k,i=1}^l$ is a square matrix,
the vector $D=\{d_i\}_{i=1}^l$  is a strictly positive solution to the set of equations
\begin{eqnarray}\label{wickkteen5}
\sum\limits_{k=1}^l b_{ki}^1d_k=y_i d_i, \quad y_i=\sum\limits_{k=1}^l b_{ik}^1\geq 0, \quad i=\overline{1,l}.
\end{eqnarray}
\end{te}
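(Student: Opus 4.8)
The plan is to run the same scheme as in the proof of Theorem~\ref{wickkteen3}, separating the cases $n\ge l$ and $n<l$; the only structural change is that, since $B_1$ is now merely a square matrix with nonnegative row sums (not a nonnegative indecomposable matrix), the Perron--Frobenius step is dropped and the existence of a strictly positive $D$ is carried as \emph{part of the condition} rather than produced from indecomposability.

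\emph{Necessity.} First I would suppose that a strictly positive $p_0$ solving (\ref{wickkteen4}) exists. The weak-sense agreement supplies the decomposition $B=CB_1$ with $y_i=\sum_{k=1}^l b^1_{ik}\ge 0$, so $b_i=\sum_{s=1}^l b^1_{si}C_s$ and
\begin{eqnarray*}
\langle b_i,p_0\rangle=\sum_{s=1}^l\langle C_s,p_0\rangle\,b^1_{si},\qquad i=\overline{1,l}.
\end{eqnarray*}
Assume for the moment that the $C_i$ are linearly independent (this covers the generic subcase of $n\ge l$). Substituting $B=CB_1$ into (\ref{wickkteen4}) gives $\sum_{i=1}^l c_{ki}\bigl[\langle b_i,p_0\rangle/\langle C_i,p_0\rangle-y_i\bigr]=0$ for every $k$, hence $\langle b_i,p_0\rangle=y_i\langle C_i,p_0\rangle$, hence, setting $d_s=\langle C_s,p_0\rangle$,
\begin{eqnarray*}
\sum_{s=1}^l b^1_{si}d_s=y_id_i,\qquad i=\overline{1,l},
\end{eqnarray*}
which is exactly (\ref{wickkteen5}). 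Here each $d_s>0$, since $C_s$ is a nonzero nonnegative vector by the hypothesis $\sum_{k=1}^n c_{ks}>0$ and $p_0$ is strictly positive; moreover $D=\{d_s\}_{s=1}^l=\sum_{k=1}^n p^0_k C_k^T$ with $p^0_k\ge 0$, so $D$ lies in the polyhedral cone generated by $C_1^T,\dots,C_n^T$. If the $C_i$ are linearly dependent (in particular when $n<l$) I would reduce to the independent case by the fictitious-goods device already used in Theorem~\ref{wickkteen3}: pass to the rank-$l$ matrix $C^\varepsilon$, keep the same $B_1$ so that $B^\varepsilon=C^\varepsilon B_1$, note that $p_0^\varepsilon=(p_0,0,\dots,0)$ solves the enlarged system with $\langle b_i^\varepsilon,p_0^\varepsilon\rangle/\langle C_i^\varepsilon,p_0^\varepsilon\rangle=\langle b_i,p_0\rangle/\langle C_i,p_0\rangle$, apply the independent case, and observe that $\langle C_s^\varepsilon,p_0^\varepsilon\rangle=\langle C_s,p_0\rangle$ does not depend on $\varepsilon$, so the same $D$ and the same relations (\ref{wickkteen5}) are obtained in the limit.

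\emph{Sufficiency.} Conversely, suppose there is a strictly positive $D=\{d_i\}_{i=1}^l$ solving (\ref{wickkteen5}) and lying in the cone generated by $C_1^T,\dots,C_n^T$. Then a nonnegative $p^0$ exists with $\langle C_i,p_0\rangle=\sum_{s=1}^n c_{si}p^0_s=d_i$, $i=\overline{1,l}$. Substituting $d_k=\langle C_k,p_0\rangle$ into (\ref{wickkteen5}) and using $b_i=\sum_k b^1_{ki}C_k$ gives $\langle b_i,p_0\rangle=\sum_k b^1_{ki}\langle C_k,p_0\rangle=y_i\langle C_i,p_0\rangle$, i.e.\ $\langle b_i,p_0\rangle/\langle C_i,p_0\rangle=y_i$. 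Combining this with the row identity $\sum_{i=1}^l b_{ki}=\sum_{s=1}^l c_{ks}y_s$, itself a consequence of $B=CB_1$ and $y_s=\sum_i b^1_{si}$, yields $\sum_{i=1}^l c_{ki}\langle b_i,p_0\rangle/\langle C_i,p_0\rangle=\sum_{i=1}^l c_{ki}y_i=\sum_{i=1}^l b_{ki}$, so $p_0$ solves (\ref{wickkteen4}).

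\emph{Main obstacle.} I expect no deep difficulty: the argument is essentially a transcription of Theorem~\ref{wickkteen3} once the Perron--Frobenius component is removed. The one point requiring attention is conceptual rather than technical: because $B_1$ may now have negative entries, there is no mechanism guaranteeing a strictly positive solution $D$ of (\ref{wickkteen5}), which is exactly why that existence is built into the statement; and one must check that the fictitious-goods reduction, which perturbs only the demand matrix $C$ and never $B_1$, still delivers a $D$ with all coordinates strictly positive — a fact that follows from the positivity hypotheses $\sum_{k=1}^n c_{ki}>0$ together with the strict positivity of $p_0$.
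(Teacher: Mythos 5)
Your proposal is correct and follows essentially the same route as the paper, which simply states that the proof is the same as that of Theorem~\ref{wickkteen3}; your adaptation — reusing the substitution $B=CB_1$, the linear-independence/fictitious-goods dichotomy, and setting $d_s=\langle C_s,p_0\rangle$, while dropping the Perron--Frobenius step because strict positivity of $D$ is now hypothesised rather than derived — is precisely the intended modification. No gaps.
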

\begin{proof} The proof of Theorem \ref{100wickkteen3} is the same as Theorem \ref{wickkteen3}.
\end{proof}

Next Theorem \ref{mykwickkteen1} is a reformulation of Theorem \ref{100wickkteen3} which gives the possibility  to construct the set of supply vectors  $b_i=\{b_{ki}\}_{k=1}^n \in R_+^n, \ i=\overline{1,l}, $ on the basis  the  demand vectors $\{C_i=\{c_{ki}\}_{k=1}^n \in R_+^n\ i=\overline {1,l}\}$ under 
which there exists the equilibrium price vector clearing the market.

\begin{te}\label{mykwickkteen1}
Let the matrix $F=||f_{si}||_{s,i=1}^l$ be such that the strictly positive vector $D=\{d_i\}_{i=1}^l$ belonging to the polyhedral cone created by vectors $C_k^T=\{c_{ki}\}_{i=1}^l, \ k=\overline{1,n},$ satisfy to the set of equations
\begin{eqnarray}\label{mykwickkteen2}
\sum\limits_{k=1}^l f_{ki}d_k=y_i d_i, \quad y_i=\sum\limits_{k=1}^l f_{ik}, \quad i=\overline{1,l}.
\end{eqnarray}
Suppose that the inequalities
\begin{eqnarray}\label{mykwickkteen3}
b_i=a \sum\limits_{s=1}^l  C_s (f_{si}-\delta_{si}y_i)+C_i \geq 0, \quad  i=\overline{1,l}, \quad a \in R^1,
\end{eqnarray}
are true for a certain $a \in R^1$. Then for the supply  vectors  $b_i \in R_+^n, \ i=\overline{1,l}, $ and the demand vectors  $\{C_i=\{c_{ki}\}_{k=1}^n \in R_+^n\ i=\overline {1,l}\},$ 
the necessary and sufficient conditions  for the   existence of  an  equilibrium price vector $p_0=\{p_i^0\}_{i=1}^n$ satisfying the set of equations
\begin{eqnarray}\label{mykwickkteen4}
\sum\limits_{i=1}^l c_{ki}\frac{<b_i, p_0>}{<C_i, p_0>}=\sum\limits_{i=1}^l b_{ki}, \quad k=\overline{1,n},
\end{eqnarray}
are the fulfillment of the equalities (\ref{mykwickkteen2}) relative to the vector $D.$
\end{te}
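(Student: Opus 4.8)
The plan is to read Theorem~\ref{mykwickkteen1} as a transcription of Theorem~\ref{100wickkteen3} once the supply vectors defined by (\ref{mykwickkteen3}) are put into the form $B=CB_1$. First I would expand the right-hand side of (\ref{mykwickkteen3}) using $C_i=\sum_{s=1}^l C_s\delta_{si}$, obtaining
$$b_i=\sum_{s=1}^l C_s\bigl[a(f_{si}-\delta_{si}y_i)+\delta_{si}\bigr],\qquad i=\overline{1,l},$$
so that $B=CB_1$ with $B_1=|b_{si}^1|_{s,i=1}^l$, where $b_{si}^1=a(f_{si}-\delta_{si}y_i)+\delta_{si}$. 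The key computation at this stage is the $i$-th row sum of $B_1$: using $y_i=\sum_k f_{ik}$ one gets $\sum_{k=1}^l b_{ik}^1=a\,y_i-a\,y_i+1=1\ge 0$. Together with the hypothesis $b_i\in R_+^n$ coming from (\ref{mykwickkteen3}), this shows that the structure of supply agrees with the structure of demand in the weak sense of Definition~\ref{10VickTin9}, with exactly the square matrix $B_1$ just exhibited.

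Next I would invoke Theorem~\ref{100wickkteen3}. Under the weak consistency just established (and the positivity assumptions $\sum_s c_{si}>0$, $\sum_i c_{si}>0$ on the demand matrix inherited from the standing hypotheses), that theorem asserts that (\ref{mykwickkteen4}) has a nonnegative solution $p_0$ if and only if the vector $D=\{d_i\}_{i=1}^l$ belongs to the polyhedral cone generated by $C_k^T=\{c_{ki}\}_{i=1}^l$ and is a strictly positive solution of
$$\sum_{k=1}^l b_{ki}^1 d_k=\Bigl(\sum_{k=1}^l b_{ik}^1\Bigr)d_i,\qquad i=\overline{1,l}.$$
Both the cone membership of $D$ and its strict positivity are part of the hypotheses of Theorem~\ref{mykwickkteen1}, so the whole argument reduces to showing that this last eigen-type system is equivalent to (\ref{mykwickkteen2}).

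That reduction is the only place where care is needed. Since every row sum $\sum_k b_{ik}^1$ equals $1$, the displayed system becomes $\sum_k b_{ki}^1 d_k=d_i$; substituting $b_{ki}^1=a(f_{ki}-\delta_{ki}y_i)+\delta_{ki}$ and using $\sum_k\delta_{ki}d_k=d_i$ together with $\sum_k\delta_{ki}y_i d_k=y_i d_i$ yields
$$a\Bigl(\sum_{k=1}^l f_{ki}d_k-y_i d_i\Bigr)=0,\qquad i=\overline{1,l}.$$
For $a\neq 0$ this is precisely (\ref{mykwickkteen2}); for $a=0$ one has $b_i=C_i$, the equalities (\ref{mykwickkteen4}) hold identically, and (\ref{mykwickkteen2}) is among the hypotheses, so the equivalence is trivial. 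Combining with the previous paragraph, the existence of an equilibrium price vector $p_0$ satisfying (\ref{mykwickkteen4}) is equivalent to the fulfilment of (\ref{mykwickkteen2}) relative to $D$, which is the claim. I expect no genuine difficulty here beyond bookkeeping: the main thing to watch is keeping the row/column conventions for $B_1$ consistent so that the row-sum identity and the passage to (\ref{mykwickkteen2}) come out right, and checking that the hypotheses of Theorem~\ref{100wickkteen3} on the demand matrix are indeed available.
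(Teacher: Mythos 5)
Your proof is correct and follows essentially the same route as the paper, which itself introduces this theorem as a reformulation of Theorem~\ref{100wickkteen3}: the paper's own proof performs exactly your computations, namely the decomposition $B_1=a(F-yE)+E$ with unit row sums, the identity $\sum_{i}(b_i-C_i)=0$, and the reduction of $\sum_{k}b^1_{ki}d_k=d_i$ to (\ref{mykwickkteen2}) via $d_i=\langle p_0,C_i\rangle$. Your explicit handling of the case $a=0$ is a small improvement (the paper divides by $a$ without comment), and your parenthetical caveat that the positivity hypotheses $\sum_s c_{si}>0$, $\sum_i c_{si}>0$ of Theorem~\ref{100wickkteen3} must be carried along is a fair observation, since the paper's statement omits them too.
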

\begin{proof} 
The necessity. Suppose that vector $D$ belongs
 to the polyhedral cone created by vectors $C_k^T=\{c_{ki}\}_{i=1}^l, \ k=\overline{1,n},$ satisfy to the set of equations
(\ref{mykwickkteen2}).
 From the representation (\ref{mykwickkteen3}) for the vectors $b_i,\ i=\overline{1,l},$ it follows that $B=C B_1,$ where $B_1=a(F- y E)+E,$ or $b_{i j}^1=a(f_{i j} -\delta_{i j} y_i)+ \delta_{i j}.$ Therefore, $\sum\limits_{j=1}^l b_{i j}^1=1, \ i=\overline{1,l}.$ Further,  the equalities
\begin{eqnarray}\label{mykwickkteenpupsy5}
\sum\limits_{k=1}^l b_{ki}^1d_k= d_i, \quad i=\overline{1,l},
\end{eqnarray}
are true, since  the equalities (\ref{mykwickkteen2}) take place. Due to, that
$d_i=<p_0, C_i>$ we obtain
\begin{eqnarray}\label{mykwickkteenpupsy6}
<p_0, C_i>=<p_0,\sum\limits_{k=1}^lC_k b_{ki}^1>=<p_0, b_i>, \quad i=\overline{1,l}.
\end{eqnarray}
The last it means that (\ref{mykwickkteen4}) takes place.

The sufficiency. Let there exist nonnegative solution $p_0$ of the set of equations  (\ref{mykwickkteen4}). From the equality
\begin{eqnarray}\label{mykwickkteenpupsy2}
\sum\limits_{i=1}^l(b_i-C_i)=a \sum\limits_{s=1}^l  C_s (\sum\limits_{i=1}^lf_{si}-\sum\limits_{i=1}^l\delta_{si}y_i)=0
\end{eqnarray}
and equalities
\begin{eqnarray}\label{mykwickkteen7}
\sum\limits_{i=1}^l c_{ki}\frac{<b_i, p_0>}{<C_i, p_0>}=\sum\limits_{i=1}^l c_{ki}, \quad k=\overline{1,n}.
\end{eqnarray}
 as in the proof of Theorem \ref{wickkteen3},  we obtain
\begin{eqnarray}\label{mykwickkteen8}
\frac{<b_i, p_0>}{<C_i, p_0>}=1, \quad i=\overline{1,l}.
\end{eqnarray}
From here, we have
$$0=\frac{<b_i-C_i, p_0>}{a}=$$
\begin{eqnarray}\label{mykwickkteenpupsy1}
\sum\limits_{s=1}^l < p_0, C_s> (f_{si}-\delta_{si}y_i)=\sum\limits_{s=1}^l d_s f_{si}-d_i y_i=0, \quad  i=\overline{1,l},
\end{eqnarray}
where we put $d_i=<p_0, C_i>,\ i=\overline{1,l}.$
So, the vector $D=\{d_i\}_{i=1}^l$ belongs to the polyhedral cone created by vectors $C_k^T=\{c_{ki}\}_{i=1}^l, \ k=\overline{1,n},$ and satisfies to the set of equations (\ref{mykwickkteen2}).
Theorem \ref{mykwickkteen1} is proved.
\end{proof}
\begin{note1}\label{uaua1}
If the vectors $C_i, \  i=\overline{1,l},$ are strictly positive then for every matrix $F$
there exists a real number $a \in R^1, a\neq 0, $ such that the inequalities   (\ref{mykwickkteen3}) are valid. The last means that having the set of demand  vectors $C_i, \  i=\overline{1,l},$ we can construct the set of supply vectors $b_i, \  i=\overline{1,l},$ under which the equilibrium price vector exists in correspondence with Theorem \ref{mykwickkteen1}.
\end{note1}

\section{Recession phenomenon description.}
In this section we present the fresh look on the description of recession phenomenon 
proposed by one of the author in the papers \cite{3Gonchar}, \cite{4Gonchar}.

\begin{de}\label{Teeiin1}
Suppose that the economy system is described by the demand vectors $C_i,\in R_+^n, \ i=\overline{1,l},$  and the supply vectors $b_i \in R_+^n, \ i=\overline{1,l}.$ 
We say that the state of economy equilibrium has multiplicity of degeneracy $|N|-|I|,$ if the equilibrium price vector $p_0 \in R_+^n$ satisfy the set of equalities and inequalities
$$  \sum\limits_{i=1}^l c_{ki}\frac{<b_i, p_0>}{<C_i, p_0>}= \sum\limits_{i=1}^l b_{ki}, \quad k \in I, $$
\begin{eqnarray}\label{wickkteen34}
\sum\limits_{i=1}^l c_{ki}\frac{<b_i, p_0>}{<C_i, p_0>} <\sum\limits_{i=1}^l b_{ki}, \quad k \in N \setminus I,
\end{eqnarray}
and the solution of these set of equalities and inequalities has multiplicity of degeneracy $|N|-|I|.$ We say that the economy system is in the state of recession if the multiplicity of degeneracy $|N|-|I|$ of the equilibrium state  is sufficiently high to violate the stability the of the national currency (see \cite{3Gonchar}).
\end{de}
\begin{te}\label{wickkteen35}
Suppose that  $C_i \in R_+^n, \ i=\overline{1,l},$ is a set of demand vectors and let $b_i \in R_+^n, \ i=\overline{1,l},$  be a set of supply vectors in the considered economy system. Let the structure  of demand be agreed with the structure of supply in the weak sense
of the rank $|I|$ in correspondence with the Definition \ref{11VickTin9} , where $I \subseteq N.$ If the vectors  $b_i^{I}, C_i^{I} \in R_+^{|I|},  \ i=\overline{1,l},$ satisfy the conditions of Theorems  \ref{wickkteen26}, \ref{100wickkteen3} with $ n=|I|$ then the multiplicity of  degeneracy of the equilibrium state is not less than 
$|N|- |I|.$
\end{te}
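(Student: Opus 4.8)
The plan is to descend to the $|I|$-dimensional subeconomy furnished by the weak consistency of rank $|I|$, produce there a market-clearing price vector by Theorems \ref{wickkteen26} and \ref{100wickkteen3}, and then lift it back to $R_+^n$ by assigning price zero to the goods outside $I$.

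First I would unpack Definition \ref{11VickTin9}: the weak consistency of rank $|I|$ supplies a subset $I\subseteq N$ with $B^I=C^IB_1^I$, $B_1^I=|b_{is}^{1,I}|_{i,s=1}^l$, $\sum\limits_{s=1}^l b_{is}^{1,I}\geq 0$, together with the strict inequalities $\sum\limits_{i=1}^l c_{ki}y_i^I<\sum\limits_{i=1}^l b_{ki}$ for $k\in N\setminus I$, where $y_i^I=\sum\limits_{s=1}^l b_{is}^{1,I}$. Regarding the coordinates in $I$ as an economy in its own right, with the $|I|$-dimensional demand vectors $C_i^I$ and supply vectors $b_i^I$, the hypothesis is exactly that this economy meets the assumptions of Theorems \ref{wickkteen26} and \ref{100wickkteen3} with $n$ replaced by $|I|$. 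Applying Theorem \ref{100wickkteen3} one obtains a nonnegative vector $\tilde p_0=\{p_k^0\}_{k\in I}$ solving $\sum\limits_{i=1}^l c_{ki}\frac{<b_i^I,\tilde p_0>}{<C_i^I,\tilde p_0>}=\sum\limits_{i=1}^l b_{ki}$ for $k\in I$; moreover, reading off the construction used in the proof of Theorem \ref{wickkteen3}, the accompanying strictly positive solution $D=\{d_i\}_{i=1}^l$ of (\ref{wickkteen5}) satisfies $d_i=<C_i^I,\tilde p_0>$ and $<b_i^I,\tilde p_0>=\sum\limits_{s=1}^l d_s b_{si}^{1,I}=y_i^I d_i$, hence
\[
\frac{<b_i^I,\tilde p_0>}{<C_i^I,\tilde p_0>}=y_i^I,\qquad i=\overline{1,l}.
\]

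Next I would extend $\tilde p_0$ to $p_0\in R_+^n$ by putting $p_k^0=0$ for $k\in N\setminus I$. Since only the coordinates indexed by $I$ are nonzero, $<b_i,p_0>=<b_i^I,\tilde p_0>$ and $<C_i,p_0>=<C_i^I,\tilde p_0>$ for every $i$, so the equalities for $\tilde p_0$ give $\sum\limits_{i=1}^l c_{ki}\frac{<b_i,p_0>}{<C_i,p_0>}=\sum\limits_{i=1}^l b_{ki}$ for all $k\in I$, while for $k\in N\setminus I$
\[
\sum\limits_{i=1}^l c_{ki}\frac{<b_i,p_0>}{<C_i,p_0>}=\sum\limits_{i=1}^l c_{ki}y_i^I<\sum\limits_{i=1}^l b_{ki}
\]
by the strict inequalities of Definition \ref{11VickTin9}. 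Thus $p_0$ is an equilibrium price vector satisfying (\ref{wickkteen34}) with at least the $|N|-|I|$ goods of $N\setminus I$ in strict excess of supply, so by Definition \ref{Teeiin1} the multiplicity of degeneracy of this equilibrium state is not less than $|N|-|I|$.

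The delicate point is the passage from ``the markets indexed by $I$ clear'' to ``each ratio $\frac{<b_i^I,\tilde p_0>}{<C_i^I,\tilde p_0>}$ equals $y_i^I$'': this is not a formal consequence of the market-clearing equations alone, but is recovered, exactly as in the proof of Theorem \ref{wickkteen3}, from the fact that the cone-membership condition of Theorem \ref{100wickkteen3} delivers a strictly positive solution $D$ of the eigenvalue system (\ref{wickkteen5}), which forces $<b_i^I,\tilde p_0>=y_i^I<C_i^I,\tilde p_0>$ coordinatewise; the strict positivity of $D$ also guarantees that each $<C_i^I,\tilde p_0>$ is positive, so all the ratios are well defined.
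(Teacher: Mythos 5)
Your first half follows the paper's own route exactly: restrict to the coordinates in $I$, invoke Theorems \ref{wickkteen26} and \ref{100wickkteen3} with $n=|I|$ to obtain $\tilde p_0$ and the identities $\frac{<b_i^{I},\tilde p_0>}{<C_i^{I},\tilde p_0>}=y_i^{I}$, and then use the strict inequalities of Definition \ref{11VickTin9} to get equalities on $I$ and strict inequalities on $N\setminus I$. Your remark about why the individual ratios equal $y_i^I$ (rather than only the aggregated market-clearing sums) is also the right delicate point, and it is resolved the same way the paper resolves it.

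The gap is in the last step. You extend $\tilde p_0$ by zero and exhibit a \emph{single} price vector satisfying (\ref{wickkteen34}), and from the mere presence of $|N|-|I|$ strict inequalities you conclude that the multiplicity of degeneracy is at least $|N|-|I|$. But in Definition \ref{Teeiin1} (read together with the discussion following Theorem \ref{pupapups0}) the multiplicity of degeneracy is not the count of non-cleared markets; it is the number of undetermined components of the price vector, i.e.\ the dimension of the family of (generalized) equilibrium price vectors solving the degenerate system. One price vector does not witness that. The paper closes this by the opposite extension: it takes $p_k=p_k^1>0$ \emph{arbitrary} for $k\in N\setminus I$ and replaces the supply vectors by $b_i^0$ with $b_{ki}^0=b_{ki}$ for $k\in I$ and $b_{ki}^0=y_ic_{ki}$ for $k\in N\setminus I$, so that $<b_i^0,p>=y_i<C_i,p>$ holds identically in the free components $p_k^1$; hence the whole $(|N|-|I|)$-parameter family of vectors $p$ satisfies the equilibrium system (\ref{wickkteen42}), which is precisely the asserted lower bound on the degeneracy. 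Your zero extension is one member of that family, so the repair is short — you must show that the components indexed by $N\setminus I$ are genuinely free parameters of the solution set, not merely that you may choose them to be zero.
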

\begin{proof}
Since the conditions of Theorems \ref{100wickkteen3}, \ref{wickkteen26} with $ n=|I|$  are true, then there exists the vector $p_0 \in R_+^{|I|}$ such that

$$  \sum\limits_{i=1}^l c_{ki}\frac{<b_i^{I}, p_0>}{<C_i^{I}, p_0>}= \sum\limits_{i=1}^l b_{ki}, \quad k \in I, $$
\begin{eqnarray}\label{wickkteen36}
\sum\limits_{i=1}^l c_{ki}y_i <\sum\limits_{i=1}^l b_{ki}, \quad k \in N \setminus I,
\end{eqnarray}
where we denoted $y_i=\sum\limits_{i=1}^l b_{ki}^1,$ $B^1=|b_{ki}^1|_{k \in I, i=\overline{1,l}}$ and $ B^{I}=C^{I}B^1.$
The columns of matrix $ B^{I}$ are the vectors  $b_i^{I}$ and the columns of matrix $ C^{I}$ are the vectors  $C_i^{I}.$ But 
\begin{eqnarray}\label{wickkteen37}
\frac{<b_i^{I}, p_0>}{<C_i^{I}, p_0>}=y_i, \quad i=\overline{1,l},
\end{eqnarray}
due to Theorem \ref{100wickkteen3}. This means that the inequalities
$$  \sum\limits_{i=1}^l c_{ki}\frac{<b_i^{I}, p_0>}{<C_i^{I}, p_0>}= \sum\limits_{i=1}^l b_{ki}, \quad k \in I, $$
\begin{eqnarray}\label{wickkteen38}
\sum\limits_{i=1}^l c_{ki}\frac{<b_i^{I}, p_0>}{<C_i^{I}, p_0>}<\sum\limits_{i=1}^l b_{ki}, \quad k \in N \setminus I,
\end{eqnarray}
 are true. From the equalities (\ref{wickkteen37}) we obtain 
\begin{eqnarray}\label{wickkteen39}
<b_i^{I}, p_0>=\sum\limits_{k \in I}b_{ki}p_k^0, \quad  <C_i^{I}, p_0>=\sum\limits_{k \in I}c_{ki}p_k^0, \quad 
\sum\limits_{k \in I}b_{ki}p_k^0=y_i \sum\limits_{k \in I}c_{ki}p_k^0.
 \end{eqnarray}
Or,
\begin{eqnarray}\label{wickkteen40}
\sum\limits_{k \in I}b_{ki}p_k^0+y_i\sum\limits_{k \in N\setminus I}c_{ki}p_k^1
 =y_i \left[ \sum\limits_{k \in I}c_{ki}p_k^0+\sum\limits_{k \in N\setminus I}c_{ki}p_k^1\right],
 \end{eqnarray}
where $ p_k^1> 0, k\in N\setminus I$  and are arbitrary ones. Let us introduce the denotations $p=\{p_i\}_{i=1}^n$ putting $ p_i=p_i^0, i \in I, \ p_i=p_i^1, \ i \in N\setminus I,$ $ b_i^0=\{b_{ki}^0\}_{k=1}^n, \  b_{ki}^0=b_{ki}, k \in I, \ b_{ki}^0=y_i c_{ki}, \ k \in  N\setminus I.$  Then equalities (\ref{wickkteen40}) are written in the form
\begin{eqnarray}\label{wickkteen41}
<b_i^0,p>=y_i <C_i,p>, \quad i=\overline{1,l}.
 \end{eqnarray}
The last means that the set of equalities and inequalities
$$  \sum\limits_{i=1}^l c_{ki}\frac{<b_i^{0}, p>}{<C_i, p>}= \sum\limits_{i=1}^l b_{ki}, \quad k \in I, $$
\begin{eqnarray}\label{wickkteen42}
\sum\limits_{i=1}^l c_{ki}\frac{<b_i^{0}, p>}{<C_i, p>}<\sum\limits_{i=1}^l b_{ki}, \quad k \in N \setminus I,
\end{eqnarray}
are true. The multiplicity of degeneracy of the equilibrium state is not less than 
$|N|-|I|.$ Theorem \ref{wickkteen35} is proved.
\end{proof}

Below we give the necessary and sufficient conditions of the existence of equilibrium state. Due to  Theorem 10, we clarify the sense of multiplicity of degeneracy and recession phenomenon.
\begin{te}\label{pupapups0}
Suppose that  $C_i \in R_+^n, \ i=\overline{1,l},$ is a set of demand vectors and let $b_i \in R_+^n, \ i=\overline{1,l},$  be a set of supply vectors in the considered economy system. Let $\sum\limits_{k=1}^n c_{ki}> 0, \ \sum\limits_{i=1}^l c_{ki}> 0.$ The necessary and sufficient conditions of the existence of the equilibrium price vector $p_0$ such that 
\begin{eqnarray}\label{pupapups1}
\sum\limits_{i=1}^l c_{ki}\frac{<b_i, p_0>}{<C_i, p_0>}\leq \sum\limits_{i=1}^l b_{ki}, \quad k=\overline{1,n},
\end{eqnarray}
is an  existence of nonzero  vector $y=\{y_i\}_{i=1}^l, \ y_i \geq 0, \ i=\overline{1,l},$ such that
$$\bar \psi \leq \psi, \quad \bar \psi =\{\bar \psi_k \}_{k=1}^n, \quad \psi =\{ \psi_k \}_{k=1}^n, \quad \bar \psi_k = \sum\limits_{i=1}^l c_{ki}y_i,  \quad  \psi_k=\sum\limits_{i=1}^l b_{ki},$$
\begin{eqnarray}\label{pupapups2}
  <\bar \psi, p_0> =  < \psi, p_0>, \quad <C_i,p_0>>0, \quad i=\overline{1,l},
\end{eqnarray}
 and for the vectors $b_i, i=\overline{1,l},$ the representation 
\begin{eqnarray}\label{pupapups3}
b_i=\bar b_i+d_i, \quad \bar b_i=y_i \frac{<C_i, p_0>}{<\bar \psi, p_0>}\psi,\quad  <p_0, d_i>=0, \quad i=\overline{1,l}, \quad \sum\limits_{i=1}^l d_i=0,
\end{eqnarray}
is true.
\end{te}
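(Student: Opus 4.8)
The statement is an equivalence between an ``equilibrium price vector'' description and a ``structural'' description of the supply vectors, so the plan is to prove necessity and sufficiency separately. In both directions the bridge between the two pictures is the substitution $y_i := \langle b_i, p_0\rangle / \langle C_i, p_0\rangle$, $i=\overline{1,l}$, exactly the device already used in the proofs of Theorems \ref{wickkteen26} and \ref{wickkteen3}. Once this is in place everything reduces to two short computations: interchanging the order of summation over $k$ and $i$, and pairing the relations with $p_0$.

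For necessity I would start from an equilibrium price vector $p_0$ --- for which $\langle C_i,p_0\rangle>0$, so that the ratios make sense --- and set $y_i=\langle b_i,p_0\rangle/\langle C_i,p_0\rangle\ge 0$, $\bar\psi_k=\sum_{i=1}^l c_{ki}y_i$. Then $\bar\psi\le\psi$ is literally the equilibrium inequality (\ref{pupapups1}). The identity that does the work is
\[
\langle\bar\psi,p_0\rangle=\sum_{i=1}^l y_i\langle C_i,p_0\rangle=\sum_{i=1}^l\langle b_i,p_0\rangle=\langle\psi,p_0\rangle ,
\]
where the first equality swaps the $k$- and $i$-sums and the last uses $\psi=\sum_{i=1}^l b_i$. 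Defining $\bar b_i=y_i\langle C_i,p_0\rangle\langle\bar\psi,p_0\rangle^{-1}\psi$ and $d_i=b_i-\bar b_i$, the identity gives $\langle\bar b_i,p_0\rangle=y_i\langle C_i,p_0\rangle=\langle b_i,p_0\rangle$, hence $\langle d_i,p_0\rangle=0$; and $\sum_{i=1}^l\bar b_i=\langle\bar\psi,p_0\rangle^{-1}\psi\sum_{i=1}^l y_i\langle C_i,p_0\rangle=\psi=\sum_{i=1}^l b_i$, hence $\sum_{i=1}^l d_i=0$. That is precisely the list of conditions (\ref{pupapups2})--(\ref{pupapups3}), with $y$ nonzero unless $\langle\psi,p_0\rangle=0$.

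For sufficiency I would run the same computation backwards: given $y\ge 0$, a $p_0$ with $\langle C_i,p_0\rangle>0$, and a decomposition $b_i=\bar b_i+d_i$ with the stated properties, pairing $b_i$ with $p_0$ and using $\langle d_i,p_0\rangle=0$ together with $\langle\bar\psi,p_0\rangle=\langle\psi,p_0\rangle$ yields $\langle b_i,p_0\rangle=\langle\bar b_i,p_0\rangle=y_i\langle C_i,p_0\rangle$, i.e. $\langle b_i,p_0\rangle/\langle C_i,p_0\rangle=y_i$. Substituting into the excess-demand sum gives $\sum_{i=1}^l c_{ki}\langle b_i,p_0\rangle/\langle C_i,p_0\rangle=\bar\psi_k\le\psi_k=\sum_{i=1}^l b_{ki}$ for every $k$, so $p_0$ is an equilibrium price vector.

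I do not expect a serious obstacle; the only delicate point is the division by $\langle\bar\psi,p_0\rangle$ in the formula for $\bar b_i$. In the necessity direction $\langle\bar\psi,p_0\rangle=\langle\psi,p_0\rangle=\sum_{i=1}^l\langle b_i,p_0\rangle$, which is strictly positive precisely when $y\ne 0$; the fully degenerate case $\langle\psi,p_0\rangle=0$ forces $b_i\perp p_0$ for every $i$, and there the excess-demand inequality holds trivially, so it should be set aside. I would also recall that the very meaning of the equilibrium inequality (\ref{pupapups1}) presupposes $\langle C_i,p_0\rangle>0$ for $i=\overline{1,l}$, so that all the ratios written above are well defined on both sides of the equivalence.
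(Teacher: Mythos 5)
Your proof is correct and follows essentially the same route as the paper's: define $y_i=\langle b_i,p_0\rangle/\langle C_i,p_0\rangle$, establish the identity $\langle\bar\psi,p_0\rangle=\sum_i y_i\langle C_i,p_0\rangle=\sum_i\langle b_i,p_0\rangle=\langle\psi,p_0\rangle$, and read off $\langle d_i,p_0\rangle=0$ and $\sum_i d_i=0$, then reverse the computation for sufficiency. Your extra remark about the degenerate case $\langle\psi,p_0\rangle=0$ is a small point the paper passes over silently, but it does not change the argument.
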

\begin{proof} Necessity. Let $ p_0$ be an equilibrium price vector, then denoting 
\begin{eqnarray}\label{pupapups4}
\bar \psi=\sum\limits_{i=1}^l y_i C_i, \quad y_i=\frac{<b_i, p_0>}{<C_i, p_0>}, \quad \bar b_i=y_i \frac{<C_i, p_0>}{<\bar \psi, p_0>}\psi,  \quad  i=\overline{1,l}, 
\end{eqnarray}
we have
\begin{eqnarray}\label{pupapups5}
<p_0, d_i>=<b_i, p_0>- <\bar b_i, p_0>=<b_i, p_0> - y_i \frac{<C_i, p_0>}{<\bar \psi, p_0>} < \psi, p_0>.
\end{eqnarray}
But 
\begin{eqnarray}\label{pupapups6}
<\bar \psi, p_0>=\sum\limits_{i=1}^l y_i <C_i, p_0>=\sum\limits_{i=1}^l  <b_i, p_0>=< \psi, p_0>.
\end{eqnarray}
 Therefore, $<p_0, d_i>=0, \  i=\overline{1,l}.$ At last,
\begin{eqnarray}\label{pupapups7}
\sum\limits_{i=1}^l  d_i= \sum\limits_{i=1}^l b_i - \frac{< \sum\limits_{i=1}^l y_i C_i, p_0> } {<\bar \psi, p_0> }\psi=0, \quad \bar \psi \leq \psi.
\end{eqnarray}
Sufficiency. Let there exist nonzero nonnegative vectors  $y=\{y_i\}_{i=1}^l, \ y_i \geq 0, \ i=\overline{1,l},$ and $p_0 \in R_n^+$  such that $\bar \psi=\sum\limits_{i=1}^l y_i C_i$ satisfies the  conditions
\begin{eqnarray}\label{pupapups8}
\bar \psi \leq \psi, \quad <\bar \psi, p_0>=<\psi, p_0>, \quad  <C_i, p_0> >0 \quad  i=\overline{1,l}, 
\end{eqnarray}
and for $b_i$ the representation 
\begin{eqnarray}\label{1pupapups9}
b_i= \bar b_i + d_i=y_i \frac{<C_i, p_0>}{<\bar \psi, p_0>}\psi+ d_i,  \quad  i=\overline{1,l},  
\end{eqnarray}
is true, where $<d_i, p_0>=0, \ \sum\limits_{i=1}^ l d_i=0.$ Then $ < b_i, p_0>= y_i  < C_i, p_0>, \  i=\overline{1,l}, $ or  $ y_i=\frac{< b_i, p_0>}{< C_i, p_0>}, i=\overline{1,l}.$
But $\bar \psi \leq \psi, $ or
\begin{eqnarray}\label{pupapups10}
\sum\limits_{i=1}^l y_i  C_i =\sum\limits_{i=1}^l   C_i \frac{< b_i, p_0>}{< C_i, p_0>}\leq \psi.
\end{eqnarray}
Theorem \ref{pupapups0} is proved.
\end{proof}

If  the inequalities (\ref{pupapups1}) are true, then  there exists a nonempty subset $I \subseteq N,$ where $N=\{1,2,\ldots,n\},$ such that for  the index $k \in I$ the inequalities (\ref{pupapups1})  are transformed into the equalities. Under condition that $I=N$ we say that the complete clearing of markets takes place.
If $I\subset N$ we say about partial clearing of markets. The vector $\bar \psi$ we call the vector of real consumption. In case of partial clearing of markets the vector of real consumption $\bar \psi $ does not coincide with the vector of supply $\psi.$
As a result, the $i$-th owner of supply vector $b_i=\{b_{ki}\}_{k=1}^n$ gets income   $<b_i, p_0>=<b_i^0, p_0>, $ where $b_i^0=\{b_{ki}^0\}_{k=1}^n,$ $b_{ki}^0=b_{ki}, \ k \in I,$ $b_{ki}^0=y_i c_{ki}, \ k \in N \setminus I,$ and
the equilibrium price vector $p_0$ solves the set of equations
\begin{eqnarray}\label{pupapups11}
\sum\limits_{i=1}^l c_{ki}\frac{<b_i^0, p_0>}{<C_i, p_0>}=\bar \psi_k, \quad k=\overline{1,n},
\end{eqnarray}
\begin{eqnarray}\label{pupapups12}
\bar \psi_k=\sum\limits_{i=1}^l y_i c_{ki},\quad  k=\overline{1,n},
\end{eqnarray}
the nonzero nonnegative  vector $y=\{y_i\}_{i=1}^l$ solves the set of inequalities
\begin{eqnarray}\label{pupapups14}
\sum\limits_{i=1}^l c_{ki} y_i= \psi_k, \quad k \in I,
\end{eqnarray}
\begin{eqnarray}\label{pupapups13}
\sum\limits_{i=1}^l c_{ki} y_i< \psi_k, \quad k \in N \setminus I.
\end{eqnarray}
Not only the equilibrium price vector $p_0$ solves the set of equations (\ref{pupapups11}).
The price vector $p$ solving the set of equations (\ref{pupapups11})
 has the following structure $p=\{p_i\}_{i=1}^n$ $p_i=p_i^0, i \in I$ $p_i=p_i^1, i \in N\setminus I,$ where $p_i^1, i \in N\setminus I$ are arbitrary nonnegative real numbers. 
So, any vector $p$ having the above structure clears the market with the demand vectors  $C_i \in R_+^n, \ i=\overline{1,l},$ and  supply vectors  $b_i ^0 \in R_+^n, \ i=\overline{1,l}.$ 
But the set of equations (\ref{pupapups11}) does not determine uniquely the prices of goods that belongs to the set  $N \setminus I$ in spite of that the demand for these goods is non zero.  The cause is that the needs of consumers are completely satisfied on these goods. To determine the prices for goods from the set $N\setminus I$ it needs to remove the degeneracy that is in the set of equations 
(\ref{pupapups11}). For this purpose it is need to add the infinitely small  term removing the degeneracy. This term should take into account the technologies of production of these goods and fiscal policy. For example, if the map $T_i(p)=\{t_{ki}(p)\}_{k=1}^n, \ t_{ki}(p)=0, \ k \in I, \ i=\overline{1,l}$ takes into account the technologies of production of goods from the set $N \setminus I$ and fiscal policy, then the set of equations

\begin{eqnarray}\label{pupapups15}
\sum\limits_{i=1}^l c_{ki}\frac{<b_i^0, p>}{<C_i+\varepsilon T_i(p), p>}=\bar \psi_k, \quad k=\overline{1,n},
\end{eqnarray}
\begin{eqnarray}\label{pupapups16}
\bar \psi_k=\sum\limits_{i=1}^l y_i c_{ki},\quad  k=\overline{1,n},
\end{eqnarray}
determines the prices for goods from the set $ N\setminus I$
under condition that set of equations 
\begin{eqnarray}\label{pupapups17}
<T_i(p),p>=0, \quad  i=\overline{1,l},
\end{eqnarray}
determines the vector $p_1=\{p_i^1\}_{i \in N\setminus I},$ solving the set of equations (\ref{pupapups17}). Here $\varepsilon>0$ and it is very small. Tending 
$\varepsilon>0$ to zero we obtain needed solution. . It may happen that the specified procedure is not applicable. In this case, the prices for these goods will be determined by agreements. The vector $\bar\psi=\{\bar\psi_k\}_{k=1}^n $ we will call the vector of real consumption.

In the case, as $I \subset N, $ the  price vector  $p$ solving  the set of equations (\ref{pupapups11}) and  taking into account the procedure for determining the ambiguous part of the vector components, stated above, will be called the generalized equilibrium price vector.

So, real degeneracy of solutions has the set of equations (\ref{pupapups11}) and the generalized equilibrium price vector solves the set of equations (\ref{pupapups11}).  The  quantity of goods $\psi_k - \bar \psi_k, \ k \in N\setminus I$    does not find a consumer. To characterize this we introduce the  parameter of recession level  
$$ R=\frac{<\psi - \bar \psi, p>}{<\psi,p>}, $$
where $p$ is a generalized equilibrium price vector solving the set of equations (\ref{pupapups11}).

\section{Existence of the ideal equilibrium state in the international trade.}

In this section we give an application  of the result obtained in the previous sections to the problem of existence of the ideal equilibrium state. The model of international international trade is characterized by the supply vectors $b_i=\{b_{k i}\}_{k=1}^n \in R_+^n$ and demand vectors $C_i=\{c_{k i}\}_{k=1}^n \in R_+^n$  satisfying the conditions
\begin{eqnarray}\label{Tinnawickpupsy1}
b_i -C_i=f_i,\quad  i=\overline{1,l}, \quad \sum\limits_{i=1}^l f_i=0.
\end{eqnarray}

\begin{de}\label{Tinnawickpupsy5}
We say that international trade is in the ideal equilibrium state if there exists nonnegative price vector $p_0=\{p_{k }^0\}_{k=1}^n \in R_+^n$ such that 
\begin{eqnarray}\label{Tinnawickpupsy6}
<p_0, b_i -C_i>=0,\quad  <p_0, C_i>>0, \quad  i=\overline{1,l}.
\end{eqnarray}
\end{de}

In next Theorem we give the necessary and sufficient conditions under which in  international trade there exists the ideal equilibrium state.
\begin{te}\label{Tinnawickpupsy2}
Suppose that  the supply vectors $b_i=\{b_{k i}\}_{k=1}^n \in R_+^n$ and the demand vectors $C_i=\{c_{k i}\}_{k=1}^n \in R_+^n$ satisfy the conditions (\ref{Tinnawickpupsy1}). If for the matrix $B$ the representation 
\begin{eqnarray}\label{Tinnawickpupsy3}
B=C B_1 
\end{eqnarray}
is true,   where $ B=||b_{k i} ||_{k=1, i=1}^{n,l},$  $ C=||c_{k i} ||_{k=1, i=1}^{n,l},$ $ B_1=||b_{k i}^1 ||_{k=1, i=1}^{l},$ then the necessary and sufficient conditions of the existence of the ideal equilibrium state is the  existence of strictly positive 
solution $D=\{d_i\}_{i=1}^l$ to the set of equations
\begin{eqnarray}\label{Tinnawickpupsy4}
\sum\limits_{k=1}^l d_k b_{ki}^1=d_i, \quad i=\overline{1,l},
\end{eqnarray}
which belongs to the cone created by vectors $ C_k^T=\{c_{k i}\}_{i=1}^l, \ k=\overline{1,n}.$ 
\end{te}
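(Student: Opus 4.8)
The plan is to prove this as the specialization of Theorem~\ref{wickkteen3} (equivalently Theorem~\ref{100wickkteen3}) to the case where all market-clearing ratios equal $1$: the ideal equilibrium condition $\langle p_0, b_i - C_i\rangle = 0$, $\langle p_0, C_i\rangle>0$ of Definition~\ref{Tinnawickpupsy5} is exactly $\tfrac{\langle b_i, p_0\rangle}{\langle C_i, p_0\rangle} = 1$ for every $i$. The two implications are then short computations built on the factorization $B = CB_1$, which in column form reads $b_i = \sum_{s=1}^l b_{si}^1 C_s$, and on the substitution $d_i = \langle C_i, p_0\rangle$.

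For necessity, suppose an ideal equilibrium price vector $p_0 = \{p_k^0\}_{k=1}^n \in R_+^n$ exists. Put $d_i = \langle C_i, p_0\rangle$; the condition $\langle p_0, C_i\rangle>0$ gives $d_i>0$, so $D = \{d_i\}_{i=1}^l$ is strictly positive. Writing $d_i = \sum_{k=1}^n c_{ki} p_k^0$ shows $D = \sum_{k=1}^n p_k^0 C_k^T$ with $p_k^0 \geq 0$, so $D$ lies in the cone created by the vectors $C_k^T=\{c_{ki}\}_{i=1}^l$. Finally, combining $b_i = \sum_s b_{si}^1 C_s$ with $\langle b_i, p_0\rangle = \langle C_i, p_0\rangle$ yields $\sum_{s=1}^l b_{si}^1 d_s = d_i$, i.e. $D$ solves (\ref{Tinnawickpupsy4}).

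For sufficiency, let $D = \{d_i\}_{i=1}^l$ be a strictly positive solution of (\ref{Tinnawickpupsy4}) lying in the cone created by $\{C_k^T\}$. Then $d_i = \sum_{k=1}^n c_{ki} p_k^0 = \langle C_i, p_0\rangle$ for some $p_0 = \{p_k^0\}_{k=1}^n \in R_+^n$, and $\langle C_i, p_0\rangle = d_i > 0$. Using the factorization once more, $\langle b_i, p_0\rangle = \sum_{s=1}^l b_{si}^1\langle C_s, p_0\rangle = \sum_{s=1}^l b_{si}^1 d_s = d_i = \langle C_i, p_0\rangle$, hence $\langle p_0, b_i - C_i\rangle = 0$ for all $i$; together with $\langle C_i, p_0\rangle > 0$ this is precisely the ideal equilibrium state.

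The argument is essentially bookkeeping, so there is no genuinely hard step; the only points requiring care are the translation between ``$D$ belongs to the cone created by the $C_k^T$'' and ``$d_i = \langle C_i,p_0\rangle$ for a common nonnegative $p_0$'', and the matching of the strict positivity of $D$ with the strict inequalities $\langle C_i,p_0\rangle>0$. One should also note that no sign hypothesis on $B_1$ is used --- unlike in Lemma~\ref{wickkteen1}, the supply vectors here need not lie in the demand cone, since (\ref{Tinnawickpupsy1}) only forces $\sum_i b_i = \sum_i C_i$ --- and that the constraint $\sum_i f_i = 0$ is consistent with, but not actually invoked in, the displayed equivalence.
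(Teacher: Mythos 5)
Your proof is correct and follows essentially the same route as the paper's: both directions rest on the substitution $d_i=\langle C_i,p_0\rangle$ together with the column form $b_i=\sum_{s=1}^l b_{si}^1 C_s$ of the factorization $B=CB_1$, exactly as in the paper's argument. No substantive differences.
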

\begin{proof} 
Necessity. Let the ideal equilibrium state exist, then since from the representation 
(\ref{Tinnawickpupsy3}) we have $b_i=\sum\limits_{s=1}^lC_s b_{s i}^1$ or 
$<p_0,b_i>=\sum\limits_{s=1}^l<p_0, C_s > b_{s i}^1.$ Substituting $<p_0,b_i>$ into the equalities
\begin{eqnarray}\label{Tinnawickpupsy7}
<p_0, b_i>=    < p_0,C_i>,\quad  <p_0, C_i>>0, \quad  i=\overline{1,l},
\end{eqnarray}
we have
\begin{eqnarray}\label{Tinnawickpupsy8}
\sum\limits_{s=1}^l<p_0, C_s > b_{s i}^1 = < p_0,C_i>,\quad  <p_0, C_i>>0, \quad  i=\overline{1,l}.
\end{eqnarray}
Denoting $ d_i= < p_0,C_i>, \  i=\overline{1,l},$ we prove the necessity.

Sufficiency. If the conditions of Theorem \ref{Tinnawickpupsy2} are true, then from 
(\ref{Tinnawickpupsy4}) and the fact that vector  $D=\{d_i\}_{i=1}^l$ belongs to the cone created by vectors $ C_k^T=\{c_{k i}\}_{i=1}^l, \ k=\overline{1,n},$ 
we obtain the  existence of nonnegative vector $p_0$ such that
 $ d_i= < p_0,C_i>, \  i=\overline{1,l},$ and 
\begin{eqnarray}\label{Tinnawickpupsy9}
\sum\limits_{s=1}^l<p_0, C_s > b_{s i}^1 = < p_0,C_i>, \quad  i=\overline{1,l},
\end{eqnarray}
or
\begin{eqnarray}\label{Tinnawickpupsy10}
< p_0,b_i> = < p_0,C_i>, \quad  i=\overline{1,l}.
\end{eqnarray}
It is evident that 
\begin{eqnarray}\label{Tinnawickpupsy11}
\sum\limits_{i=1}^l c_{ki}\frac{<b_i,p_0>}{<C_i,p_0>}=\sum\limits_{i=1}^l b_{ki}, \quad k=\overline{1,n}.
\end{eqnarray}
Theorem \ref{Tinnawickpupsy2} is proved.
\end{proof}

Below we give one method for the construction of the set of supply vectors having 
the set of demand vector under which the ideal equilibrium exists.

\begin{te}\label{Tinnawickpupsy12}
Let   $C_i=\{c_{k i}\}_{k=1}^n \in R_+^n,\  i=\overline{1,l}, $  be a set of demand vectors and let strictly positive vector $D=\{d_i\}_{i=1}^l \in R_+^l$ belongs to to the nonnegative cone created by vectors $C_k^T=\{c_{k i}\}_{i=1}^l, \  k=\overline{1,n}.$  If the set of vectors $f_i^1=\{f_{s i}^1\}_{s=1}^l, \  i=\overline{1,l},$  satisfies  conditions 
\begin{eqnarray}\label{Tinnawickpupsy13}
 < d, f_i^1>=0, \quad \sum\limits_{s=1}^l f_{i s}^1=0, \quad i=\overline{1,l},
\end{eqnarray}
 and the set of vectors $ f_i= \sum\limits_{s=1}^lC_s f_{s i}^1, \ i=\overline{1,l}, $ is such that $f_i +C_i \geq 0, \  i=\overline{1,l},$ then for the set of supply vectors  $b_i =f_i+ C_i, \ i=\overline{1,l},$ and  demand vectors $C_i, \ i=\overline{1,l},$ there exists an ideal equilibrium.
\end{te}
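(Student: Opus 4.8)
The plan is to produce the ideal equilibrium price vector explicitly and then verify the two requirements of Definition \ref{Tinnawickpupsy5}. Since the strictly positive vector $D=\{d_i\}_{i=1}^l$ belongs to the nonnegative cone created by the vectors $C_k^T=\{c_{ki}\}_{i=1}^l,\ k=\overline{1,n}$, there is (by Definition \ref{mant0}) a nonnegative vector $p_0=\{p_k^0\}_{k=1}^n$ with $d_i=\sum\limits_{k=1}^n p_k^0 c_{ki}=\langle p_0,C_i\rangle$ for every $i=\overline{1,l}$. Because each $d_i>0$, this already yields $\langle p_0,C_i\rangle>0,\ i=\overline{1,l}$, the second condition in Definition \ref{Tinnawickpupsy5}.

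Next I would check that the pair $(b_i,C_i)$ satisfies the standing assumptions (\ref{Tinnawickpupsy1}). By construction $b_i-C_i=f_i$, and $b_i=f_i+C_i\geq 0$ is assumed, so $b_i\in R_+^n$; for the balance relation one computes $\sum\limits_{i=1}^l f_i=\sum\limits_{i=1}^l\sum\limits_{s=1}^l C_s f_{si}^1=\sum\limits_{s=1}^l C_s\big(\sum\limits_{i=1}^l f_{si}^1\big)$, and after renaming the dummy indices $\sum\limits_{i=1}^l f_{si}^1$ is exactly the quantity $\sum\limits_{s=1}^l f_{is}^1$ that vanishes for every index by (\ref{Tinnawickpupsy13}); hence $\sum\limits_{i=1}^l f_i=0$.

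The heart of the argument is the pricing identity. Using $f_i=\sum\limits_{s=1}^l C_s f_{si}^1$ and $d_s=\langle p_0,C_s\rangle$,
\begin{eqnarray*}
\langle p_0,b_i-C_i\rangle=\langle p_0,f_i\rangle=\sum\limits_{s=1}^l\langle p_0,C_s\rangle f_{si}^1=\sum\limits_{s=1}^l d_s f_{si}^1=\langle d,f_i^1\rangle=0,\quad i=\overline{1,l},
\end{eqnarray*}
the last equality being the first condition in (\ref{Tinnawickpupsy13}). Combined with $\langle p_0,C_i\rangle>0$, this is precisely the statement that $p_0$ is an ideal equilibrium price vector, so the conclusion follows from Definition \ref{Tinnawickpupsy5}. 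Equivalently, one can present the whole thing as an application of Theorem \ref{Tinnawickpupsy2}: writing $b_i=\sum\limits_{s=1}^l C_s(f_{si}^1+\delta_{si})$ exhibits $B=CB_1$ with $B_1=|f_{si}^1+\delta_{si}|_{s,i=1}^l$, and the equation $\sum\limits_{k=1}^l d_k b_{ki}^1=d_i$ of that theorem collapses to $\langle d,f_i^1\rangle=0$, which holds by hypothesis, while $D$ lies in the required cone by assumption.

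I do not expect a real obstacle here; the only points needing care are the index bookkeeping used to pass from the stated form of the constraint in (\ref{Tinnawickpupsy13}) to $\sum\limits_{i=1}^l f_i=0$, and checking that the constructed supply vectors $b_i$ genuinely meet the hypotheses of Definition \ref{Tinnawickpupsy5} (equivalently, of Theorem \ref{Tinnawickpupsy2}).
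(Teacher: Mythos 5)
Your proof is correct and follows essentially the same route as the paper: extract a nonnegative price vector $p_0$ with $d_i=\langle p_0,C_i\rangle$ from the cone membership of $D$, then use $\langle d,f_i^1\rangle=0$ to get $\langle p_0,b_i\rangle=\langle p_0,C_i\rangle=d_i>0$, which is exactly Definition \ref{Tinnawickpupsy5}. Your additional check that $\sum_{i=1}^l f_i=0$ and the remark that the argument is an instance of Theorem \ref{Tinnawickpupsy2} are consistent with, and slightly more explicit than, the paper's own proof.
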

\begin{proof}
Due to the conditions of Theorem the vector $D=\{d_i\}_{i=1}^l \in R_+^l$ satisfy the set of equations
\begin{eqnarray}\label{Tinnawickpupsy14}
\sum\limits_{s=1}^l f_{s i}^1 d_s +d_i=d_i, \quad i=\overline{1,l}.
\end{eqnarray}
 Since $d_i =\sum\limits_{k=1}^n c_{ki} p_k^0$ we have
$$<b_i, p_0>=\sum\limits_{k=1}^n b_{ki} p_k^0=$$
\begin{eqnarray}\label{Tinnawickpupsy15}
 \sum\limits_{s=1}^l f_{s i}^1 \sum\limits_{k=1}^n c_{ks} p_k^0+\sum\limits_{k=1}^n c_{ki} p_k^0=\sum\limits_{k=1}^n c_{ki} p_k^0=<C_i, p_0>, \quad i=\overline{1,l}.
\end{eqnarray}
\end{proof}
Theorem \ref{Tinnawickpupsy12} is proved.

\section{International trade of the G20 countries}

    Below we give an  algorithm of the study of the problem stated above which is based on Theorems proved above. For the date given we must to verify: \\
1) whether the the  price vector in the considered  economy  model  is equilibrium one, that is, whether the set of inequalities (\ref{VickTin5}) are valid;\\
2) if it is so then it needs to establish the degree of degeneracy of the equilibrium state. For this it is necessary to find the set $I$ from the Definition \ref{11VickTin9}. 

If the considered economy system is not at the equilibrium state, then:\\
1) to verify whether the vector $\sum\limits_{i=1}^l b_l^I$   belongs to   the polyhedral cone created by the demand vectors $\{C_i^I=\{c_{ki}\}_{k=1}^n \in R_+^n\ i=\overline {1,l}\},$ of the rank $I$;\\
2) find out whether there is a consistency of the supply structure with the structure of the choice of rank $|I|$ in correspondence with the  Definition \ref{11VickTin9}.\\
 If it is so then to find an equilibrium price vector in correspondence with  Theorems \ref{wickkteen26}, \ref{100wickkteen3}, \ref{wickkteen35} and then  it needs to establish the degree of degeneracy of the equilibrium state.\\
 The construction of the matrix  $B_1^I$ should be carried out in accordance with Lemma  \ref{10wickkteen1}.

In this chapter we investigate the trade of 19 countries of G20 between themselves.
It is convenient to number them in the following way: 
1. Argentina, 2. Australia, 3.  Brazil, 4. Canada, 5. China, 6. Germany, 7. France,
8. the United Kingdom, 9. Indonesia,
 10. India,  11. Italy, 12. Japan, 13. Republic of Korea, 14. Mexico, 15. Russia,  17. Saudi Arabia, 17. Turkey,  18. The United States, 19. South Africa.

These countries trade goods among themselves:
1. Animal,
2. Vegetable, 	
3. FoodProd, 	
4. Minerals, 
5. Fuels, 	
6. Chemicals, 
7. PlastiRub, 	
8. HidesSkin, 
9. Wood, 	
10. TextCloth, 	
11. Footwear, 	
12. StoneGlas, 	
13. Metals, 	
14. MachElec, 	
15. Transport, 	
16. Miscellan. 	

We study the dynamics of the exchange by these goods from 2016 to 2019. 
The first question that arises is whether the international trade between these countries was in the state of equilibrium. Is the equilibrium state ideal or it is not so.
Since the statistical data are given in the cost form,
we introduce the relative price vector $p_1=\{p_1^1,\ldots,p_n^1\}, $ that have to provide the equilibrium in the exchange model  in the form
\begin{eqnarray}\label{pupsyVickTin1}
\sum\limits_{k=1}^M c_{s k}\frac{D_k(p_1)}{\sum\limits_{s=1}^n p_s^1 c_{s k} }\leq \psi_s,  \quad s=\overline{1,n},
\end{eqnarray}
where we put
$ \psi_s=\sum\limits_{k=1}^M b_{sk},  \  D_k(p_1)=\sum\limits_{s=1}^n p_s^1 b_{sk},$ $M$ is the number of countries trading among themselves.

The equilibrium state is such that the complete clearing of   market takes place in the set of goods $I,$ where the set $I$ consists of three goods: 
2016:  $I=\{4,5,14\};$
2017:  $I=\{4,5,14\};$
2018:  $I=\{4,5,14\};$
2019:  $I=\{5,6,14\}.$
The vector $y=\{y_i\}_{i=1}^M$ satisfies the set of inequalities
$$\sum\limits_{i=1}^M c_{ki} y_i =\sum\limits_{i=1}^M b_{ki},  \quad k \in I, $$ 
\begin{eqnarray}\label{Tissa1}
\sum\limits_{i=1}^M c_{ki} y_i<\sum\limits_{i=1}^M b_{ki}, \quad k \in N \setminus I,
\end{eqnarray}
Due to Theorem \ref{pupapups0}, the vector $\bar \psi=\sum\limits_{i=1}^M C_i y_i=\{\bar \psi_k\}_{k=1}^n$ is a vector of real consumption. Equilibrium price vector clearing the market satisfies the set of equations

\begin{eqnarray}\label{Tissa2}
\sum\limits_{i=1}^M c_{ki} \frac{<b,p_0>}{<C_i,p_0>} =\sum\limits_{i=1}^M b_{ki},  \quad k \in I.
\end{eqnarray}
The same vector $p_0$ also solves the set of equations

\begin{eqnarray}\label{Tissa3}
\sum\limits_{i=1}^M c_{ki} \frac{<b_i^0,p_0>}{<C_i,p_0>} =\bar \psi_k,  \quad k=\overline{1,n},
\end{eqnarray}
where $b_i^0=\{b_{ki}^0\}_{k=1}^n, \  b_{ki}^0=b_{ki}, k \in I, \  b_{ki}^0=c_{ki} y_i, k \in N\setminus I.$
But the set of equations (\ref{Tissa3}) is degenerate the degeneracy multiplicity of which is $|N\setminus I|=|N|-|I|.$ The general solution of the set of equations (\ref{Tissa3}) is given by the vector $p_1=\{p_k^1\}_{k=1}^n,$ where $p_k^1=p_k^0, \ k \in I, \  p_k^1=p_k, \ k \in N\setminus I.$ The components $p_k$ are arbitrary ones. To remove the arbitrariness of components it needs to use the procedure after Theorem  \ref{pupapups0}. It is reasonable to put them equal to $ p_k=1, \ k \in N\setminus I,$ that corresponds to the existing current  prices in the real trade. 
Since the vector $p_0,$ solving the set of equations (\ref{Tissa2}), is determined up to the nonnegative factor $\tau>0,$ then it is naturally to choose it from the equality
$\tau \sum\limits_{k \in I}\psi_k p_k^0 =\sum\limits_{k \in I}\psi_k$, where $\sum\limits_{k \in I}p_k^0=1.$

In the case, as $I \subset N, $ the  price vector  $p_1$ solving  the set of equations (\ref{Tissa3}) and  taking into account the procedure for determining the ambiguous part of the vector components stated above, will be called the generalized relative equilibrium price vector.

 Then the generalized relative  equilibrium price vector $p_1$ takes the form

\begin{eqnarray}\label{Tissa4}
p_1=\{p_k^1\}_{k=1}^n, p_k^1=\frac{\sum\limits_{k \in I}\psi_k}{\sum\limits_{k \in I}\psi_k p_k^0 }p_k^0, \quad k \in I, \quad p_k^1=1, \quad k \in N\setminus I.
\end{eqnarray}
We introduce into consideration the parameter of recession level in the state of generalized relative equilibrium, describing by the vector $p_1$.

\begin{eqnarray}\label{Tissa5}
R=\frac{<\psi, p_1>- <\bar \psi, p_1>}{<\psi, p_1>}=\frac{\sum\limits_{k \in N\setminus I}(\psi_k -\bar \psi_k)}{\sum\limits_{k \in N\setminus I}\psi_k }.
\end{eqnarray}
Parameter $R$ is the part of goods in the cost form belonging to the set  $N\setminus I$ which did not find a consumer.

\vskip 3mm
\centerline{\bf{1. The trade balance of countries in the current prices of 2016:}}
$$t=\{ -12506555.819, -4247750.693999994, 13954511.70100001, -20348178.44799999,$$
$$ 234632443.4900001, 187353444.641, -42234244.03300001, -134066213.167,$$
$$ 1038379.406000004, -75864607.22499998, 26710247.28899999, -5650267.683999983,$$
$$ 17618008.95899998, 14093415.271, 445449.7089999858, -75740653.295,$$ $$-63314714.982, -848068112.3409998, -14699378.194\}$$

\centerline{\bf{2.The excess demand in the current prices of 2016 is given by the vector:}}
$$d=\{ -3452090.64224796,
 5053384.830972463,
 -13865582.33848479,
 37104759.43572,$$

 $$103686636.7895926,
 29267436.70237225,
 3408096.954677969,
 -110646.2769374326,$$

$$ -5134521.860014513,
 -41371405.52652189,
 -7211189.757797152,
 9070580.610088348,$$

$$ -11867420.01411062,
 127649684.1481535,
 -194106045.5154499,
 -38121677.54001272\}.$$

\centerline{\bf{3.The equilibrium price vector of 2016:}}
$$ p_0=\{0,0,0, 0.1044610760732605, 0.6941595181930178,$$ 
$$ 0,0,0,0,0,0,0,0, 0.2013794057337217, 0,0\}$$

\centerline{\bf{4.The excess demand under the equilibrium price vector $p_0$ is given by the vector}}

$$ d_1=\{-17338056.49249968,
 -20987586.54164383,
 -16522435.43125352,
 0,$$
$$  0,
  -34360377.43914306,
 -21086981.48146927,
 -7270801.552513503,$$
 $$-22669600.82384092,
 -68045593.27854195,
 -11011121.39014155,
 -12525583.08531252,$$
 $$-58669348.23877645,
 0,
 -236527533.3221221,
 -89362799.10312033 \}.$$

\centerline{\bf{5.The vector  $y$ of satisfactions of consumer needs in the equilibrium state:}}
$$y=\{ 0.2222900803965602,
 1.852564387617036,
 0.6951781545307275,$$

$$ 1.569904525752839,
 0.9042890505700601,
 0.854855714309132,$$

$$ 0.4903516652339595,
 0.7280072977426086,
 1.72562923938914,$$

 $$0.2315487084638758,
 0.7424012606143664,
 0.4959765599064544,$$

$$ 0.7371461292845501,
 0.8980955171434071,
 4.053300246539547,$$

$$ 0.007744822486138806,
 0.1589930101036914,
 0.4102487133061255,$$

$$ 0.568621570219266\}.$$

\centerline{\bf{6.The generalized relative equilibrium price vector:}}

$$ p_1=\{  1,1,1, 0.3739226981788888, 2.484772412522871, 1, 1, $$ $$1,1,1,1,1,1,
 0.7208458267920281, 1, 1\}. $$

\centerline{\bf{7.Parameter of recession level:}}
 $$R=0.1154218242887561.$$
\vskip 5mm

\vskip 3mm
\centerline{\bf{1.The trade balance of countries in the current prices of 2017:}}
$$ t=\{-20955418.066, 17548868.49200001, 29779980.45400002, -17191807.28199998,$$ 
$$197522723.5610001, 193794019.425, -57884058.41400002, -123443674.423,$$
$$ 3410608.513000003, -103719934.015, 28451792.65899998, -9783178.784999998,$$
 $$22622141.578, 18078439.77100002, 4190013.062999986, -73534434.508,$$ $$-70775689.76800001, -897520360.7180001, -14067155.156\}.$$

\centerline{\bf{2.The excess demand in the current prices of 2017 is given by the vector:}}

$$d=\{ -4669571.905598968,
 5938682.461838603,
 -12829058.6085822,
 47555994.54584152,$$

$$ 130380790.441524,
 35283593.25050235,
 660865.7041778564,
 -630431.6243987605,$$

$$ -4503508.195131928,
 -40146735.08512312,
 -6788807.834025413,
 13700316.57170004,$$

$$ -15574324.71828502,
 117612184.4527521,
 -201369888.3900037,
 -64620101.06718785\}.$$

\centerline{\bf{3.The equilibrium price vector of 2017:}}
$$p_0=\{0,0,0, 0.2080298367573019, 0.5805482232484332, $$ 
$$0,0,0,0,0,0,0,0,
0.2114219399942648, 0,0\}$$

\centerline{\bf{4.The excess demand under the equilibrium price vector $p_0$ is given by the vector:}}
$$d_1=\{ -18946830.99392197,
 -20704190.8298308,
 -13036491.80862296,
 0,
 0,$$

$$ -25711581.8240062,
 -22420132.4656997,
 -6497898.935088903,
 -21191683.87103112,$$

 $$-61743940.53373307,
 -8385017.411344729,
 -9173198.185643882,
 -61524718.76785821,$$
$$ 0,
 -199861118.9025204,
 -102038218.8353313\}.$$

\centerline{\bf{5.The vector  $y$ of satisfactions of consumer needs in the equilibrium state:}}
$$y=\{0.1971674346613074,
 2.31143911016814,
 0.9624626235227182,$$

$$ 1.664488142616343,
 0.8186320744288279,
 0.8825263191432133,$$

$$ 0.4625738594930592,
 0.7376086699759342,
 1.54298423732524,$$

$$ 0.2177926319599547,
 0.7719377951394272,
 0.4991483320000634,$$

$$ 0.7244317108572671,
 0.8871842719084082,
 3.470531875785658,$$

$$ 0.01497017829432475,
 0.1826267000617652,
 0.4562457482720676,$$
$$ 0.7725004624585035\}.$$

\centerline{\bf{6.The generalized relative equilibrium price vector:}}

$$ p_1=\{  1,1,1, 0.7287563457413587, 2.033738084382344, 1, 1, $$ $$1,1,1,1,1,1,
 0.7406393371327837, 1, 1\}. $$

\centerline{\bf{7.Parameter of recession level:}}
 $$R=0.0964662047894453.$$

\vskip 5mm

\vskip 3mm
\centerline{\bf{1.The trade balance of countries in the current prices of 2018:}}

$$t=\{ -16619022.138, 33702375.20799999, 25476930.67700002, -15534246.11,$$
 $$ 183021043.2599999, 194451460.7840001, -53541585.22699998, -115126675.217,$$
 $$ -10330596.57900002, -166023583.307, 13228352.22499998, -28850239.952,$$ $$ 4182609.991999995, 15788055.026, 57323541.11399999, -65674732.445,$$
$$ -62924030.961, -975304847.018, -15796221.222\}$$
says that the ideal equilibrium price vector does not exists, since it exists when the trade balance of all countries equals zero.

\centerline{\bf{2.The excess demand in the current prices of 2018 is given by the vector:}}

$$d= \{ -5108373.236691177,
 6640493.985885441,
 -12372715.70704013,
 45915848.29644319,$$

$$ 177134202.1886947,
 41909216.3684094,
 390269.0902559757,
 -1942678.578911416,$$

$$ -5130596.317575723,
 -38150309.19288424,
 -5560369.386887923,
 6332338.063659102,$$

$$ -13219120.68747276,
 95712059.52110744,
 -202539445.4070083,
 -90010818.99998337\}$$
that is, the international trade is not in equilibrium state.

\centerline{\bf{3.The equilibrium price vector of 2018:}}
$$p_0 =\{0,0, 0, 0.03468980236543773, 0.7011743250347967, $$
 $$0, 0, 0, 0, 0, 0, 0, 0, 0.2641358725997656, 0, 0\}.$$

\centerline{\bf{4.The excess demand under the equilibrium price vector $p_0$ is given by the vector:}}

$$d_1=\{ -20803932.02869892,
 -9372710.878992409,
 -9572446.548494428,
 0,$$
$$ 0,
 -6798007.772230029,
 -12819527.43796217,
 -6667385.993533991,$$

 $$-22942138.36813404,
 -53288233.10304466,
 -3022305.890178099,
 -20776706.39643601,$$

$$ -52938784.16174531,
 0,
 -186648289.6868939,
 -129391141.8017966\}.$$

\centerline{\bf{5. The vector  $y$ of satisfactions of consumer needs in the equilibrium state:}}
$$y=\{0.3085288036658665,
 1.625160256186186,
 0.9256721379500972,$$
$$ 1.606202190267202,
 0.8170690865270578,
 0.8701427304152256,$$
$$ 0.4769101654177353,
 0.6947864155990097,
 1.304124130905562,$$
$$ 0.1733728345636354,
 0.6841503531281805,
 0.4864315635378446,$$
$$ 0.7233233925484885,
 0.8066275354942358,
 5.79862373855207,$$
$$ 0.006619545109541444,
 0.16927622820105,
 0.4925811635185561,$$
$$ 0.4014539385350591\}.$$

\centerline{\bf{6.The generalized relative equilibrium price vector:}}

$$ p_1=\{  1,1,1, 0.09850243179294567, 1.990999412424391, 1, 1, $$ $$1,1,1,1,1,1,
 0.7406393371327837, 1, 1\}. $$

\centerline{\bf{1.Parameter of recession level: }}
$$R=0.08312669207435437.$$

\centerline{\bf{1.The trade balance of countries in the current prices of 2019:}}

$$t=\{ -1885469.061999998, 55585244.53699998, 20106245.682, -10279496.70399998, $$ $$    187185674.973, 181699392.9570001, -50332793.06499998, -139286309.793, $$ $$ -10784050.41600002, -96819395.65700001, 26402146.86099998, -32535350.34099999,$$ $$ -15612723.72100003, 24595848.95100003, 45860719.60200001, -74381901.15899999,$$ $$ -50654960.609, -928349383.317, -13046523.298\}$$
says that the ideal equilibrium price vector does not exists, since it exists when the trade balance of all countries equals zero.

\centerline{\bf{2.The excess demand in the current prices of 2019 is given by the vector:}}
$$d= \{-6141735.428897157, 6063797.88103047, -13255275.47280359,$$
$$ 52251710.58643463, 186548372.3736179, 45306709.17129183,1369662.094842792,$$
 $$-2720038.084881201, -6618186.654009625, -39113706.13981226,-6366938.219191968,$$
$$ -8778776.358987004,
 -9391342.416060746,
 93239906.79603934,$$
$$ -196064032.3814814,
 -96330127.74713147\},$$
that is, the international trade is not in equilibrium state.

\centerline{\bf{3.The equilibrium price vector of 2019:}}
$$p_0=\{0, 0, 0, 0, 0.6048113803874631, 0.2459536581698094, 0, 0, $$ 
$$0, 0, 0, 0, 0, 0.1492349614427275, 0, 0\}.$$

\centerline{\bf{4.The excess demand under the equilibrium price vector $p_0$ is given by the vector:}}
$$d_1=\{-24543524.88224329,
 -16112896.08740601,
 -5305083.896452188,$$ $$
 -23305120.01721276,
 0,
 0,
 -16612659.20681533,
 -6997221.787718497,$$ $$
 -22035234.51953426,
 -46699015.99563199,
 -1412543.337677635,$$
 $$
 -24560412.55325073,
 -44736959.8803457,
 0,$$
$$ -143445383.5869827,
 -115120438.4564396\}$$

\centerline{\bf{5.The vector  $y$ of satisfactions of consumer needs in the equilibrium state:}}

$$ y=\{ 0.3292038600385361, 1.400681305151232, 0.7792293707347417,$$  

$$ 1.738892732048001, 0.6448647099381927, 0.988660992070331,$$ 

$$ 0.6493940665028836, 0.8014038750911935, 1.209334709999378,$$

$$ 0.2866659176874876, 0.6741639849448267, 0.4683230951205403,$$ 

$$0.6340003424659207, 0.5098110072373863, 5.227627294922074,$$
 
$$ 0.444474001854552, 0.1806457674373209, 0.6128523042643617,$$ 

$$ 0.3712547326919547\}.$$

\centerline{\bf{6.The generalized relative equilibrium price vector:}}

$$ p_1=\{  1,1,1,1,  2.383266705764021, 0.9691834242627001 , 1, 1, $$ $$1,1,1,1,1,
0.5880622066247785, 1, 1\}. $$

\centerline{\bf{7.Parameter of recession level:}}
 $$R= 0.07844142458650168.$$ 

During 2016 - 2019, trade relations between 19 countries of the G20 were in  non equilibrium  states. The equilibrium state existed in each of the studied years. Each of these equilibrium states was far from ideal equilibrium. Each of the equilibrium states turned out to be highly degenerate. The degeneracy multiplicity was equal 12.
An important concept of a generalized equilibrium price vector is introduced, which is defined as a solution to a degenerate system of equations with real consumption.
Using the concept of a generalized equilibrium vector, a recession level parameter is introduced. This parameter is a characteristic of the stability of the international exchange currency. The greater its value, the weaker the international exchange currency. Between 2016 and 2019, the international currency became more stable from 11 percent in 2016 to 7.8 percent in 2019.

\section{Partial analysis of G20 trading}

  Below we present  an analysis in the form of diagrams of the parts of the demand and supply of the k-th country for the goods exchanged by the G20 countries. The same analysis is given in the form of diagrams of the supply and demand parts of G20 countries for the k-th type of goods.
The first four diagrams represent parts of the k-th country’s demand for the exchanged goods.
We present, in descending order, the parts of demand only for those countries in which these parts are significant. Below each diagram, the parts of the k-th country’s demand  are given in numerical form.  For example, the significant  parts of demand  for the exchanged goods in 2016 -2019 was, in descending order, in  countries: USA, China, Germany, Japan, Canada, UK, France, Mexico, Republic of Korea. 
   The following four diagrams represent the part of  supply by  the k-th country of G20 countries. 
    The last eight diagrams present the part of demand and supply of the k-th goods type by all G20 countries. The following countries have a significant part of the supply of goods, in descending order: China, USA, Germany, Japan, Mexico, Canada, Republic of Korea. The following goods are in great demand in G20 countries: MachElec, Transport, Chemicals, Miscellan, Fuels. The growth in demand for fuels was characteristic.  The following goods had the largest shares of supply: MachElec, Transport, Miscellan, Chemicals,  Fuels, Metals. There has been an increase in fuels supply.



\vskip 5mm
\begin{figure*}[hp]
\includegraphics[width=15cm]{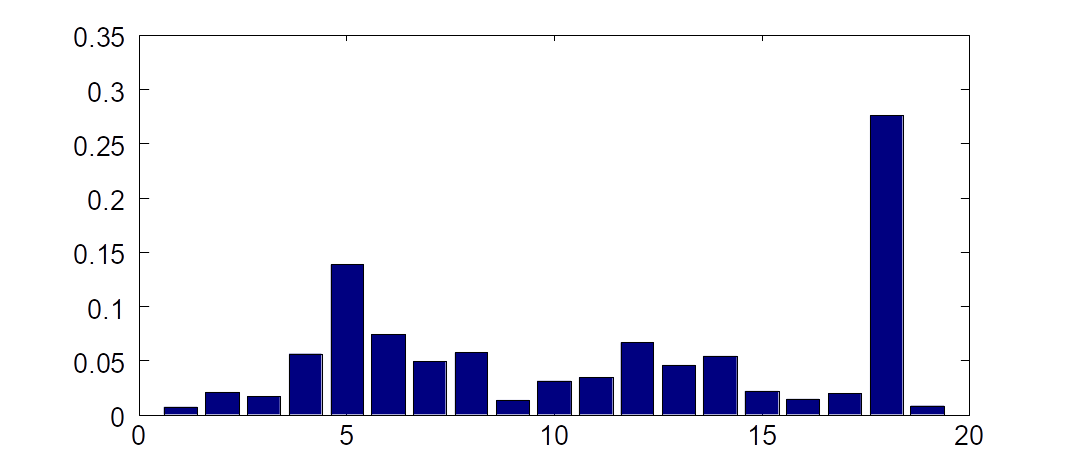}
\caption{2016.The part of demand of the k-th country for goods of G20 countries, 2016:
0.007,  0.020,  0.016,  0.056,  0.138,  0.074,  0.048,  0.056,  0.013,  0.030,  0.033,    0.066,  0.045,  0.053,  0.021,  0.014,  0.019,  0.275,  0.007.}
\end{figure*}
\vskip 5mm
\begin{figure*}[hp]
\includegraphics[width=15cm]{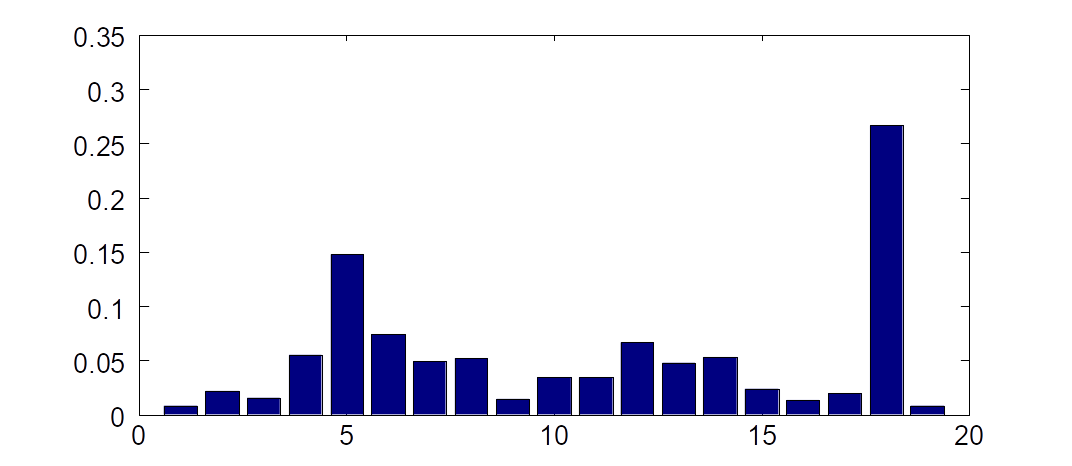}
\caption{2017. The part of demand of the k-th country for goods of G20 countries, 2017:
0.007,  0.021,  0.015,  0.054,  0.147,  0.073,  0.048,   0.051,   0.013,  0.034,  0.033,    0.066,  0.047,  0.052,  0.023,  0.013,  0.019,  0.266,   0.007.}
\end{figure*}
\vskip 5mm
\begin{figure*}[hp]
\includegraphics[width=15cm]{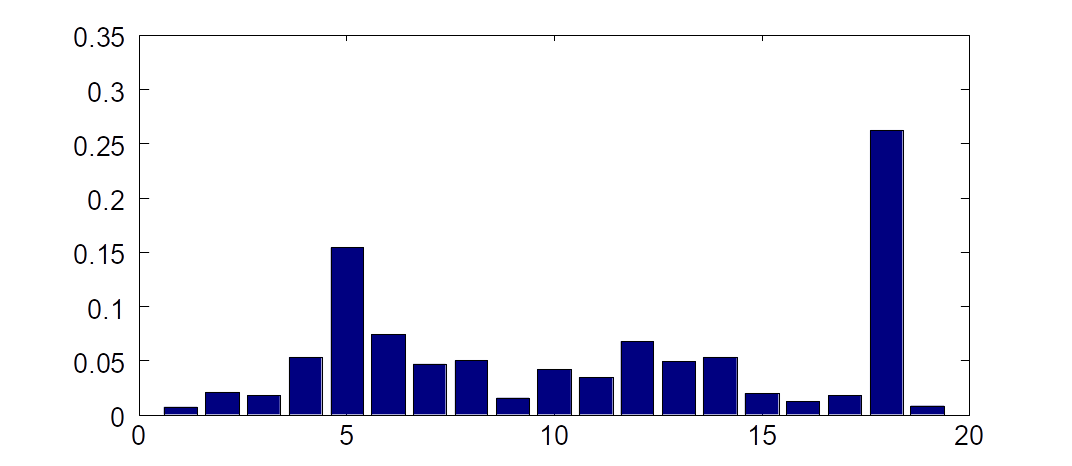}
\caption{2018. The part of demand of the k-th country for goods of G20 countries, 2018:
	0.001,   0.020,  0.016,  0.053,  0.153,  0.073,  0.046,  0.049,  0.015,  0.041,  0.034,    0.067,   0.048,  0.053,  0.019,  0.011,  0.017,  0.262,  0.007.}
\end{figure*}
\vskip 5mm
\begin{figure*}[hp]
\includegraphics[width=15cm]{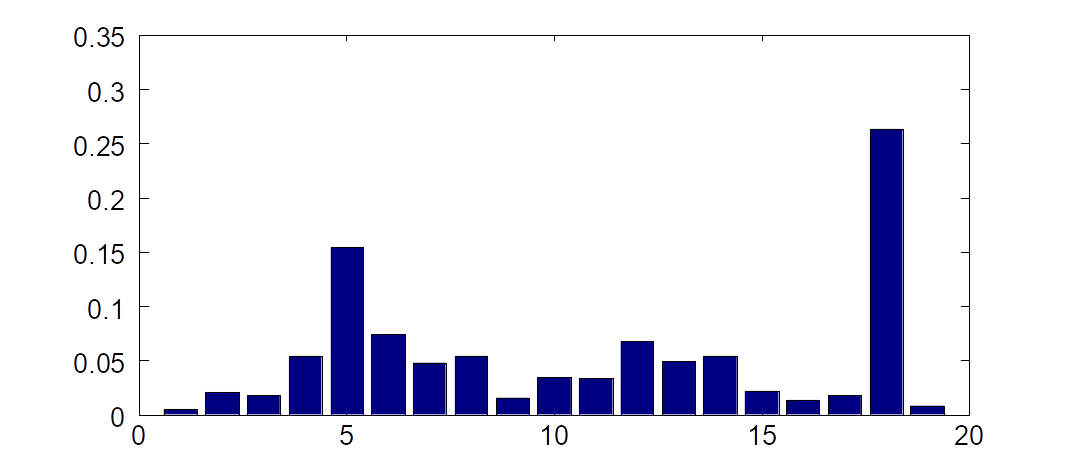}
\caption{2019. The part of demand of the k-th country for goods of G20 countries, 2019:
0.005,  0.020,  0.017,  0.054,  0.153,  0.073,  0.047,  0.053,  0.015,  0.033,  0.033, 0.067,  0.048,  0.053,  0.021,  0.013,  0.017,  0.263,  0. 007.}
\end{figure*}

\vskip 5mm
\begin{figure*}[hp]
	\includegraphics[width=15cm]{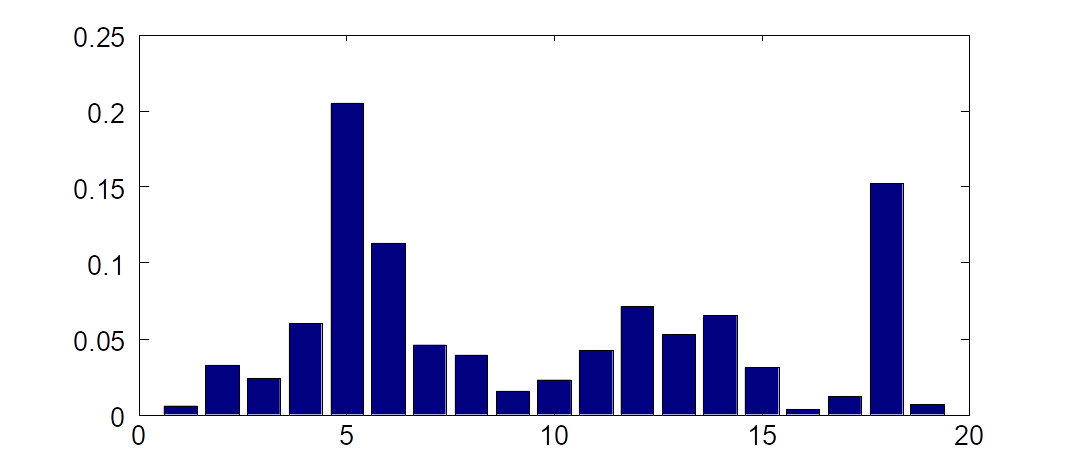}
	\caption{2016.The part of goods supply by the k-th country in general supply of G20 countries, 2016:
		0.005,  0.022,  0.021,  0.061,  0.20,   0.120,  0.048,  0.040,  0.015,  0.020,  0.043,  0.075,  0.055,  0.064,  0.024,  0.002,  0.010,  0.157,  0.006
	}
\end{figure*}
\vskip 5mm
\begin{figure*}[hp]
	\includegraphics[width=15cm]{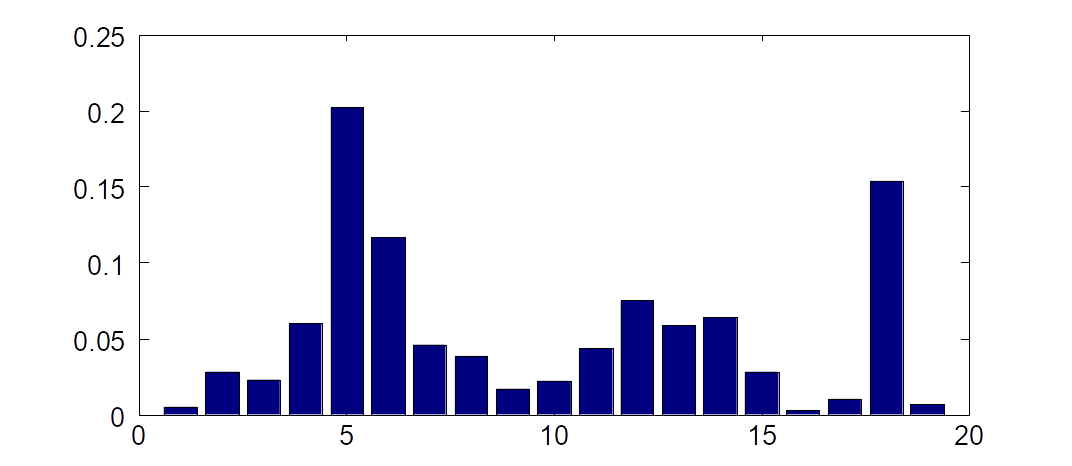}
	\caption{2017. The part of goods supply by the k-th country in general supply of G20 countries, 2017:
		0.005,  0.027,  0.020,  0.060,  0.200,  0.116,  0.045,  0.038,  0.016, 0.021, 0.043, 0.074,   0.058,  0.063,  0.027,  0.002,  0.010,  0.153,  0.006.
	}
\end{figure*}
\vskip 5mm
\begin{figure*}[hp]
	\includegraphics[width=15cm]{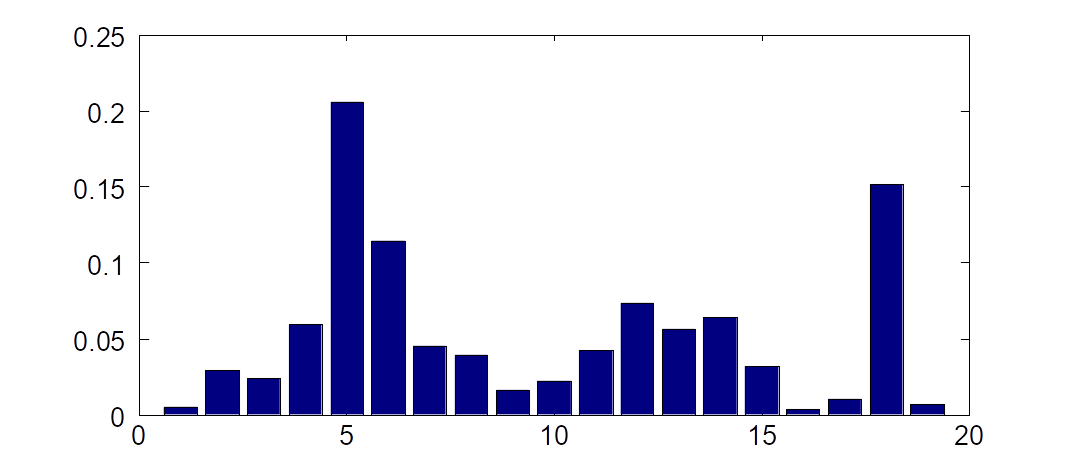}
	\caption{2018. The part of goods supply by the k-th country in general supply of G20 countries, 2018:
		0.004,  0.028,  0.023,  0.059,  0.205,  0.114,  0.045,  0.039,  0.016,  0.022,  0.041,  0.073,  0.056,  0.063,  0.031,  0.003,  0.010,  0.151,  0.006.}
\end{figure*}
\vskip 5mm
\begin{figure*}[hp]
	\includegraphics[width=15cm]{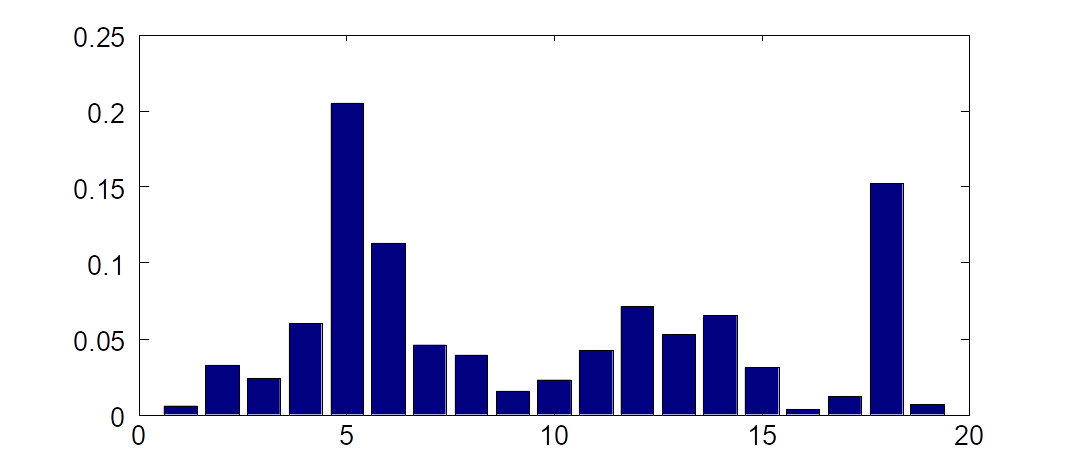}
	\caption{2019.    The part of goods supply by the k-th country in general supply of G20 countries, 2019:
		0.005 ,  0.032,  0.023,   0.060,  0.200,  0.110,  0.045,  0.039,  0.015,  0.023,  0.042,   0.071,  0.052,  0.065,   0.031,  0.003,   0.011,  0.150,  0.006.}
\end{figure*}

\vskip 5mm
\begin{figure*}[hp]
	\includegraphics[width=15cm]{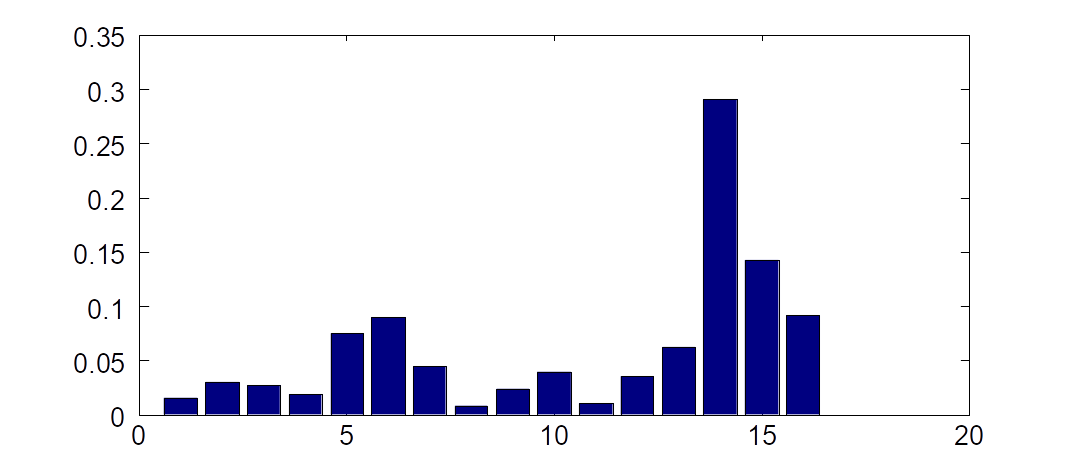}
	\caption{2016. The part of demand for the k-th type of goods by all G20 countries, 2016:                                                                          
		0.014,  0.029,  0.018,  0.074,   0.089   0.044,  0.007,  0.023,  0.039,  0.009,  0.035,
		0.062,  0.290,  0.142,  0.091.}
\end{figure*}
\vskip 5mm
\begin{figure*}[hp]
	\includegraphics[width=15cm]{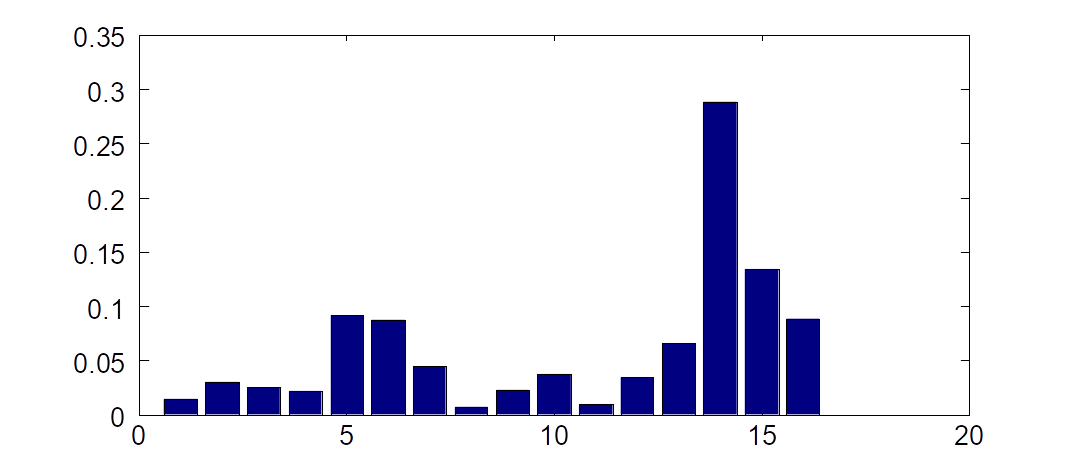}
	\caption{2017. The part of demand for the k-th type of goods by all G20 countries, 2017:                                                        
		0.014,   0.029,   0.025,  0.021,  0.091,  0.087,  0.044,   0.007,  0.022,  0.036,  0.009,   0.033,   0.065,  0.288,  0.134,  0.088.
	}
\end{figure*}
\vskip 5mm
\begin{figure*}[hp]
	\includegraphics[width=15cm]{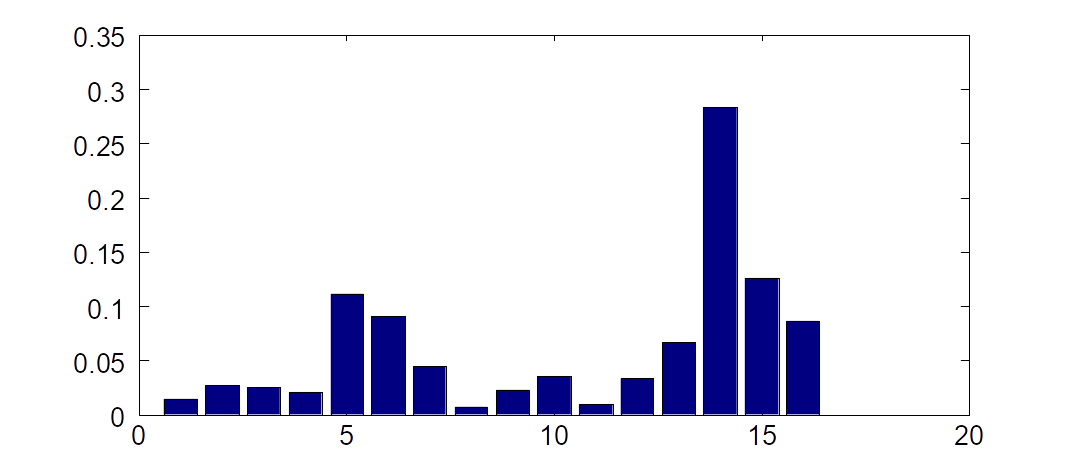}
	\caption{2018 . The part of demand for the k-th type of goods by all G20 countries, 2018: 0.014,  0.027,  0.024,  0.021,  0.111,  0.090,  0.044,  0.006,   0.022,  0.035,  0.008,	0.032,  0.066,  0.282,  0.125,  0.086.}
\end{figure*}
\vskip 5mm
\begin{figure*}[hp]
	\includegraphics[width=15cm]{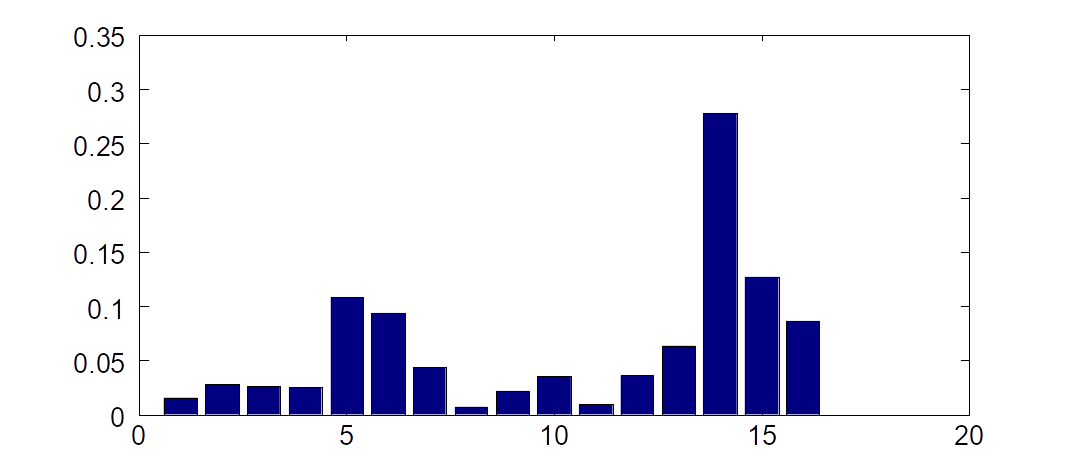}
	\caption{2019. The part of demand for the k-th type of goods by all G20 countries, 2019:                                                        
		0.015,  0.027,  0.025,  0.024,  0.108,  0.093,  0.043,  0.006,  0.021, 0.035,  0.009, 0.036,  0.062,  0.278,  0.126,  0.086.}
\end{figure*}

\vskip 5mm
\begin{figure*}[hp]
	\includegraphics[width=15cm]{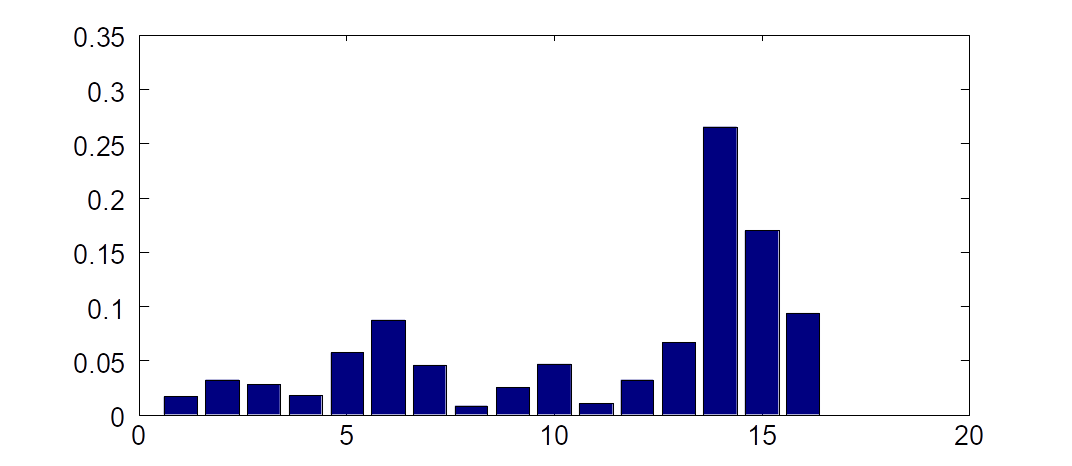}
	\caption{2016. The part of supply of the k-th goods type by all G20 countries, 2016:
		0.016,  0.031,  0.028,  0.017,  0.057   0.087,  0.046, 0.007  0.024,  0.046,   0.010, 0.031, 0.066, 0.265, 0.169, 0.093.}
\end{figure*}
\vskip 5mm
\begin{figure*}[hp]
	\includegraphics[width=15cm]{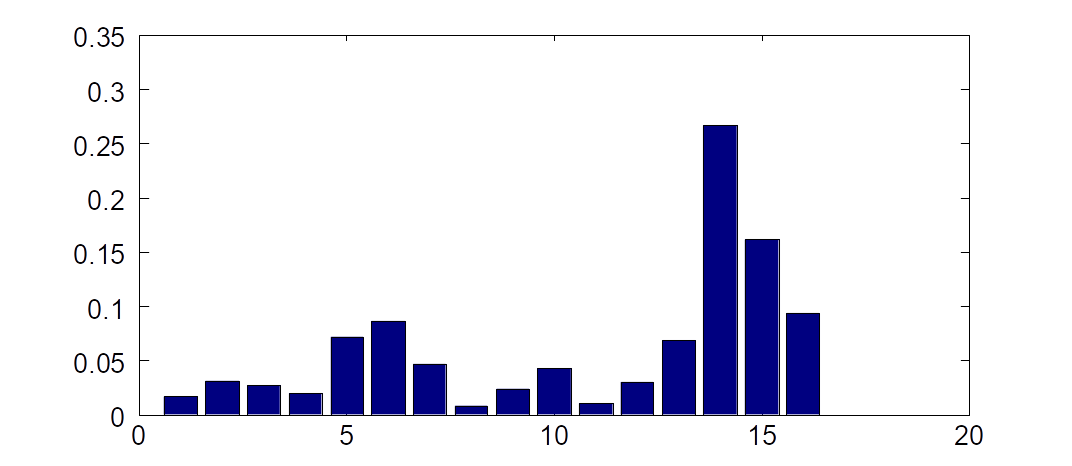}
	\caption{2017. The part of supply of the k-th goods type by all G20 countries, 2017:
		0.015,  0.030,   0.026   0.019,  0.071,   0.086,   0.046,   0.007,   0.023,  0.042,  0.009,  0.029,   0.068,   0.266,  0.161,  0.093.
	}
\end{figure*}
\vskip 5mm
\begin{figure*}[hp]
	\includegraphics[width=15cm]{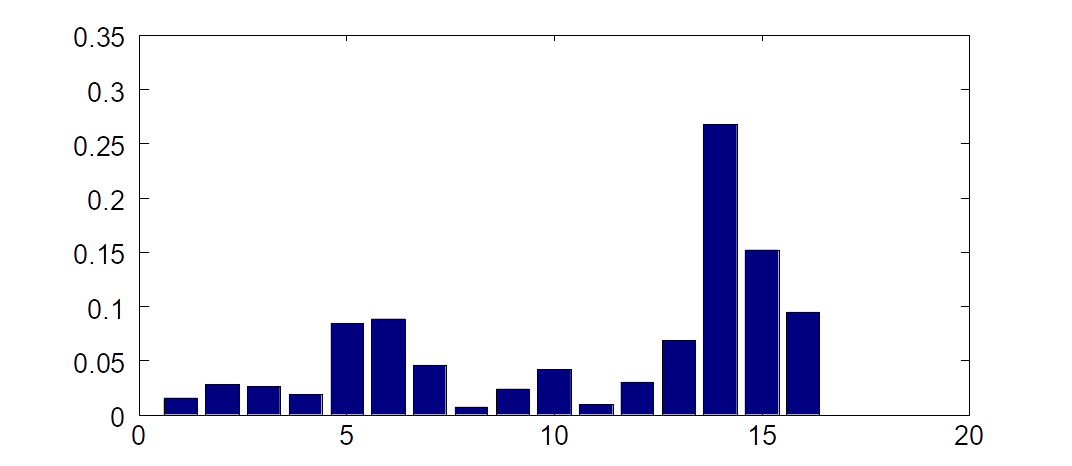}
	\caption{2018. The part of supply of the k-th goods type by all G20 countries, 2018:
		0.015,  0.027,  0.026,  0.018,  0.084,  0.088,  0.045,  0.007,  0.023,  0.040,  0.009,
		0.030,  0.068,  0.267,  0.151,  0.094.
	}
\end{figure*}
\vskip 5mm
\begin{figure*}[hp]
	\includegraphics[width=15cm]{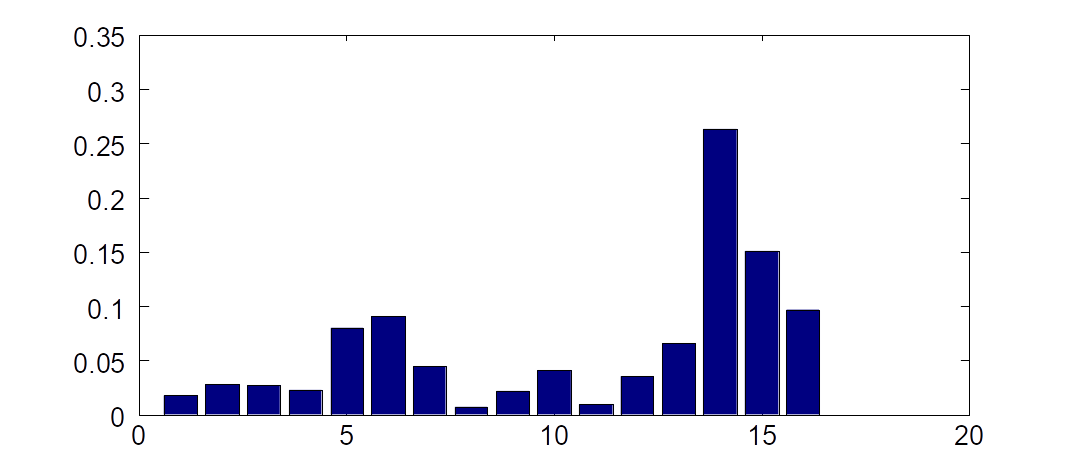}
	\caption{2019. The part of supply of the k-th goods type by all G20 countries, 2019:       
		0.017,  0.028,  0.026,  0.022,   0.080,   0.090,  0.044,  0.007,  0.022,  0.040,   0.009,  0.035,  0.064,   0.262,   0.150,  0.096.
	}
\end{figure*}

\section{Conclusion.}
A new method of investigations of the international trade is proposed. It is based on the theory of economy equilibrium. We formulate a model of international trade which contains  ideal equilibrium state of the exchange by goods in case if the balance of every country is zero. The deviation from this ideal state of equilibrium   characterize the real equilibrium states. In Theorems proved we give the algorithms of the construction  of equilibrium states. For this purpose, we use the  notion  of consistency of the structure of supply with the structure of demand which early introduced in more general case in \cite{Gonchar2}. For the construction of the  equilibrium price vectors a new method is elaborated which is based on the representation of the supply matrix through the demand matrix. The conditions of the appearing of the recession state  are formulated  under which in the economy system the equilibrium price vector has the high degree of degeneracy. In such a case,   the destabilization of the monetary system occurs. This is a state in which money is not able to activate the economy without corresponding changes in the structure of supply and demand. An important concept of a generalized equilibrium price vector is introduced, which is defined as a solution to a degenerate system of equations with real consumption.
Using the concept of a generalized equilibrium vector, a recession level parameter is introduced. This parameter is a characteristic of the stability of the international exchange currency. As it was shown, the international currency became more stable during 2016 - 2019.


\vskip 5mm

\end{document}